\setlist[enumerate,1]{label=(\roman*)}
\newcommand{\cV}{\mathcal{V}}
\newcommand{\cR}{\mathcal{R}}
\newcommand{\cW}{\mathcal{W}}
\newcommand{\Tvac}{1}
\newcommand{\IFR}{0.0027}
\newcommand{\CFR}{0.0133}
\newcommand{\uD}{-12.22}
\newcommand{\ImaxMy}{23.9}
\newcommand{\ImaxEq}{6.63}
\title{Optimal Epidemic Control in Equilibrium with Imperfect Testing and Enforcement\thanks{The views stated herein are those of the authors and are not necessarily those of the Federal Reserve Bank of Cleveland or the Board of Governors of the Federal Reserve System. Our work was motivated by the COVID-19 pandemic and the first draft \citep{PhelanToda2021} was posted at \url{https://arxiv.org/abs/2104.04455v1} on April 9, 2021. Our analysis is based on the best scientific evidence available at that time and is applicable to (possibly future) pandemics for which the SIR framework is appropriate. All replications files are available at \url{https://github.com/tphelanECON/Epidemic_Equilibrium}.}}
\author{Thomas Phelan\thanks{Federal Reserve Bank of Cleveland. Email: \href{mailto:tom.phelan@clev.frb.org}{tom.phelan@clev.frb.org}.} \and Alexis Akira Toda\thanks{Department of Economics, University of California San Diego. Email: \href{mailto:atoda@ucsd.edu}{atoda@ucsd.edu}.}}
\numberwithin{equation}{section}
\numberwithin{thm}{section}
\begin{document}
\maketitle

\begin{abstract}
We analyze equilibrium behavior and optimal policy within a Susceptible-Infected-Recovered epidemic model augmented with potentially undiagnosed agents who infer their health status and a social planner with imperfect enforcement of social distancing. We define and prove the existence of a perfect Bayesian Markov competitive equilibrium and contrast it with the efficient allocation subject to the same informational constraints. We identify two externalities, static (individual actions affect current risk of infection) and dynamic (individual actions affect future disease prevalence), and study how they are affected by limitations on testing and enforcement. We prove that a planner with imperfect enforcement will always wish to curtail activity, but that its incentives to do so vanish as testing becomes perfect. When a vaccine arrives far into the future, the planner with perfect enforcement may encourage activity before herd immunity. We find that lockdown policies have modest welfare gains, whereas quarantine policies are effective even with imperfect testing.


\medskip

\textbf{Keywords:} efficiency, externalities, lockdown, perfect Bayesian equilibrium, quarantine.

\medskip

\textbf{JEL codes:} C73, D50, D62, I12.
\end{abstract}

\onehalfspacing

\section{Introduction}

Soon after the evidence of the first community spread of the Coronavirus Disease 2019 (COVID-19) outside China was reported in Italy in late February 2020, European countries promptly introduced drastic mitigation measures (``lockdown'') such as closure of schools, restaurants, and other businesses. Many states and provinces in the United States and Canada as well as countries around the world followed suit by mid-March. While implementing policies to slow the spread of an infectious disease appears to be an obvious course of action for a prudent government, such mitigation policies have evidently not been without costs. Curtailing economic activities can (and did) cause unemployment, bankruptcy, and reduced access to education. Further, engaging in non-pharmaceutical interventions (better known as ``social distancing'') reduces new infections but also delays achieving herd immunity, and so may prolong the epidemic in the absence of a vaccine. It is therefore possible that lockdown policies can slow the progress of the epidemic but do little to alter its ultimate toll. 

Although it appears we are now in the later stages of the pandemic, with several vaccines developed and administered, the immense disruption to economic and social activity wrought by the virus and the possibility of future pandemics (either due to variants or new viruses) motivates a theoretical analysis of an epidemic model suitably augmented with realistic features to capture policy-relevant tradeoffs. In this paper we build upon the standard \cite{KermackMcKendrick1927} Susceptible-Infected-Recovered (SIR) model and add two important features that seem to be overlooked in the current literature. First, the agents in our model are forward-looking and endogenously respond to the epidemic, but must continuously update their beliefs about their own health status because they may lack symptoms or testing may not be available. Second, we study the extent to which prescriptions for policy depend upon the ability of the government to enforce their recommended actions over the long-term. The first feature is motivated by the fact that the infection fatality rate (IFR, fraction of deaths among \emph{all} cases) of COVID-19 inferred from seroprevalence studies is an order of magnitude smaller than the case fatality rate (CFR, fraction of deaths among \emph{confirmed} cases), which suggests substantial underreporting.\footnote{According to the meta-analysis of seroprevalence studies by \cite{Ioannidis2021}, the median IFR of COVID-19 is 0.27\%. On the other hand, as we document in Section \ref{sec:num}, the median CFR across more than 200 countries and regions is 1.33\%. Thus the diagnosis rate is of the order $0.27/1.33=20\%$.} The second is motivated by the fact that some governments were both slow to impose social distancing measures and may lack the ability to enforce such measures over the long-run, possibly due to opposition from constituents. To the best of our knowledge, no existing work considers these two features and studies the role they play in shaping optimal policy responses.

Our model works as follows. The society consists of four behavioral types of agents: unknown, infected, recovered, and dead. The unknown type consists of agents who lack immunity (are susceptible) against the infectious disease as well as those who are infected or recovered but unconfirmed due to imperfect testing or asymptomatic infections. Each period, alive agents take an action $a\in [0,1]$, which we interpret as their overall level of economic activity. We assume that in the absence of an epidemic, agents prefer taking the highest action $a=1$. During an epidemic, taking higher actions exposes oneself to the risk of infection. As a result, rational agents without confirmed immunity (the unknown type) optimally choose lower actions, \ie, they voluntarily practice social distancing. In contrast, known infected and recovered agents have no incentive to social distance and choose the highest action available to them. We define a perfect Bayesian Markov competitive equilibrium to be an allocation in which
\begin{enumerate*}
\item agents form beliefs about their health status and optimize given the state variables (population shares of each type) and
\item the evolution of beliefs and state variables are consistent with Bayes' rule and the collective behavior of agents.
\end{enumerate*}
We obtain two main theoretical results. First, we prove the existence of a (pure strategy) perfect Bayesian Markov competitive equilibrium (Markov equilibrium for short). This result is important because to achieve the equilibrium, individuals and policy makers only need to form expectations about the future given a few state variables and do not require implausibly sophisticated coordination among them.

The equilibrium allocation is in general inefficient due to the externality caused by the actions of infected agents. Existing analyses of the pandemic have either focused upon allocations in which a planner dictates all activity in perpetuity,\footnote{Examples include \cite{AlvarezArgenteLippi2021}, \cite{FarboodiJaroschShimer2021}, and \cite{KruseStrackControl}, among others.} or laissez-faire allocations in which no social distancing is imposed.\footnote{An example is \cite{ToxvaerdEquilibrium}.} Although obviously informative, neither of these cases addresses the problem of a planner who was previously slow to act, or who believes their capacity to enforce restrictions may dissipate over time. The recursive nature of our solution methods allows us to address this situation, as we compute equilibrium and efficient activity levels at every point in the state space. In order to highlight the role that imperfect enforcement plays in optimal policy responses, we distinguish between two types of externalities: static and dynamic. The static externality arises because the activity of infected agents affects the probability that susceptible agents become infected in the current period. The dynamic externality arises because the collective behavior of agents affects the evolution of the prevalence of the virus throughout the population. The interplay between the static and dynamic externalities is subtle as they can move in different directions: when an individual chooses higher activity, they increase the risk of infecting their fellow citizens today, but if they become infected they reduce the risk they pose to others in the future.

This brings us to the second theoretical result. We prove that the difference between the static efficient (the socially optimal choice taking future prevalence as given) and individually optimal actions is bounded above by a number proportional to the fraction of unknown infected agents, a quantity that vanishes as the probability of diagnosis converges to 1. This shows that a government that can only enforce short-term lockdown policies will always wish to curb activity, but its incentive to do so vanishes with the fraction of unconfirmed cases. This observation is particularly noteworthy because as we show in our numerical results, the activity levels recommended by a government with unlimited enforcement power are not always lower than those which occur in equilibrium. It is in this sense that we show how policy prescriptions can depend crucially on enforcement capabilities.


The presence of 
confirmed and unconfirmed infected agents implies that there are two tools for intervention available to the planner: activity recommendations for unknown agents (referred to as \emph{lockdown} policies) and recommendations for known infected agents (referred to as \emph{quarantine} policies). The combination of undiagnosed agents and the possibility of imperfect enforcement capabilities is precisely what makes the problem of the planner difficult. Indeed, if all infected agents could be immediately and costlessly quarantined, then the pandemic would likely not have had the immense economic and social impact that we have observed over the past years. In the majority of this paper, we therefore focus primarily on the recommendations to unknown agents, taking as given a fixed level of activity by the known infected agents.

To illustrate our theory as well as to study the optimal interventions, we calibrate our model to the COVID-19 epidemic and conduct a number of robustness exercises. Due to endogenous social distancing, in equilibrium the infection curve flattens relative to the case of myopic behavior (the standard SIR model). We find that the planner's optimal interventions are significantly affected by the diagnosis rate of infections, the vaccine arrival rate, and the planner's ability to enforce policies over the long term. The welfare gains from lockdown policies tend to be small, reducing the welfare loss from the pandemic relative to the equilibrium by less than 10\%. When the vaccine is expected to arrive within a year or two (as could be expected at the beginning of the COVID pandemic), the optimal reduction in activity begins earlier, is more gradual, and extends well beyond the date at which activity has returned to normal in the equilibrium allocation. In contrast, when the vaccine is expected to arrive decades into the future, the optimal policy is to \emph{encourage} (discourage) agents to take high actions \emph{before} (after) achieving herd immunity so that herd immunity is achieved quickly but unnecessary deaths are avoided. In contrast to lockdown policies, we find that quarantine policies are effective even with imperfect testing.


\subsection{Related literature}

\cite{KermackMcKendrick1927} present the basic mathematical framework for studying the evolution of infectious diseases. Models that build upon this framework assume that agents can be placed into categories based on their health status, that there is a fixed probability that an infected agent passes the infection to a susceptible agent when they meet, and that there is a fixed rate at which infected agents recover or die. In the simplest formulation, there are only susceptible and infected agents. These SI models are appropriate for infectious diseases that are incurable but not deadly.\footnote{An example is the Epstein-Barr virus (EBV) infection, which causes latent lifelong infections. Most people are infected during childhood and experience only mild symptoms such as fever. When infection occurs among adolescents and adults, EBV causes infectious mononucleosis (kissing disease) in 30--50\% of the cases.} SIS models are ones in which infected agents can recover, but when they do, they become susceptible to reinfection. SIR models are ones in which infected agents can recover (or die) and acquire lifelong immunity.

Although mathematical epidemic models provide insights into the spread of infectious diseases, they often ignore individual choice or public policy.\footnote{\cite{EksinPaarpornWeitz2019} document that mathematical epidemic models that ignore behavioral changes have large forecast errors.} There are several papers that modify the basic SIR model to allow for either government policies or individual decisions to influence the course of the epidemic. We describe the models as \emph{non-strategic} if the government has the ability to mandate changes through quarantines, lockdowns, or other non-pharmaceutical interventions. We call the models \emph{strategic} if individual agents independently decide levels of care that influence exposure to the disease.

\cite{Sethi1978} examines the problem of a planner who can choose to quarantine a fraction of infected agents in an SIS model. With linear payoffs and costs, he identifies a bang-bang solution in which the planner either quarantines all infected agents or none of them. \cite{GersovitzHammer2004} study the externalities in a model in which the disease transmission and recovery can be controlled at a cost through preventive and therapeutic efforts. More recently, \cite{KruseStrackControl} incorporate social distancing in a non-strategic model of infections and study a version of the SIR model in which a social planner can, at a cost, influence the transmission rate. With linear cost, they show that socially optimal policies are bang-bang: the social planner reduces the transmission rate as much as possible or does not reduce the transmission rate at all. It is typically not optimal to reduce the transmission rate when the fraction of infected agents is small. For some of their analysis, \cite{KruseStrackControl} assume that the planner can only impose social distancing for (no more than) a fixed length. With this restriction, lockdowns should start only after the number of infected agents reaches a threshold. 

Turning to strategic models, \cite{GeoffardPhilipson1996}, \cite{Kremer1996}, and \cite{Auld2003} study the extent to which strategic choices may undermine the effects of public policy regarding the spread of an infectious disease. These papers focus on HIV (human immunodeficiency virus) infections, where the heterogeneity of the population and the ability to select who to interact with are of first-order importance. \cite{Reluga2010}, \cite{ChenJiangRabidouxRobinson2011}, \cite{Fenichel2011}, \cite{Chen2012}, \cite{Fenichel2013}, and \cite{Toxvaerd2019} present strategic models of social distancing that predates the current COVID-19 epidemic. \cite{Fenichel2013}, which is an extended analysis of \cite{Fenichel2011}, assumes that agents can select the intensity of their interaction with others and assumes that flow utility is a single-peaked concave function of this intensity. He contrasts socially optimal choices of these contact levels with privately selected values and points out that if the social planner cannot distinguish between groups (and therefore any restriction on interactions must apply to susceptible, infected, and recovered agents alike), then social welfare may be higher in the laissez-faire equilibrium than in the constrained social planner's problem. This possibility arises because the planner's intervention constrains the participation of recovered agents (who generate positive externalities) in addition to the participation of infected agents.

\cite{ChenJiangRabidouxRobinson2011} and \cite{Chen2012} study a static game in which susceptible agents decide on their level of activities. This game may exhibit multiple equilibria, which are typically inefficient. Whether the susceptible agents are more or less active than the social optimum depends on the nature of the matching technology. More recently, \cite{Toxvaerd2019} points out that in the presence of strategic agents, public policy interventions that lower the infectiousness of a disease may lower social welfare because agents respond to the change by increasing their own exposure.

Following the onset of the COVID-19 epidemic, a large number of papers have been written by economists. Since this literature is too large to review, we only discuss the subset of papers that focus on the theory and applications. \cite{AbelPanageas2021} study an optimal control problem as well as the laissez-faire equilibrium in an SIR model with population growth and show that a steady state exists and the disease becomes endemic regardless of the cost of excess deaths. \cite{Budish2020} conceptualizes $\mathcal{R}\le 1$ (effective reproduction number less than 1) as a constraint, discusses the optimal policy in a static setting with heterogeneous economic activities, and illustrates that cheap policies such as mask-wearing go a long way in containing the virus spread with minimal welfare costs. \cite{ToxvaerdEquilibrium} studies an SIR model with endogenous social distancing, which is similar to ours. Assuming linear utility and costs, he shows that susceptible agents either engage in no social distancing at all or social distance to maintain a target peak prevalence, which endogenously flattens the infection curve.

Relative to this small literature of theoretical strategic epidemic models, our main contribution is that we explicitly model imperfect testing and enforcement and systematically study the welfare implications and optimal policies.

\section{Behavioral SIR model with imperfect testing}\label{sec:model}

We introduce rational and potentially undiagnosed agents into the basic \cite{KermackMcKendrick1927} Susceptible-Infected-Recovered (SIR) epidemic model.

\subsection{Model}

We consider an infectious disease that can be transmitted between agents in a society, which consists of a large (but finite) number of agents indexed by $n\in N=\set{1,\dots,N}$. Time is discrete, runs forever, and is denoted by $t=0,\Delta, 2\Delta,\dotsc$, where $\Delta>0$ is the length of time in one period.

\paragraph{Agent types and information}

At each point in time, agents are categorized into several types based on their health status and information. Agents who do not have immunity against the infectious disease are called \emph{susceptible} and denoted by $S$. Infected agents who are known (unknown) to be infected are denoted by $I_k$ ($I_u$). Agents with immunity who are known (unknown) to be immune are called \emph{removed} (or \emph{recovered}) and denoted by $R_k$ ($R_u$). Dead agents are denoted by $D$. The set of all types (health status) is denoted by
\begin{equation*}
H\coloneqq \set{S,I_k,I_u,R_k,R_u,D}.
\end{equation*}
For instance, agents could be $I_k$ if they test positive for antigen or they show specific symptoms (are \emph{symptomatic}), and $I_u$ if no tests are available and they show no specific symptoms (are \emph{asymptomatic}). Similarly, agents could be $R_k$ if they test positive for antibody, or they recovered from past symptomatic infection and immunity is lifelong, or they are vaccinated. Thus the set of information types is $\set{U,I_k,R_k,D}$, where $U\coloneqq S\cup I_u\cup R_u$ denotes the \emph{unknown} type. Importantly, we suppose that when agents get infected, with probability $\sigma\in (0,1]$ they receive a signal that reveals the true health status (known infected, $I_k$). Otherwise, they become unknown infected ($I_u$). Although we refer to the signal as a ``test'', the signal could be literally a laboratory test as well as other information such as the presence of specific symptoms, knowledge of close contacts with confirmed cases, etc. We refer to the probability $\sigma$ as the \emph{diagnosis rate}.\footnote{In Appendix \ref{sec:signal} we show that a model with a single signal is observationally equivalent to a model with multiple signals with potentially heterogeneous fatality rates.}

Let $N_h\subset N$ be the set of agents with health status $h\in H$, simply referred to as ``$h$ agents''. With a slight abuse of notation, we also use the same symbol $h$ to denote the fraction of $h$ agents in the population, so $h=\abs{N_h}/N$. The space of the aggregate state (type distribution) is denoted by
\begin{equation}
Z\coloneqq \set{z=(S,I_k,I_u,R_k,R_u,D)|S+I_k+I_u+R_k+R_u+D=1, Nz\in \Z_+^6}.\label{eq:Z}
\end{equation}
We suppose that the aggregate state is observable. ($I_k,R_k,D$ are observable, and $I_u$ and $R_u$ can be inferred from a small scale random antigen and antibody testing.) The economy starts at $t=0$ with some initial condition $z_0\in Z$.

\paragraph{Actions and preferences}
At each point in time, each alive agent takes action $a\in A=[\ubar{a},1]$, where the minimum action $\ubar{a}$ satisfies $0\le \ubar{a}<1$. We interpret $a$ as the economic activity level: loosely, $a=1$ corresponds to following a normal life and $a=\ubar{a}$ to completely being locked down (minimum activity level for subsistence). The utility function of $U$ (unknown) and $R_k$ (known recovered) agents is denoted by $u:A\to \R$. The utility function of an $I_k$ (known infected) agent is denoted by $u_I:A\to \R$. A $D$ (dead) agent receives the flow utility $u_D\in \R$.\footnote{More precisely, $u_D$ is the flow utility of being dead anticipated by alive agents.} Agents discount future payoffs with discount factor $\e^{-r\Delta}$, where $r>0$ is the discount rate.

\paragraph{Disease transmission}
Agents meet each other randomly over time and transmit the infectious disease. If agents $n,n'$ take actions $a_n,a_{n'}$ respectively (we interpret a dead agent's action as $a=0$), then agent $n$ bumps into agent $n'$ during a period with probability $\lambda \Delta a_na_{n'}/N$, where $\lambda\in (0,1/\Delta)$ is a parameter (meeting rate) that governs the level of social interaction with full activity ($a_n=a_{n'}=1$). We take $\lambda$ as given, which depends on how the society is organized (\eg, population density, whether workers commute by cars or public transportation, whether consumers shop online or at physical stores, whether classes are taught remotely or in-person, etc.).


If agent $n$ is susceptible ($n\in N_S$) and agent $n'$ is infected ($n'\in N_{I_k}\cup N_{I_u}$), the infectious disease is transmitted from $n'$ to $n$ with probability $\tau\in (0,1]$ conditional on $n$ bumping into $n'$ at time $t$,\footnote{Thus the act of ``bumping into'' is asymmetric between the members in a meeting. Figuratively, when agent $n$ bumps into $n'$: 
\begin{enumerate*}
\item agent $n$ meets agent $n'$,
\item agent $n'$ sneezes into agent $n$'s face (and transmits the disease to $n$ with probability $\tau$ if $n'$ is infected), and
\item they part with each other.
\end{enumerate*}} which we take as given.\footnote{The transmission probability $\tau$ depends on how contagious the disease is as well as how the society is organized (\eg, how often people wash their hands, whether they wear masks, whether they greet others by bowing, shaking hands, hugging, or kissing, etc.). In addition to the activity choice, it is straightforward to extend the model to allow for the choice of preventive efforts such as hand-washing and mask-wearing. 
} We assume that $\Delta$ is small enough such that in any period an agent bumps into at most one other agent. Therefore if $a_h\coloneqq (\sum_{n'\in N_h} a_{n'})/\abs{N_h}$ denotes the average action of $h$ agents, a particular susceptible agent $n$ who takes action $a$ gets infected with probability
\begin{align}
\sum_{n'\in N_{I_k}\cup N_{I_u}}\tau\frac{\lambda \Delta aa_{n'}}{N}&=\tau \lambda \Delta a\left(\frac{\sum_{n'\in N_{I_k}} a_{n'}}{\abs{N_{I_k}}}\frac{\abs{N_{I_k}}}{N}+\frac{\sum_{n'\in N_{I_u}} a_{n'}}{\abs{N_{I_u}}}\frac{\abs{N_{I_u}}}{N}\right) \notag \\
&=\beta \Delta (a_{I_k}I_k+a_{I_u}I_u)a,\label{eq:infect_prob}
\end{align}
where $\beta\coloneqq \tau\lambda$ is the baseline transmission rate and we have used the notation $h=\abs{N_h}/N$ for $h=I_k,I_u$. The timing convention is that if an infection occurs at time $t$, the (previously susceptible, now infected) agent changes status to $I_k$ or $I_u$ at time $t+\Delta$. Since only susceptible agents are prone to infection, the expected fraction of the population that gets newly infected between time $t$ and $t+\Delta$ is
\begin{equation}
\beta \Delta a_SS(a_{I_k}I_k+a_{I_u}I_u),\label{eq:incidence}
\end{equation}
which is called \emph{incidence} in epidemiology. The fraction of infected agents
\begin{equation}
\abs{N_{I_k}\cup N_{I_u}}/N=I_k+I_u \label{eq:prevalence}
\end{equation}
is called \emph{prevalence}.

\paragraph{Recovery and death}
An $I$ agent is removed (becomes no longer infected by either recovering or dying) with probability $\gamma \Delta$ each period, where $\gamma\in (0,1/\Delta]$ is the removal rate. Conditional on being removed, an $I_k$ (known infected) agent dies with probability $\delta\in (0,1]$ and an $I_u$ (unknown infected) agent always recovers. The rationale for this assumption is that infected agents with more severe symptoms are more likely to get tested as well as to die. Letting $\delta_0$ be the fatality rate among all (known and unknown) infected agents, we have 
\begin{equation}
\delta=\delta_0/\sigma.\label{eq:deltasigma}
\end{equation}
Finally, $R$ (recovered) or $D$ (dead) agents remain in their corresponding states forever, which embodies the assumption that recovered agents acquire lifelong immunity. (We briefly discuss in the conclusion how this assumption can be relaxed.) 
Again the timing convention is that if an $I$ agent is removed at time $t$, the agent changes status to $R_k$, $R_u$, or $D$ at time $t+\Delta$.

In epidemiology, there are several notions of fatality rate, and it is important to understand the distinction. The fatality rate among all (known and unknown) infected cases (which corresponds to $\delta_0$) is called the \emph{infection fatality rate (IFR)}. The fatality rate among known (confirmed) infected cases (which corresponds to $\delta$ if the signal is a laboratory test) is called the \emph{case fatality rate (CFR)}. The fatality rate among the entire population is called \emph{mortality}. Clearly, by definition we have $\text{Mortality}\le \text{IFR}\le \text{CFR}$.

\paragraph{Vaccine arrival}

We assume that agents expect a vaccine to arrive at a Poisson rate $\nu\ge 0$, independent of everything else. Thus in our discrete-time setting, the probability that a vaccine arrives between time $t$ and $t+\Delta$ is $1-\e^{-\nu\Delta}$. We assume that the vaccine is perfectly effective, perfectly safe, and has no cost. Thus once a vaccine arrives, all non-infected agents will be vaccinated and become immune ($R_k$). The vaccine is not a cure and hence has no effect on infected agents.

\subsection{Assumptions}

Throughout the rest of the paper, we maintain the following assumptions.

\begin{asmp}[Utility function]\label{asmp:u}
The utility functions satisfy the following conditions:
\begin{enumerate*}
\item $u:A=[\ubar{a},1]\to \R$ is twice continuously differentiable and satisfies $u(1)=0$, $u'>0$, and $u''<0$,
\item $u_I:A=[\ubar{a},1]\to \R$ is continuous, strictly concave, and achieves a unique maximum at $a_I\in A$, and
\item $u_D<u_I(a_I)\le u(1)$.
\end{enumerate*}
\end{asmp}

The assumption $u(1)=0$ simplifies the algebra and is without loss of generality because we can shift the utility functions by a constant without affecting behavior. The assumptions $u_D<u_I(a_I)\le u(1)$ simply imply that being asymptomatic is preferable to being symptomatic, which is in turn preferable to being dead. The condition that $u_I$ is single-peaked at $a_I\in A$ implies that a potentially intermediate value of activity level (rest) is myopically optimal for symptomatic agents. This assumption can also be interpreted as altruism, sense of duty, or an enforcement of a quarantine policy.

\begin{asmp}[Perfect competition]\label{asmp:comp}
Agents view the evolution of the aggregate state $z$ as exogenous and ignore the impact of their behavior on the aggregate state.
\end{asmp}

Assumption \ref{asmp:comp} is made for analytical tractability and is reasonable when the number of agents $N$ is large.

\begin{asmp}[Consistency]\label{asmp:consistent}

On equilibrium paths, agents update their beliefs using Bayes' rule. Off equilibrium paths, $U$ (unknown) agents believe they are susceptible with probability
\begin{equation}
\mu(z)\coloneqq 
\begin{cases*}
\frac{S}{S+I_u+R_u} & if $S>0$, \\
0 & otherwise.
\end{cases*}
\label{eq:muz}
\end{equation}
\end{asmp}

The assumption that agents apply Bayes' rule may not be realistic because a pandemic such as COVID-19 is rare and agents may have difficulty forming beliefs when faced with an unprecedented situation. However, we focus on Bayes' rule because it provides a useful benchmark. The assumption that we specify the off-equilibrium beliefs as in \eqref{eq:muz} anticipates our choice of perfect Bayesian equilibrium (PBE) as our solution concept. We discuss this choice (and possible alternatives) in more detail in Section \ref{subsec:eq} and Appendix \ref{CE2new}.

\section{Equilibrium analysis}\label{sec:eq}

This section defines and establishes the existence of equilibrium and characterizes individual behavior. We first characterize the individual best response in Section \ref{subsec:individual}, and then impose equilibrium conditions in Section \ref{subsec:eq}.

\subsection{Individual best response}\label{subsec:individual}

We first analyze the individual optimization problems recursively. Let $z\in Z$ be the aggregate state and $V_h$ be the value function of type $h\in \set{U,I_k,R_k,D}$.

\paragraph{Dead agents}

Because $D$ agents remain dead and their flow utility is $u_D$, their value function is constant and satisfies
\begin{equation*}
V_D=(1-\e^{-r\Delta})u_D+\e^{-r\Delta}V_D\iff V_D=u_D.
\end{equation*}

\paragraph{Known recovered agents}

Because $R_k$ agents have lifelong immunity, their value function is constant and the associated Bellman equation is
\begin{equation*}
V_{R_k}=\max_{a\in A}\set{(1-\e^{-r\Delta})u(a)+\e^{-r\Delta}V_{R_k}}.
\end{equation*}
The optimal action is clearly $a_{R_k}=1$ (full activity) and the value function is $V_{R_k}=u(1)=0$ by Assumption \ref{asmp:u}.

\paragraph{Known infected agents}

By assumption, $I_k$ agents are removed with probability $\gamma\Delta$, and conditional on removal, die with probability $\delta=\delta_0/\sigma$. Since the health status transitions are independent of the aggregate state and action, their value function is constant and the associated Bellman equation is
\begin{equation}
V_{I_k} = \max_{a\in A} \set{(1-\e^{-r\Delta})u_I(a) +\e^{-r\Delta}(\underbrace{(1-\gamma\Delta)V_{I_k}}_\text{stay infected}+\underbrace{\gamma\Delta[(1-\delta) V_{R_k}+\delta V_D])}_\text{removal}}.\label{eq:Bellman.i}
\end{equation}
By Assumption \ref{asmp:u} the function $u_I$ is single-peaked at $a_I$, and so the optimal action is $a_{I_k}=a_I$. Since $V_{R_k}=u(1)=0$ and $V_D=u_D$, \eqref{eq:Bellman.i} simplifies to
\begin{equation}
V_{I_k}=\frac{(1-\e^{-r\Delta})u_I+\e^{-r\Delta}\gamma\Delta\delta u_D}{1-\e^{-r\Delta}(1-\gamma\Delta)},\label{eq:VIk}
\end{equation}
where $u_I\coloneqq u_I(a_I)$. Note by Assumption \ref{asmp:u} that we have $u_I=u_I(a_I)\le u(1)=0$, so $V_{I_k}<0=V_{R_k}$.

\paragraph{Unknown agents}

Because $U$ agents need to infer their health status, the analysis of their decision problem is more complicated. In principle, the state variables of the individual optimization problem are the aggregate state $z$ and the belief $\mu$ of being susceptible. However, because making the belief part of the state variable makes the model analytically intractable, for our benchmark analysis we suppose that $U$ agents' belief is simply inferred from the aggregate state and given by $\mu(z)$ in \eqref{eq:muz}. (As we shall see in Theorem \ref{thm:exist} below, this assumption is consistent with the Bayes rule on equilibrium path.) Under this assumption, we now consider an individual $U$ agent's best response when all other $U$ agents adhere to a policy function $a_U:Z\to A$.

The policy $a_U(z)$ together with the mechanisms of disease transmission, symptom development, recovery, and death generate transition probabilities $\set{q(z,z')}_{(z,z')\in Z^2}$ for the aggregate state conditional on no vaccine arrival. (Note that $Z$ in \eqref{eq:Z} is a finite set.) By Assumption \ref{asmp:comp}, agents view this law of motion as exogenous. Let $V_U(z)$ be the value function of a $U$ agent who chooses the action optimally in this environment. By \eqref{eq:infect_prob} and the analysis of $I_k$ agents, an agent taking full action ($a=1$) gets infected with probability
\begin{equation}
p(z) \coloneqq \beta\Delta (a_II_k+a_U(z)I_u)
\label{eq:pz}
\end{equation}
conditional on being susceptible. Noting that infection is revealed with probability $\sigma$ and the vaccine arrives with probability $1-\e^{-\nu\Delta}$, the Bellman equation for unknown agents is
\begin{multline}
V_U(z)=\max_{a\in A}\Biggl\{(1-\e^{-r\Delta})u(a)+\underbrace{\e^{-r\Delta}\sigma\mu pa V_{I_k}}_\text{known infection}\\
+\underbrace{\e^{-r\Delta}(1-\sigma\mu pa)}_\text{stay unknown}\E_z(\underbrace{\e^{-\nu\Delta}V_U(z')}_\text{no vaccine}+\underbrace{(1-\e^{-\nu\Delta})V_{R_k}}_\text{vaccine})\Biggr\},\label{eq:Bellman.u}
\end{multline}
where $\E_z$ denotes the expectation with respect to $\set{q(z,z')}$, $\mu=\mu(z)$ is given by \eqref{eq:muz}, $p=p(z)$ is given by \eqref{eq:pz}, and $V_{I_k}$ is given by \eqref{eq:VIk}. Noting that $V_{R_k}=u(1)=0$, \eqref{eq:Bellman.u} simplifies to
\begin{equation}
V_U(z) = \max_{a\in A}\set{(1-\e^{-r\Delta})u(a) + \e^{-r\Delta}\E_z((1-\sigma\mu pa)\e^{-\nu\Delta}V_U(z')+\sigma\mu pa V_{I_k})}.
\label{eq:Bellman.u2}
\end{equation}

The following proposition establishes the existence and uniqueness of $V_U$ and provides some bounds on value functions.

\begin{prop}[Value functions]\label{prop:V}
Fix a policy function $a_U:Z\to A$ of unknown agents. Then there exists a unique value function $V_U:Z\to \R$ satisfying the Bellman equation \eqref{eq:Bellman.u2}. Furthermore, the value functions satisfy the following inequalities:
\begin{align}
V_D=u_D&<\frac{(\e^{r\Delta}-1)u_I+\gamma\Delta\delta u_D}{\e^{r\Delta}-1+\gamma\Delta}=V_{I_k}\notag \\
&<\frac{\e^{\nu\Delta}\sigma\beta\Delta}{\e^{(r+\nu)\Delta}-1+\sigma\beta\Delta}V_{I_k}\le V_U(z)\le V_{R_k}=0.\label{eq:Vineq}
\end{align}
\end{prop}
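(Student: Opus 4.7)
The plan is to treat existence and uniqueness via a contraction-mapping argument, then establish the chain of inequalities one piece at a time, with the nontrivial step being the lower bound on $V_U$, which I obtain by evaluating the Bellman equation at a minimizer of $V_U$ and at the feasible action $a=1$.

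For existence and uniqueness, since $Z$ is finite the candidate value functions live in $\R^{\abs{Z}}$ with the sup norm, and the right-hand side of \eqref{eq:Bellman.u2} defines an operator $T$. The continuation weight $\e^{-r\Delta}(1-\sigma\mu pa)\e^{-\nu\Delta}$ is uniformly bounded by $\e^{-r\Delta}<1$, so $T$ is an $\e^{-r\Delta}$-contraction and Banach's fixed-point theorem delivers the unique $V_U$.

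For the strict inequalities involving $V_{I_k}$, Assumption \ref{asmp:u} gives $u_I\le u(1)=0$ and $u_D<u_I$, hence $u_D<0$. Direct algebra on \eqref{eq:VIk} reduces $V_{I_k}>u_D$ to $(\e^{r\Delta}-1)(u_I-u_D)>\gamma\Delta(1-\delta)u_D$, which holds because the left side is strictly positive and the right side is nonpositive. Setting $C\coloneqq \frac{\e^{\nu\Delta}\sigma\beta\Delta}{\e^{(r+\nu)\Delta}-1+\sigma\beta\Delta}$, the inequality $V_{I_k}<CV_{I_k}$ follows (because $V_{I_k}<0$) from $C<1$, which rearranges to $\sigma\beta\Delta(\e^{\nu\Delta}-1)<\e^{(r+\nu)\Delta}-1$; this in turn holds since $\sigma\le 1$ and $\beta<1/\Delta$ force $\sigma\beta\Delta<1$, and $r>0$.

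For the lower bound $CV_{I_k}\le V_U(z)$, let $z^*$ be a minimizer of $V_U$ over $Z$ and set $\mu^*\coloneqq \mu(z^*)$, $p^*\coloneqq p(z^*)$. Plugging the feasible action $a=1$ into \eqref{eq:Bellman.u2} at $z=z^*$, using $u(1)=0$, $V_U(z')\ge V_U(z^*)$, and nonnegativity of the continuation weight $(1-\sigma\mu^*p^*)\e^{-\nu\Delta}$, yields
\begin{equation*}
V_U(z^*)\ge \e^{-r\Delta}\bigl[(1-\sigma\mu^*p^*)\e^{-\nu\Delta}V_U(z^*)+\sigma\mu^*p^*V_{I_k}\bigr].
\end{equation*}
Solving this linear inequality (the coefficient $1-\e^{-(r+\nu)\Delta}(1-\sigma\mu^*p^*)$ is positive) and multiplying by $\e^{(r+\nu)\Delta}$ gives $V_U(z^*)\ge f(\sigma\mu^*p^*)V_{I_k}$, where $f(x)\coloneqq \frac{\e^{\nu\Delta}x}{\e^{(r+\nu)\Delta}-1+x}$ is strictly increasing in $x$. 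Since $p\le \beta\Delta$ and $\mu\le 1$ imply $\sigma\mu^*p^*\le \sigma\beta\Delta$, and since $V_{I_k}<0$, monotonicity of $f$ yields $V_U(z^*)\ge f(\sigma\beta\Delta)V_{I_k}=CV_{I_k}$. The upper bound $V_U\le 0=V_{R_k}$ follows by iterating $T$ from the zero function: because $u(a)\le 0$, $V_{I_k}<0$, and the continuation weight is nonnegative, $T$ preserves the cone $\set{V\le 0}$, so its unique fixed point sits in this cone. The main obstacle is the lower-bound step, where one must correctly linearize the Bellman at the minimizer and then exploit monotonicity of $f$ in the composite quantity $\sigma\mu p$ to replace it by its worst-case value $\sigma\beta\Delta$.
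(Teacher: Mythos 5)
Your proof is correct and follows the same overall skeleton as the paper (contraction for existence/uniqueness, the deviation $a=1$ for the lower bound, invariance of $\set{V\le 0}$ for the upper bound), but the mechanism for the key lower bound is genuinely different. The paper first fixes the candidate constant $c_1=\frac{\sigma\beta\Delta}{\e^{(r+\nu)\Delta}-1+\sigma\beta\Delta}$ and shows, via a fixed-point-in-a-closed-invariant-set lemma (Lemma \ref{lem:fp_subset}), that $T$ maps $\set{V\ge c_1\e^{\nu\Delta}V_{I_k}}$ into itself, so the fixed point automatically lies in that set; you instead work directly with the already-constructed fixed point, evaluate the Bellman equation at a minimizer $z^*$ of $V_U$ over the finite set $Z$, solve the resulting scalar inequality, and use monotonicity of $f(x)=\e^{\nu\Delta}x/(\e^{(r+\nu)\Delta}-1+x)$ together with $\sigma\mu p\le\sigma\beta\Delta$. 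The two arguments hinge on the same inequality (monotonicity of $f$ is equivalent to monotonicity of $x\mapsto(1-x)c_1+x$, used in the paper via $c_1<1$), so neither is deeper, but your argmin device relies on $Z$ being finite (the minimum must be attained), whereas the paper's invariance argument would carry over to bounded value functions on an infinite state space; conversely, your route avoids guessing the constant in advance and even yields the slightly sharper state-dependent bound $V_U(z^*)\ge f(\sigma\mu(z^*)p(z^*))V_{I_k}$. Two small housekeeping points: you invoke $V_{I_k}<0$ without deriving it (it is immediate from \eqref{eq:VIk} since $u_I\le 0$, $u_D<0$, $\delta>0$, $\gamma\Delta>0$), and your contraction modulus $\e^{-r\Delta}$ is coarser than the paper's $\e^{-(r+\nu)\Delta}$, though both are valid since each is strictly less than one.
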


The proof of Proposition \ref{prop:V} as well as other longer proofs are deferred to Appendix \ref{sec:proof}. The inequality \eqref{eq:Vineq} is quite intuitive. In terms of flow utility, having no symptoms is better than having symptoms, which is better than death. Because the states $R_{k},D$ are absorbing and an $I_k$ agent may recover or die, the inequalities $V_D<V_{I_k}<V_{R_k}=0$ are immediate. The inequality $V_U\le V_{R_k}$ is also immediate because a $U$ agent could get infected and generally chooses a lower action. The inequality $V_{I_k}<V_U$ follows from the fact that a $U$ agent can always choose the myopic optimal action ($a=1$, which generates flow utility $0=u(1)\ge u_I(a_I)=u_I$) but gets infected only in the future.

The following proposition characterizes the best response of a $U$ agent. To state the result, we define the inverse marginal utility function $\phi:\R\to [\ubar{a},1]$ by
\begin{equation}
\phi(x)\coloneqq\begin{cases*}
1 & if $x\le u'(1)$,\\
(u')^{-1}(x) & if $u'(1)<x<u'(\ubar{a})$,\\
\ubar{a} & if $x\ge u'(\ubar{a})$.
\end{cases*}
\label{eq:phi}
\end{equation}

\begin{prop}[Best response of $U$ agents]\label{prop:aU}
Fix a policy function $a_U:Z\to A$ and 
let $V_U:Z\to \R$ be the corresponding value function established in Proposition \ref{prop:V}. Then the best response of a $U$ agent is
\begin{equation}
a^*=\phi\left(\frac{\sigma\mu(z)p(z)}{\e^{r\Delta}-1}(\E_z \e^{-\nu\Delta}V_U(z')-V_{I_k})\right),\label{eq:aU}
\end{equation}
where $\mu(z)$ and $p(z)$ are given by \eqref{eq:muz} and \eqref{eq:pz}, respectively.
\end{prop}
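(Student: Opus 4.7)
The plan is to exploit strict concavity of the unknown agent's Bellman objective in $a$ and read off the best response from a first-order condition, using the inverse marginal utility $\phi$ to encode the boundary constraints $a\in[\ubar{a},1]$.

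First, I would rewrite the right-hand side of \eqref{eq:Bellman.u2} by separating the $a$-dependent part. Since
\[
\E_z\bigl((1-\sigma\mu p a)\e^{-\nu\Delta}V_U(z')+\sigma\mu p a V_{I_k}\bigr)
=\E_z\e^{-\nu\Delta}V_U(z')-\sigma\mu p a\bigl(\E_z\e^{-\nu\Delta}V_U(z')-V_{I_k}\bigr),
\]
the maximand in \eqref{eq:Bellman.u2} takes the form $(1-\e^{-r\Delta})u(a)-\e^{-r\Delta}\sigma\mu p a\,\Xi(z)+\text{const}(a)$, where $\Xi(z)\coloneqq\E_z\e^{-\nu\Delta}V_U(z')-V_{I_k}$ is independent of $a$.

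Next, I would check that $\Xi(z)\ge 0$, so that the argument of $\phi$ is non-negative and lies in the domain where $\phi$ is well-defined as the (possibly truncated) inverse of $u'$. By the lower bound in \eqref{eq:Vineq} of Proposition~\ref{prop:V}, $V_U(z')\ge\frac{\e^{\nu\Delta}\sigma\beta\Delta}{\e^{(r+\nu)\Delta}-1+\sigma\beta\Delta}V_{I_k}$, so $\e^{-\nu\Delta}V_U(z')\ge\frac{\sigma\beta\Delta}{\e^{(r+\nu)\Delta}-1+\sigma\beta\Delta}V_{I_k}$; since the coefficient on the right lies in $[0,1)$ and $V_{I_k}<0$, this quantity exceeds $V_{I_k}$, hence $\Xi(z)\ge 0$. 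Moreover, $u$ is strictly concave by Assumption~\ref{asmp:u} and the $a$-dependent terms in the maximand aside from $u(a)$ are linear in $a$, so the objective is strictly concave on $A$; the maximizer is therefore unique.

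The first-order condition for an interior optimum is
\[
(1-\e^{-r\Delta})u'(a)=\e^{-r\Delta}\sigma\mu(z)p(z)\,\Xi(z),
\]
i.e.\ $u'(a)=\frac{\sigma\mu(z)p(z)}{\e^{r\Delta}-1}\Xi(z)$. Writing $y$ for the right-hand side, strict concavity plus the boundary behaviour of $u'$ (which is continuous and strictly decreasing on $[\ubar{a},1]$) imply that the constrained maximizer equals $1$ if $y\le u'(1)$, equals $\ubar{a}$ if $y\ge u'(\ubar{a})$, and equals $(u')^{-1}(y)$ otherwise. By the definition \eqref{eq:phi} of $\phi$, this is exactly $a^*=\phi(y)$, which is the claim \eqref{eq:aU}.

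The main obstacle is essentially just bookkeeping: isolating the $a$-dependent terms cleanly and verifying the sign of $\Xi(z)$ so that the inverse-marginal-utility rule applies with the correct (non-negative) argument. Everything else is a standard concave-maximization argument using the properties of $u$ from Assumption~\ref{asmp:u}.
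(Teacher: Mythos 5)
Your proof is correct and follows essentially the same route as the paper's: strict concavity of the Bellman maximand (the continuation value being linear in $a$ since the agent takes the state transition as given), followed by the first-order condition with the corner cases encoded by $\phi$. The extra verification that $\E_z\e^{-\nu\Delta}V_U(z')-V_{I_k}\ge 0$ is a harmless sanity check but not needed, since $\phi$ is defined on all of $\R$ and returns $1$ for any argument below $u'(1)$, which is also the correct maximizer when that quantity is negative.
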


The next corollary shows that when prevalence is sufficiently low, agents take no precautions.

\begin{cor}[Full activity with sufficiently low prevalence]\label{cor:aU1}
Let $I\coloneqq I_k+I_u$ be the prevalence defined in \eqref{eq:prevalence}. There exists $\bar{I}>0$ such that for all policy function $a_U(z)$, we have $a^*=1$ whenever $I<\bar{I}$. In particular, we can take
\begin{equation}
\bar{I}=-\frac{\e^{r\Delta}-1}{\Delta}\frac{u'(1)}{\sigma\beta V_{I_k}},\label{eq:Ibar}
\end{equation}
where $V_{I_k}<0$ is given by \eqref{eq:VIk}.
\end{cor}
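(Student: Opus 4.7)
The plan is to invoke the best-response formula from Proposition \ref{prop:aU} and bound the argument of $\phi$ using the value-function inequalities in Proposition \ref{prop:V}, exploiting that $\phi(x)=1$ whenever $x\le u'(1)$ by the definition \eqref{eq:phi}.

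First I would unpack the argument of $\phi$ in \eqref{eq:aU}, namely
\begin{equation*}
x(z)\coloneqq \frac{\sigma\mu(z) p(z)}{\e^{r\Delta}-1}\bigl(\E_z\e^{-\nu\Delta}V_U(z')-V_{I_k}\bigr),
\end{equation*}
and bound each factor. Since $\mu(z)\in[0,1]$ by \eqref{eq:muz}, and since $a_I,a_U(z)\in A\subseteq [0,1]$, the infection probability \eqref{eq:pz} satisfies $p(z)=\beta\Delta(a_II_k+a_U(z)I_u)\le \beta\Delta(I_k+I_u)=\beta\Delta I$. For the value-function factor, Proposition \ref{prop:V} gives $V_{I_k}\le V_U(z')\le 0$ for every $z'$; combined with $\e^{-\nu\Delta}\in(0,1]$ and $V_{I_k}<0$, this yields
\begin{equation*}
0\le \E_z\e^{-\nu\Delta}V_U(z')-V_{I_k}\le -V_{I_k}.
\end{equation*}

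Multiplying these bounds and using $\sigma\le 1$, I obtain
\begin{equation*}
0\le x(z)\le \frac{\sigma\beta\Delta I (-V_{I_k})}{\e^{r\Delta}-1}.
\end{equation*}
With $\bar{I}$ as given by \eqref{eq:Ibar}, the condition $I<\bar{I}$ is exactly the condition that the right-hand side is strictly less than $u'(1)$; indeed, rearranging $\frac{\sigma\beta\Delta I(-V_{I_k})}{\e^{r\Delta}-1}<u'(1)$ is the same as $I<-\frac{\e^{r\Delta}-1}{\Delta}\frac{u'(1)}{\sigma\beta V_{I_k}}=\bar{I}$. Therefore $x(z)<u'(1)$ and $a^*=\phi(x(z))=1$ by \eqref{eq:phi}. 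Finally, $\bar{I}>0$ because $u'(1)>0$ (Assumption \ref{asmp:u}(i)) and $V_{I_k}<0$ (from \eqref{eq:Vineq}).

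There is no real obstacle here — the argument is a direct pipeline of bounds. The only subtlety worth flagging is that the bound on the value-function factor must hold uniformly in $z'$ (which it does, because $V_{I_k}\le V_U\le 0$ holds pointwise on $Z$ by Proposition \ref{prop:V}) and that the bound on $p(z)$ must be independent of the chosen policy $a_U$ — both trivially true since $a_U(z)\le 1$ regardless of the policy, which is what allows the conclusion to hold for \emph{every} policy function $a_U$.
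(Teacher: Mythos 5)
Your proposal is correct and follows essentially the same route as the paper's proof: bound the argument of $\phi$ in \eqref{eq:aU} by $\frac{\sigma\beta\Delta I(-V_{I_k})}{\e^{r\Delta}-1}$ using $\mu(z)\le 1$, $p(z)\le\beta\Delta I$, and $V_{I_k}<V_U\le 0$ from Proposition \ref{prop:V}, then conclude $a^*=\phi(x)=1$ once $I<\bar{I}$. The only cosmetic difference is your unused remark ``using $\sigma\le 1$''; the argument itself matches the paper step for step.
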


\subsection{Definition and existence of equilibrium}\label{subsec:eq}
Our equilibrium concept is the (pure strategy) perfect Bayesian Markov competitive equilibrium defined below. Here ``perfect Bayesian'' means that agents update beliefs on equilibrium paths using Bayes' rule as in Assumption \ref{asmp:consistent}; ``Markov'' means that the optimal actions agents choose are functions of state variables; and ``competitive'' means that agents view the evolution of aggregate state variables as exogenous as in Assumption \ref{asmp:comp}.

\begin{defn}[Markov equilibrium]\label{defn:eq}
A (pure strategy) \emph{perfect Bayesian Markov competitive equilibrium}, or \emph{Markov equilibrium} for short, consists of $U$ agents' belief $\mu(z)$ of being susceptible, transition probabilities $\set{q(z,z')}_{z,z'\in Z}$ for the aggregate state, value functions $\set{V_h(z)}_{h=U,I_k,R_k,D}$, and policy functions $\set{a_h(z)}_{h=U,I_k,R_k}$ such that
\begin{enumerate}
\item (Consistency) The belief $\mu(z)$ satisfies Bayes' rule on equilibrium paths and is given by \eqref{eq:muz} off equilibrium paths; the transition probabilities $\set{q(z,z')}$ are consistent with individual actions and the mechanisms of disease transmission, symptom development, recovery, and death,
\item (Sequential rationality)
\begin{enumerate}
\item[($U$)] $V_U(z)$ satisfies the Bellman equation \eqref{eq:Bellman.u2} and $a=a_U(z)$ achieves the maximum, where $\E_z$ denotes the conditional expectation using $\set{q(z,z')}$ and $p=p(z)$ is as in \eqref{eq:pz},
\item[($I_k$)] $V_{I_k}(z)$ is as in \eqref{eq:VIk} and $a_{I_k}(z)=a_I$,
\item[($R_k$)] $V_{R_k}(z)=0$ and $a_{R_k}(z)=1$,
\item[($D$)] $V_D(z)=u_D$.
\end{enumerate}
\end{enumerate}
\end{defn}

Note that Definition \ref{defn:eq} only describes the society before vaccine arrival. Once the vaccine arrives, because no new infection occur by assumption, it is optimal for all agents to take their myopic optimal action ($a=1$ for $h=U,R_k$ and $a=a_I$ for $h=I_k$) and the problem becomes trivial.

The astute reader may wonder why we adopt the notion of perfect Bayesian equilibrium and specify that beliefs are given by \eqref{eq:muz} even off the equilibrium path. The belief $\mu(z)$ equals the posterior belief if $U$ agents have a common prior, take identical actions, and learn that the aggregate state is $z$, but a Bayesian $U$ agent who is contemplating deviating from the equilibrium action $a_U(z)$ generally has a different posterior. The primary reason for our choice of using PBE as the solution concept is that of tractability, as we wish to study the role of forward-looking agents that are uncertain about their health status without obscuring the analysis with technicalities.\footnote{To be more specific, in our discrete-time setting, insisting on applying Bayes' rule off the equilibrium path leads to non-concave maximization problems, and hence the existence of equilibrium is not assured.} The above notion of perfect Bayesian equilibria allows agents in our model to be forward-looking and rational, if somewhat ``forgetful''. In Appendix \ref{CE2new} we extend the model to the case with perfect recall and show that the qualitative features of the equilibrium remain unchanged, with the main difference that activity is higher with perfect recall.

The following theorem establishes the existence of a perfect Bayesian equilibrium. 

\begin{thm}[Existence of equilibrium]\label{thm:exist}
Suppose Assumptions \ref{asmp:u}--\ref{asmp:consistent} hold. Then there exists a pure strategy perfect Bayesian Markov competitive equilibrium, where the belief $\mu(z)$ always satisfies \eqref{eq:muz}.
\end{thm}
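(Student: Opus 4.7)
The plan is to apply Brouwer's fixed-point theorem to the space $\mathcal{A} \coloneqq A^Z = [\ubar{a},1]^Z$ of pure Markov policies for unknown agents. Since $Z$ in \eqref{eq:Z} is a finite set and $A$ is a compact interval, $\mathcal{A}$ is a compact, convex subset of $\R^Z$. A fixed point of a continuous best-response self-map on $\mathcal{A}$, combined with the explicit value and policy functions for types $I_k$, $R_k$, $D$ derived in Section \ref{subsec:individual}, will yield all ingredients of a Markov equilibrium. I will take $\mu(z)$ to be the formula \eqref{eq:muz} throughout and verify Bayesian consistency on the equilibrium path at the end.

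Given a candidate policy $a_U \in \mathcal{A}$ played by all unknown agents, let $q_{a_U}(z,z')$ denote the transition kernel on $Z$ induced by the infection mechanism \eqref{eq:incidence}, the diagnosis probability $\sigma$, the removal and fatality rates $\gamma, \delta$, and the assumption of no vaccine arrival. Let $V_U^{a_U} \in \R^Z$ be the unique solution of the Bellman equation \eqref{eq:Bellman.u2} under $q_{a_U}$ and $p(z) = \beta \Delta (a_I I_k + a_U(z) I_u)$; existence and uniqueness follow from Proposition \ref{prop:V}. Define the best-response map $T: \mathcal{A} \to \mathcal{A}$ by
\begin{equation*}
T(a_U)(z) \coloneqq \phi\left( \frac{\sigma \mu(z) p(z)}{\e^{r\Delta}-1} \bigl( \E_z \e^{-\nu\Delta} V_U^{a_U}(z') - V_{I_k} \bigr) \right),
\end{equation*}
which by Proposition \ref{prop:aU} is the individual best response to $a_U$. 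Strict concavity of $u$ (Assumption \ref{asmp:u}) makes this response unique, so $T$ is genuinely a pure-strategy self-map.

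To invoke Brouwer, I need $T$ to be continuous. The kernel $q_{a_U}(z,z')$ is a polynomial in the coordinates $\{a_U(z)\}_{z\in Z}$, hence continuous in $a_U$. For each $a_U$ the Bellman operator $B_{a_U}$ on the right-hand side of \eqref{eq:Bellman.u2} is a sup-norm contraction on $\R^Z$ with modulus at most $\e^{-(r+\nu)\Delta}<1$ uniformly in $a_U$, since the coefficient $\e^{-r\Delta}(1-\sigma\mu p a)\e^{-\nu\Delta}$ multiplying $V(z')$ lies in $[0,\e^{-(r+\nu)\Delta}]$ and maximization over the compact interval $A$ is nonexpansive. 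The standard parametric-contraction bound
\begin{equation*}
\bigl\|V_U^{a_U} - V_U^{a_U^0}\bigr\|_\infty \le \frac{\bigl\|B_{a_U}(V_U^{a_U^0}) - B_{a_U^0}(V_U^{a_U^0})\bigr\|_\infty}{1-\e^{-(r+\nu)\Delta}},
\end{equation*}
together with the continuity of $a_U \mapsto B_{a_U}(V)$ for fixed $V$, shows that $a_U \mapsto V_U^{a_U}$ is continuous. Since $\mu, p$, and $\E_z \e^{-\nu\Delta} V_U^{a_U}(z')$ are then continuous in $a_U$, and since $\phi$ defined in \eqref{eq:phi} is continuous by the strict concavity and continuous differentiability of $u$, the map $T$ is continuous. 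Brouwer's theorem then produces a fixed point $a_U^\star = T(a_U^\star)$.

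It remains to verify that the resulting tuple satisfies Definition \ref{defn:eq}. Sequential rationality for $U$ agents holds by the fixed-point identity together with Proposition \ref{prop:aU}; for the other types it was already established in Section \ref{subsec:individual}. Consistency of the kernel $q_{a_U^\star}$ with individual actions holds by construction. For Bayesian consistency of $\mu(z)$ on equilibrium paths, observe that all $U$ agents share a common prior and play the symmetric Markov policy $a_U^\star$, so individual actions reveal no private information beyond the fact of still being unknown. Conditional on being alive and uninformed, the posterior probability of being susceptible therefore equals the population fraction $S/(S+I_u+R_u)=\mu(z)$, matching \eqref{eq:muz}. The main technical step in this plan is the continuous dependence of the Bellman fixed point $V_U^{a_U}$ on $a_U$; the uniform contraction modulus $\e^{-(r+\nu)\Delta}<1$, coming from discounting and the Poisson vaccine hazard, is what makes the parametric argument go through.
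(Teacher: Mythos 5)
Your proposal is correct and follows essentially the same route as the paper: a uniform contraction modulus $\e^{-(r+\nu)\Delta}$ giving continuity of the value function in the policy (the paper's Lemma \ref{lem:fp_cont}), uniqueness of the best response from strict concavity, and Brouwer's fixed-point theorem on the compact convex policy space $A^Z$, with Bayesian consistency following from symmetry of $U$ agents' actions. The only difference is presentational: the paper packages this argument as a general stochastic-game existence result (Theorem \ref{thm:exist_gen}) and then verifies its hypotheses for the SIR model, whereas you apply the same steps directly to the unknown agents' policies, exploiting that the other types' problems are trivial.
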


As we noted in the introduction, one benefit of the recursive nature of our analysis is that it allows us to study the role of equilibrium and efficient actions at every point in the state space. This is crucial for our analysis of the role of enforcement and for 
the distinction we later draw between static efficient and efficient activity levels.

\subsection{Externalities and efficiency}
\label{subsec:external}


In this section we study the source of externalities in the model and the efficiency properties of the equilibrium. Let $a_U(z)$ and $a_{I_k}(z)$ be any policy functions for $U$ and $I_k$ agents chosen by the social planner. Since the behavior of $R_k$ agents does not affect the aggregate state dynamics, without loss of generality we set $a_{R_k}(z)=1$, as this is both individually and socially optimal. Suppose the planner wishes to choose $(a_U(z),a_{I_k}(z))$ to maximize the social welfare. In general, individual agents have an incentive to deviate from such recommendations and choose the individually optimal actions characterized by Proposition \ref{prop:aU}. In order to model imperfect enforcement and therefore study the distinction between static and dynamic externalities, we suppose that at some exogenous hazard rate $\eta\ge 0$ society reverts back to the Markov equilibrium characterized in Theorem \ref{thm:exist}. Letting $V_h^\eta(z;a_U,a_I)$ be the value function of $h$ agents in this environment, since the probability of reverting to equilibrium is $1-\e^{-\eta\Delta}$, we have
\begin{subequations}\label{eq:Vlambda}
\begin{align}
V_U^\eta(z) & =(1-\e^{-\eta\Delta})V_U(z)+\e^{-\eta\Delta}\bigl[(1-\e^{-r\Delta})u(a_U)\notag \\
& +\e^{-r\Delta}\E_z((1-\sigma\mu pa_U)\e^{-\nu\Delta}V_U^\eta(z')+\sigma\mu pa_U V_{I_k}^\eta(z'))\bigr],\label{eq:Vlambda.u}\\
V_{I_k}^\eta(z) & = (1-\e^{-\eta\Delta})V_{I_k}+\e^{-\eta\Delta}\bigl[(1-\e^{-r\Delta})u_I(a_{I_k})\notag \\
& +\e^{-r\Delta}\E_z((1-\gamma\Delta)V_{I_k}^\eta(z')+\gamma\Delta((1-\delta)V_{R_k}+\delta V_D))\bigr],\label{eq:Vlambda.i}
\end{align}
\end{subequations}
where $V_U(z)$ and $V_{I_k}$ are the equilibrium value functions in Theorem \ref{thm:exist}, $\mu=\mu(z)$ and $p=p(z)$ are given by \eqref{eq:muz} and \eqref{eq:pz}, and we write $a_h=a_h(z)$ and $V_h^\eta(z)=V_h^\eta(z;a_U,a_{I_k})$ for brevity. By the standard contraction mapping argument, $V_h^\eta$ is well-defined. The utilitarian social welfare associated with the policies $(a_U(z),a_{I_k}(z))$ is then
\begin{align}
\cW^\eta(z)&\coloneqq (S+I_u+R_u)V_U^\eta(z)+I_kV_{I_k}^\eta(z)+R_kV_{R_k}(z)+DV_D(z)\notag \\
&=(S+I_u+R_u)V_U^\eta(z)+I_kV_{I_k}^\eta(z)+Du_D, \label{eq:wz}
\end{align}
where we have used $V_{R_k}=0$ and $V_D=u_D$. Noting that $u_D$ is constant and $D$ depends only on $z$ (and not on the policies $(a_U(z),a_{I_k}(z))$), conditional on the aggregate state $z$, we can rank social welfare associated with particular policy functions by using
\begin{equation}
W^\eta(z)\coloneqq UV_U^\eta(z)+I_kV_{I_k}^\eta(z) \label{eq:welfare}
\end{equation}
instead of $\cW^\eta$, where $U\coloneqq S+I_u+R_u$ denotes the fraction of $U$ agents.

Given the hazard rate $\eta$, the optimal policies $(a_U(z),a_{I_k}(z))$ maximize \eqref{eq:welfare}. The Markov equilibrium in Definition \ref{defn:eq} is generally inefficient because the equilibrium policies $(a_U^*(z),a_{I_k}^*(z))$ do not maximize the welfare criterion \eqref{eq:welfare} due to externalities. We distinguish between two types of externalities, static and dynamic. The static externality refers to the fact that when $I$ (infected) agents take higher actions, they infect $S$ (susceptible) agents with some probability and affect the \emph{current} value function of $U$ agents, even if the future value functions and the aggregate state transitions remain the same. The dynamic externality refers to the fact that the collective behavior of agents affect the aggregate state transitions and hence future value functions. The difference between the equilibrium and efficient activity level may be interpreted as a measure of the externality in this environment. In this section, we study the static externality analytically by comparing the efficient activity level with the level chosen by a planner who can intervene for only a vanishingly small time interval. The difference between these two activity levels may be interpreted as a measure of the static externality, while the remaining difference between the static efficient and equilibrium activity may be interpreted as a measure of the dynamic externality (which we explore in our numerical exercises in Section \ref{sec:num}).

To this end, consider a social planner who can intervene to alter agents' current actions $(a_U,a_{I_k})$ but who takes as given the transition probabilities $\set{q(z,z')}$ of the aggregate state as well as next period's value functions $\set{V_h(z')}_{h=U,I_k,R_k,D}$. This is  equivalent to solving the above problem with $\eta=\infty$ and $\Delta\to 0$. Using \eqref{eq:Bellman.i}, \eqref{eq:Bellman.u2}, \eqref{eq:welfare}, and noting that $V_{R_k}=0$, we can define the objective function of the planner that seeks to eliminate the static externality by
\begin{multline}
W(a_U,a_{I_k},z) \coloneqq \\ U\left[(1-\e^{-r\Delta})u(a_U) + \e^{-r\Delta}\E_z((1-\sigma\mu pa_U)\e^{-\nu\Delta}V_U(z')+\sigma\mu pa_U V_{I_k}(z'))\right] \\
+I_k\left[(1-\e^{-r\Delta})u_I(a_{I_k}) + \e^{-r\Delta}\E_z((1-\gamma\Delta)V_{I_k}(z')+\gamma\Delta\delta V_D)\right], \label{eq:Waz}
\end{multline}
where $p$ is given by \eqref{eq:pz}. Note that while the continuation value in the individual optimization problem in \eqref{eq:Bellman.u} is linear in the agent's action $a$, the continuation value in the planner's problem \eqref{eq:Waz} is quadratic in the $U$ agent action $a_U$ because $p$ is affine in $a_U$. This interaction is precisely the source of static externality.

Restricting the action of $I_k$ agents can be interpreted as \emph{quarantine}. Restricting the action of $U$ agents can be interpreted as \emph{lockdown}. Hence we introduce the following definition.

\begin{defn}[Static efficient actions]\label{defn:se}
Fix transition probabilities $\set{q(z,z')}$ and value functions $V_U,V_{I_k}$. Given a policy function $a_U(z)$ of $U$ agents, we say that the \emph{quarantine policy} $a_{I_k}^\dagger(z)$ is \emph{static efficient} if $a_{I_k}=a_{I_k}^\dagger(z)$ achieves the maximum of \eqref{eq:Waz}. Similarly, given a policy function $a_{I_k}(z)$ of $I_k$ agents, we say that the \emph{lockdown policy} $a_U^\dagger(z)$ is \emph{static efficient} if $a_U=a_U^\dagger(z)$ achieves the maximum of \eqref{eq:Waz}.
\end{defn}


The reason we use the qualifier ``static'' can be understood as follows. Imagine that a benevolent government wishes to implement some lockdown policy (the choice of a specific $a_U(z)$) when faced with an epidemic. The optimal policy will then obviously depend on the duration of the intervention. If the intervention is implemented for an extended period of time (\ie, $\eta$ is small), the future aggregate state $z$ will change and so too will the continuation values $\E_z V_h(z')$. The policies in Definition \ref{defn:se} are static efficient in the sense that they are what the government would choose if they could commit only for a short period of time and thus cannot influence the future evolution of the aggregate state.

The following proposition characterizes the static efficient quarantine policy.

\begin{prop}[Static efficient quarantine policy]\label{prop:aI_se}
Suppose the utility function of $I_k$ agents $u_I$ is continuously differentiable, strictly concave, with inverse marginal utility function $\phi_I$ analogous to \eqref{eq:phi}. Then the static efficient quarantine policy is given by
\begin{equation}
a_{I_k}^\dagger(z)=\phi_I\left(\frac{\sigma\beta\Delta}{\e^{r\Delta}-1}\mu(z) a_U(z)U\E_z(\e^{-\nu\Delta}V_U(z')-V_{I_k}(z'))\right).\label{eq:aI_se}
\end{equation}
\end{prop}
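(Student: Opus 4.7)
The plan is to exploit the fact that the welfare functional $W(a_U,a_{I_k},z)$ defined in \eqref{eq:Waz} is, with $a_U$ and all continuation objects held fixed, a sum of a strictly concave function of $a_{I_k}$ and a function that is affine in $a_{I_k}$, so that the first-order condition characterizes the unique maximizer (up to boundary corrections encoded by $\phi_I$). Crucially, I first need to spell out that the infection probability $p$ appearing inside the $U$-agents' bracket in \eqref{eq:Waz} is $p=\beta\Delta(a_{I_k}I_k+a_U I_u)$ — i.e., the planner's choice $a_{I_k}$ replaces the myopic $a_I$ of \eqref{eq:pz}, because the quarantine policy directly alters the probability that an unknown susceptible meets an infectious counterpart. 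With that clarified, the only nonlinear appearance of $a_{I_k}$ in \eqref{eq:Waz} is through $u_I(a_{I_k})$, so strict concavity of $u_I$ gives strict concavity of $W$ in $a_{I_k}$ on $A$.

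Next I would differentiate \eqref{eq:Waz} with respect to $a_{I_k}$. Because $\partial p/\partial a_{I_k}=\beta\Delta I_k$, the derivative of the $U$-block contributes
\begin{equation*}
Ue^{-r\Delta}\sigma\mu(z) a_U(z)\beta\Delta I_k\,\E_z\bigl(V_{I_k}(z')-e^{-\nu\Delta}V_U(z')\bigr),
\end{equation*}
while the $I_k$-block contributes $I_k(1-e^{-r\Delta})u_I'(a_{I_k})$. Setting the sum to zero, canceling the common factor $I_k$, rearranging, and using $(1-e^{-r\Delta})^{-1}e^{-r\Delta}=(e^{r\Delta}-1)^{-1}$, I obtain
\begin{equation*}
u_I'(a_{I_k})=\frac{\sigma\beta\Delta}{e^{r\Delta}-1}\mu(z)a_U(z)U\,\E_z\bigl(e^{-\nu\Delta}V_U(z')-V_{I_k}(z')\bigr),
\end{equation*}
and inverting via $\phi_I$ — which by construction returns the interior root of the FOC when it lies in $(\underline{a},1)$ and otherwise clamps to the appropriate boundary — yields exactly \eqref{eq:aI_se}. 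The sign check is reassuring: Proposition \ref{prop:V} gives $V_U\ge V_{I_k}$, so the argument of $\phi_I$ is nonnegative, forcing $a_{I_k}^\dagger\le a_I$ (consistent with the quarantine interpretation).

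The main obstacle is conceptual rather than computational: one must be careful about which objects depend on $a_{I_k}$ and which are frozen. Since Definition \ref{defn:se} fixes transition probabilities $\{q(z,z')\}$ and the next-period value functions, the only channels through which $a_{I_k}$ enters $W$ are (i) the flow utility $u_I(a_{I_k})$ of known infected agents and (ii) the current-period infection probability $p$ of unknown agents. A subtle but minor point is that $I_k=0$ makes the FOC degenerate; however in that case the quarantine policy is irrelevant, and setting $a_{I_k}^\dagger=a_I$ (which is what \eqref{eq:aI_se} gives, since the argument of $\phi_I$ vanishes through the $\E_z V_{I_k}-e^{-\nu\Delta}\E_z V_U$ term being the only nonzero factor that could matter — or simply by convention) is consistent with the statement.
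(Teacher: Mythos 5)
Your proof is correct and follows essentially the same route as the paper: differentiate \eqref{eq:Waz} in $a_{I_k}$ with $p=\beta\Delta(a_{I_k}I_k+a_UI_u)$, cancel the common factor $I_k$, and use strict concavity of $u_I$ together with the boundary cases encoded in $\phi_I$, exactly as in the paper's short proof that defers the last step to the argument of Proposition \ref{prop:aU}. (Only your aside on $I_k=0$ is slightly off --- the argument of $\phi_I$ does not vanish there since it does not depend on $I_k$ --- but that case is degenerate and immaterial to the result.)
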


\begin{proof}
Let $W$ be as in \eqref{eq:Waz}. Using $p=\beta\Delta(a_{I_k}I_k+a_UI_u)$, we obtain
\begin{align*}
\frac{\partial W}{\partial a_{I_k}}&=-U\e^{-r\Delta}\sigma\mu\beta\Delta I_k a_U\E_z(\e^{-\nu\Delta}V_U(z')-V_{I_k}(z'))+I_k(1-\e^{-r\Delta})u_I'(a_{I_k})\\
&=I_k\e^{-r\Delta}\Delta\left(\frac{\e^{r\Delta}-1}{\Delta}u_I'(a_{I_k})-\sigma\mu\beta a_UU\E_z(\e^{-\nu\Delta}V_U(z')-V_{I_k}(z'))\right).
\end{align*}
The rest of the proof is the same as Proposition \ref{prop:aU}.
\end{proof}

Proposition \ref{prop:aI_se} has several implications. First, \eqref{eq:aI_se} does not depend on the fraction of known infected agents $I_k$ except through the continuation values. This is because the welfare gain from reducing new infections and the welfare loss from restricting infected agents' actions are both proportional to $I_k$, as we can see from the proof of Proposition \ref{prop:aI_se}. Second, under normal circumstances the planner seeks to quarantine $I_k$ agents intensively. To see this, using \eqref{eq:pz}, note that the individually optimal action of $U$ agents \eqref{eq:aU} can be rewritten as
\begin{equation}
a_U^*=\phi\left(\frac{\sigma\beta\Delta}{\e^{r\Delta}-1}\mu(z) (a_{I_k}I_k+a_UI_u)\E_z(\e^{-\nu\Delta}V_U(z')-V_{I_k}(z'))\right).\label{eq:aU_alt}
\end{equation}
Assuming that $I_k$ and $U$ agents have identical preferences (so $u_I=u$), the only difference between \eqref{eq:aI_se} and \eqref{eq:aU_alt} is that the argument of the former is proportional to $a_UU$, whereas the argument of the latter is proportional to $a_{I_k}I_k+a_UI_u$. The fraction of actively infected agents in the society is typically small, so $I_k,I_u\ll U$. In this case, the argument of \eqref{eq:aI_se} is much larger than the argument of \eqref{eq:aU_alt}, and because $\phi$ is decreasing, we would have $a_{I_k}^\dagger\ll a_U^*$.

We next study the optimal intervention to $U$ agents. Suppose all $I_k$ agents take action $a_I\in A$, which can be interpreted as an exogenous quarantine policy by the discussion after Assumption \ref{asmp:u}. Let $a_U^*(z)$ be the equilibrium policy of $U$ agents and $V_U,V_{I_k}$ the corresponding value functions established in Theorem \ref{thm:exist}. The following theorem provides a bound for the static efficient lockdown policy, which is our main theoretical result.

\begin{thm}[Static efficient lockdown policy]\label{thm:aU_se}
Let $a_U^*(z)$ be the equilibrium policy of $U$ agents and
\begin{equation*}
0<m\coloneqq \min_{a\in A}\abs{u''(a)}\le \max_{a\in A}\abs{u''(a)}\eqqcolon M.
\end{equation*}
There exists a unique static efficient lockdown policy $a_U^\dagger(z)$, which satisfies
\begin{subequations}\label{eq:aU_bound}
\begin{align}
&\frac{a_U^*(z)}{2}\le a_U^\dagger(z)\le a_U^*(z),\label{eq:aU_bound1}\\
&a_U^*(z)-a_U^\dagger(z)\le \frac{\sigma\beta\Delta\mu(z)}{m(\e^{r\Delta}-1)}(2a_U^\dagger(z)-a_U^*(z))I_u\E_z(\e^{-\nu\Delta}V_U(z')-V_{I_k}).\label{eq:aU_bound2}
\end{align}
\end{subequations}
In particular, if $\sigma=1$ (so $I_u=0$), then $a_U^\dagger(z)=a_U^*(z)$.

Furthermore, whenever $a_U^*,a_U^\dagger$ are interior, we have
\begin{equation}
a_U^*(z)-a_U^\dagger(z)\ge \frac{\sigma\beta\Delta\mu(z)}{M(\e^{r\Delta}-1)}(2a_U^\dagger(z)-a_U^*(z))I_u\E_z(\e^{-\nu\Delta}V_U(z')-V_{I_k}).\label{eq:aU_bound3}
\end{equation}
\end{thm}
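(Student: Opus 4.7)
My plan is to derive the static efficient action $a_U^\dagger(z)$ as a fixed point of $\phi$ analogous to the equilibrium characterization \eqref{eq:aU_alt}, and then read off each of the four claims by direct comparison of the two resulting equations. Since the planner takes $a_{I_k}=a_I$, the continuation values, and the transitions $\set{q(z,z')}$ as given, substituting $p(z)=\beta\Delta(a_II_k+a_UI_u)$ into \eqref{eq:Waz} and collecting terms in $a_U$ yields, up to $a_U$-independent constants,
\[
W(a_U) = U(1-\e^{-r\Delta})u(a_U) - K(a_II_ka_U + I_ua_U^2),
\]
with $K\coloneqq U\e^{-r\Delta}\sigma\mu\beta\Delta\,\E_z(\e^{-\nu\Delta}V_U(z')-V_{I_k})>0$. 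Positivity of $K$ uses $V_U>V_{I_k}$ and $V_U\le 0$ from Proposition \ref{prop:V}. Then $W''(a_U)=U(1-\e^{-r\Delta})u''(a_U)-2KI_u<0$ delivers strict concavity and hence existence and uniqueness of $a_U^\dagger$. Packaging the first-order condition through $\phi$ to absorb corners gives the clean pair
\[
a_U^\dagger(z)=\phi\bigl(\tilde K(a_II_k+2I_ua_U^\dagger(z))\bigr),\qquad a_U^*(z)=\phi\bigl(\tilde K(a_II_k+I_ua_U^*(z))\bigr),
\]
where $\tilde K\coloneqq \sigma\beta\Delta\mu(z)\E_z(\e^{-\nu\Delta}V_U(z')-V_{I_k})/(\e^{r\Delta}-1)>0$; the sole structural difference is the factor $2$ in front of $I_ua_U$, which encodes the static externality.

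For \eqref{eq:aU_bound1}, the case $I_u=0$ (in particular when $\sigma=1$) trivially gives $a_U^\dagger=a_U^*$. Assume $I_u>0$. Since $u''<0$, $\phi$ is non-increasing. For the upper bound, if $a_U^\dagger>a_U^*$ then the planner's $\phi$-argument strictly exceeds the equilibrium's, and monotonicity of $\phi$ returns $a_U^\dagger\le a_U^*$, a contradiction. For the lower bound, if $2a_U^\dagger<a_U^*$ then the planner's $\phi$-argument is strictly smaller, so monotonicity gives $a_U^\dagger\ge a_U^*$; combined with $2a_U^\dagger<a_U^*$ this forces $a_U^\dagger<0$, contradicting $a_U^\dagger\ge \ubar{a}\ge 0$.

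For the quantitative bound \eqref{eq:aU_bound2}, the essential observation is that $\phi$ is Lipschitz with constant $1/m$: on the interior $\phi=(u')^{-1}$ has derivative $1/u''\in[-1/m,0)$, and outside $\phi$ is flat. Applied to the difference of the two fixed-point equations (whose common $a_II_k$ terms cancel) this yields $a_U^*-a_U^\dagger\le (\tilde KI_u/m)(2a_U^\dagger-a_U^*)$, which is \eqref{eq:aU_bound2} after unpacking $\tilde K$. For \eqref{eq:aU_bound3}, both $a_U^*,a_U^\dagger$ being interior permits subtracting the raw FOCs to obtain $u'(a_U^\dagger)-u'(a_U^*)=\tilde KI_u(2a_U^\dagger-a_U^*)$, and the mean value theorem combined with $|u''|\le M$ gives the matching lower bound.

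The main obstacle I anticipate is not the algebra but the book-keeping at the corners of $A=[\ubar{a},1]$; routing everything through the fixed-point characterization via $\phi$ (rather than FOCs directly) sidesteps this uniformly for \eqref{eq:aU_bound1} and \eqref{eq:aU_bound2}, while \eqref{eq:aU_bound3} is stated only for interior actions, where the mean value step is routine.
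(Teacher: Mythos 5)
Your proof is correct, and the underlying mechanism is the same as the paper's: the planner's and the individual's objectives differ only in that the marginal infection cost carries $2I_ua_U$ rather than $I_ua_U^*$, and curvature bounds on $u$ convert this gap in first-order conditions into the stated gap in actions. The execution is packaged differently, though. The paper works directly with the derivatives $f'$, $g'$ of the individual and planner objectives, disposes of corners by a sign case analysis ($f'(a_U^*)<0$ forces $a_U^*=\ubar{a}$, $g'(a_U^\dagger)>0$ forces $a_U^\dagger=1$, and otherwise $f'(a_U^*)\ge 0\ge g'(a_U^\dagger)$), and then applies the mean value theorem to $u'$ to obtain \eqref{eq:aU_bound1}--\eqref{eq:aU_bound2}. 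You instead characterize both $a_U^*$ and $a_U^\dagger$ as fixed points of $\phi$ (legitimate, since your computation $W''<0$ shows the planner's objective is strictly concave in $a_U$, so its maximizer satisfies the projected first-order condition and is unique), obtain the ordering \eqref{eq:aU_bound1} from monotonicity of $\phi$, and obtain \eqref{eq:aU_bound2} from the fact that $\phi$ is globally $1/m$-Lipschitz. That Lipschitz step is the exact dual of the paper's mean-value step (on the interior piece $(u')^{-1}$ has derivative bounded by $1/m$ because $\abs{u''}\ge m$, and $\phi$ is constant outside), so nothing genuinely new is required, and your routing has the modest advantage of treating the corner cases of \eqref{eq:aU_bound1}--\eqref{eq:aU_bound2} uniformly; the only point to make explicit is that dropping the absolute values in the Lipschitz inequality uses $a_U^*-a_U^\dagger\ge 0$ and $2a_U^\dagger-a_U^*\ge 0$, both of which you have already established at that stage (and positivity of your constants $K,\tilde K$ rests on $\E_z(\e^{-\nu\Delta}V_U(z')-V_{I_k})>0$ from Proposition \ref{prop:V}, exactly as in the paper). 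For \eqref{eq:aU_bound3} your argument (subtract the interior first-order conditions and apply the mean value theorem with $\abs{u''}\le M$) coincides with the paper's.
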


Although the mathematical statement of Theorem \ref{thm:aU_se} is relatively complicated, its main message is clear:
\begin{enumerate*}
\item the equilibrium action $a_U^*$ is too high (but at most twice as high) relative to the recommended action $a_U^\dagger$ of a planner that can intervene only for a short period, but
\item the difference in recommended actions $a_U^*-a_U^\dagger$ has the same order of magnitude as the fraction of unknown infected agents $I_u$.
\end{enumerate*}
Therefore a government that can only enforce short-term lockdown policies will always (weakly) wish to curb activity, but its incentive to do so vanishes with the fraction of unconfirmed cases. This is particularly noteworthy because we shall later provide examples in which a government with unlimited enforcement power would wish to do the \emph{reverse}, and recommend higher activity than that which occurs in equilibrium. It is in this sense that policy prescriptions can depend crucially on the enforcement capabilities of the government.

The proof of Theorem \ref{thm:aU_se} is long and technical but we briefly discuss its intuition. In the model, both $I_k$ and $I_u$ agents cause negative externality by infecting $S$ agents. The inequality $a_U^*\ge a_U^\dagger$ comes from the fact that $U$ agents are not internalizing the negative externality caused by $I_u$ agents: intuitively, individual agents view the utility loss from infection as a linear function of their own activity, while the planner views it as a quadratic function due to the interaction between $S$ and $I_u$ agents. The result that $a_U^*-a_U^\dagger$ has the same order of magnitude as $I_u$ comes from the fact that the objective functions of individuals and the planner differ only by the quadratic term due to the interaction between $S$ and $I_u$ agents, which is proportional to the fraction of unknown infected agents $I_u$. Because the first-order conditions differ only by a term proportional to $I_u$, we can show that the corresponding maximizers differ by a term proportional to $I_u$.

\subsection{Equilibrium dynamics}\label{subsec:dynamics}

This section studies the equilibrium dynamics. Given arbitrary policy functions $\set{a_h(z)}_{h=U,I_k,R_k}$, we define the transition probabilities $\set{q(z,z')}$ induced by these policies and the mechanisms of disease transmission, symptom development, recovery, and death. To simplify the notation, let $z_t=z$, $S_t=S$, $S_{t+\Delta}=S_{+\Delta}$, etc. Using \eqref{eq:infect_prob}, it is straightforward to show that
\begin{subequations}\label{eq:SIRD}
\begin{align}
\E_z(S_{+\Delta}-S)&=-\beta \Delta a_U(z)S(a_{I_k}(z)I_k+a_U(z)I_u), \label{eq:SIRD.s}\\
\E_z(I_{k,+\Delta}-I_k)&=\sigma \beta \Delta a_U(z)S(a_{I_k}(z)I_k+a_U(z)I_u)-\gamma\Delta I_k, \label{eq:SIRD.ik}\\
\E_z(I_{u,+\Delta}-I_u)&=(1-\sigma)\beta \Delta a_U(z)S(a_{I_k}(z)I_k+a_U(z)I_u)-\gamma\Delta I_u. \label{eq:SIRD.iu}
\end{align}
\end{subequations}
(We omit the dynamics for $R_k,R_u,D$ agents because they do not depend on policy functions.) We can simplify these equations further if we consider the limit $N\to\infty$ and apply the strong law of large numbers. In the large population limit, letting $I=I_k+I_u$ be the fraction of infected agents, we have $I_k=\sigma I$ and $I_u=(1-\sigma)I$. Therefore, by adding \eqref{eq:SIRD.ik} and \eqref{eq:SIRD.iu} we obtain the system of deterministic difference equations
\begin{subequations}\label{eq:SI}
\begin{align}
\frac{S_{+\Delta}-S}{\Delta}&=-\beta a_U(z)S(\sigma a_{I_k}(z)+(1-\sigma)a_U(z))I, \label{eq:SI.s}\\
\frac{I_{+\Delta}-I}{\Delta}&=(\beta a_U(z)S(\sigma a_{I_k}(z)+(1-\sigma)a_U(z))-\gamma)I =\gamma(\cR(z)-1)I, \label{eq:SI.i}
\end{align}
\end{subequations}
where
\begin{equation}
\cR(z)\coloneqq \frac{\beta}{\gamma}a_U(z)S(\sigma a_{I_k}(z)+(1-\sigma)a_U(z)) \label{eq:Rz}
\end{equation}
is known as the \emph{effective reproduction number} in epidemiology. At the early stage of the epidemic, by definition we have $S\approx 1$ and $I\approx 0$. Therefore by Corollary \ref{cor:aU1} (and assuming $u_I(a)=u(a)$), the equilibrium policies are $a_U(z)=a_{I_k}(z)=1$, and \eqref{eq:Rz} reduces to
\begin{equation}
\cR(z)=\beta/\gamma\eqqcolon \cR_0, \label{eq:R0}
\end{equation}
which is known as the \emph{basic reproduction number}.\footnote{\cite{Delamater2019} argue that ``$\cR_0$ is a function of human social behavior and organization, as well as the innate biological characteristics of particular pathogens.'' In our context $\cR_0$ is a constant because we do not explicitly model precautionary measures such as mask-wearing and hand-washing.} When $U$ agents are myopic (so $a_U(z)=a_{I_k}(z)=1$), \eqref{eq:SI} reduces to
\begin{subequations}\label{eq:KM}
\begin{align}
\frac{S_{+\Delta}-S}{\Delta}&=-\beta SI,\label{eq:KM.s}\\
\frac{I_{+\Delta}-I}{\Delta}&=(\beta S-\gamma)I,\label{eq:KM.i}
\end{align}
\end{subequations}
which is the usual \cite{KermackMcKendrick1927} Susceptible-Infected-Recovered (SIR) model in discrete time. Since $a_U(z),a_{I_k}(z)\le 1$, by \eqref{eq:SI.i}, \eqref{eq:Rz}, and \eqref{eq:R0}, we always have
\begin{equation*}
\frac{I_{+\Delta}-I}{\Delta}\le \gamma(\cR_0S-1)I.
\end{equation*}
Since by \eqref{eq:SI.s} the fraction of susceptible agents $S$ always decreases over time, the fraction of infected agents $I$ decreases over time once
\begin{equation}
\cR_0S\le 1\iff S\le 1/\cR_0 \label{eq:herd_immunity}
\end{equation}
holds, where $\cR_0$ is the basic reproduction number in \eqref{eq:R0}. We say that the society has achieved \emph{herd immunity} if \eqref{eq:herd_immunity} holds.

\section{Numerical analysis}\label{sec:num}

In this section we use a numerical example calibrated to the COVID-19 epidemic to study how the agents' optimizing behavior, diagnosis rate, and lockdown policies affect the epidemic dynamics.

\subsection{Model specification}
\label{subsec:num.spec}


One period corresponds to a day and we assume 5\% annual discounting, so $r=0.05/365.25$ and $\Delta=1$. Because 
it was reasonable to expect COVID-19 vaccines to be developed within a few years, we set the vaccine arrival rate to $\nu=1/365.25$. Unless otherwise stated, we suppose that the social planner has unlimited enforcement power ($\eta=0$ in \eqref{eq:Vlambda}--\eqref{eq:welfare}).

We set the epidemic parameters from the medical literature. The daily transmission rate is $\beta=1/5.4$ based on the median serial interval (the number of days it takes for an infected person to transmit the disease to another person) in the meta-analysis of \cite{RaiShuklaDwivedi2021}. The daily recovery rate is $\gamma=1/13.5$ based on the median infectious period estimated in \cite{You2020}. These numbers imply that the basic reproduction number in \eqref{eq:R0} is $\cR_0=\beta/\gamma=13.5/5.4=2.5$, which equals the current best estimate used by Centers for Disease Control and Prevention (CDC).\footnote{\label{fn:CDC}\url{https://www.cdc.gov/coronavirus/2019-ncov/hcp/planning-scenarios.html}} The infection fatality rate (IFR) is $\delta_0=\IFR$, which is the median value in the meta-analysis of \cite{Ioannidis2021} on studies that use seroprevalence data.\footnote{Unlike the case fatality rate (CFR), which is defined by the reported number of deaths divided by the reported number of cases, the estimation of IFR is complicated by the fact that cases and deaths may be underreported. The true number of cases can be estimated from a random antibody testing, which is called a seroprevalence survey. If we assume underreporting in deaths is not severe, then IFR can be estimated by dividing the reported number of deaths by the estimated number of cases.}

The choice of the diagnosis rate $\sigma$ is more controversial. One possibility is to estimate the case fatality rate (CFR) $\delta$ and set $\sigma=\delta_0/\delta$ based on \eqref{eq:deltasigma}. Using the data on cumulative number of reported cases and deaths provided by Johns Hopkins University Center for Systems Science and Engineering,\footnote{\label{fn:CSSE}\url{https://github.com/CSSEGISandData/COVID-19}} we find that the median CFR across all (more than 200) regions is $\delta=\CFR$. This would imply $\sigma=\delta_0/\delta=0.2$ (20\%). However, this calculation ignores other information such as the presence of symptoms. Another possibility is to set $\sigma$ as the fraction of symptomatic agents. Based on the case study of the cruise ship Diamond Princess, which experienced a COVID-19 outbreak in February 2020 and whose passengers were all tested, \cite{Mizumoto2020} document that about 50\% of confirmed cases were asymptomatic. Noting that the symptoms of COVID-19 are often similar to other upper respiratory infections and not specific enough to confirm the diagnosis, as a compromise, we choose an intermediate value $\sigma=0.4$ for the diagnosis rate in our baseline analysis. Subsequent to our benchmark calculations, we study the consequences of varying $\sigma$, and find that such variation has only a small impact on aggregate welfare. 


We set the minimum action to $\ubar{a}=0.01$ (which is somewhat arbitrary but never binds in simulations). The utility function of a symptom-free agent exhibits constant relative risk aversion (CRRA) $\alpha>0$, so
\begin{equation}
u(a)=\begin{cases}
\frac{a^{1-\alpha}-1}{1-\alpha}, & (0<\alpha\neq 1)\\
\log a, & (\alpha=1)
\end{cases}\label{eq:CRRA}
\end{equation}
which satisfies Assumption \ref{asmp:u}. We suppose $u_I=u$, so the optimal action for known infected agents is $a_I=1$. Although this assumption is unrealistic because infected agents may be incapacitated, altruistic, or quarantined, it provides the most conservative (worst case) analysis. For numerical illustrations, we set $\alpha=1$ (log utility). We calibrate the flow utility from death to $u_D=\uD$ based on the case study from Sweden, which did not introduce mandatory lockdowns (see Appendix \ref{sec:uD} for details). Finally, we set the initial condition to $I_0=10^{-6}$, $S_0=1-I_0$, $R_0=0$, and $D_0=0$.

\subsection{Equilibrium, static efficient, and efficient actions}\label{subsec:num.eq}

We solve for the perfect Bayesian Markov competitive equilibrium using the algorithm discussed in Appendix \ref{sec:solve}. As a point of comparison, we also solve for the myopic allocation in which all agents choose $a=1$, as well as the static efficient and efficient actions, which correspond to setting $\eta=\infty$ and $\eta=0$ in \eqref{eq:Vlambda}--\eqref{eq:welfare}, respectively. Figure \ref{fig:SIRD} shows the epidemic dynamics studied in Section \ref{subsec:dynamics} for the myopic equilibrium, which is the standard \cite{KermackMcKendrick1927} SIR model. Here and elsewhere, the vertical dashed line indicates the first date of achieving herd immunity as defined by \eqref{eq:herd_immunity}.

\begin{figure}[!htb]
\centering
\includegraphics[width=0.7\linewidth]{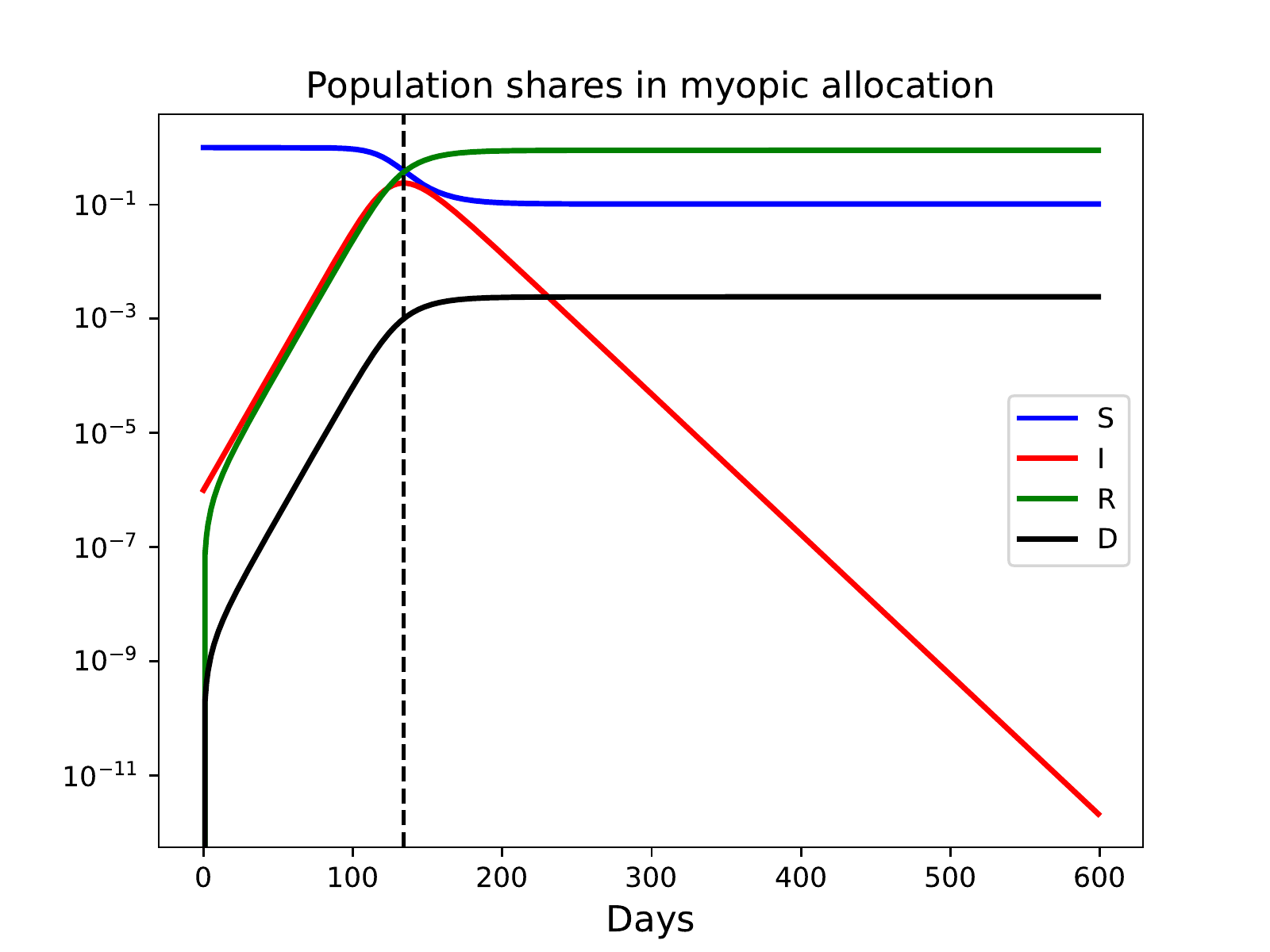}
\caption{Epidemic dynamics in myopic equilibrium.}\label{fig:SIRD}
\end{figure}

When agents are myopic and choose $a=1$, as is well known, the epidemic has two phases: in the first phase, the fraction of infected agents initially grows exponentially (at daily rate approximately $\beta-\gamma=\gamma(\cR_0-1)$ by setting $S=1$ in \eqref{eq:KM.i}) until the society achieves herd immunity (peak prevalence is \ImaxMy \%); in the second phase, the fraction of infected decays exponentially at daily rate $\gamma$. The epidemic dynamics significantly changes once we introduce optimizing behavior. Figure \ref{fig:eq} shows the epidemic dynamics (left panels) and contour plots of recommended actions over the state space (right panels). The top and bottom panels are for the Markov equilibrium and efficient allocations, respectively.

\begin{figure}[!htb]
\centering
\includegraphics[width=0.48\linewidth]{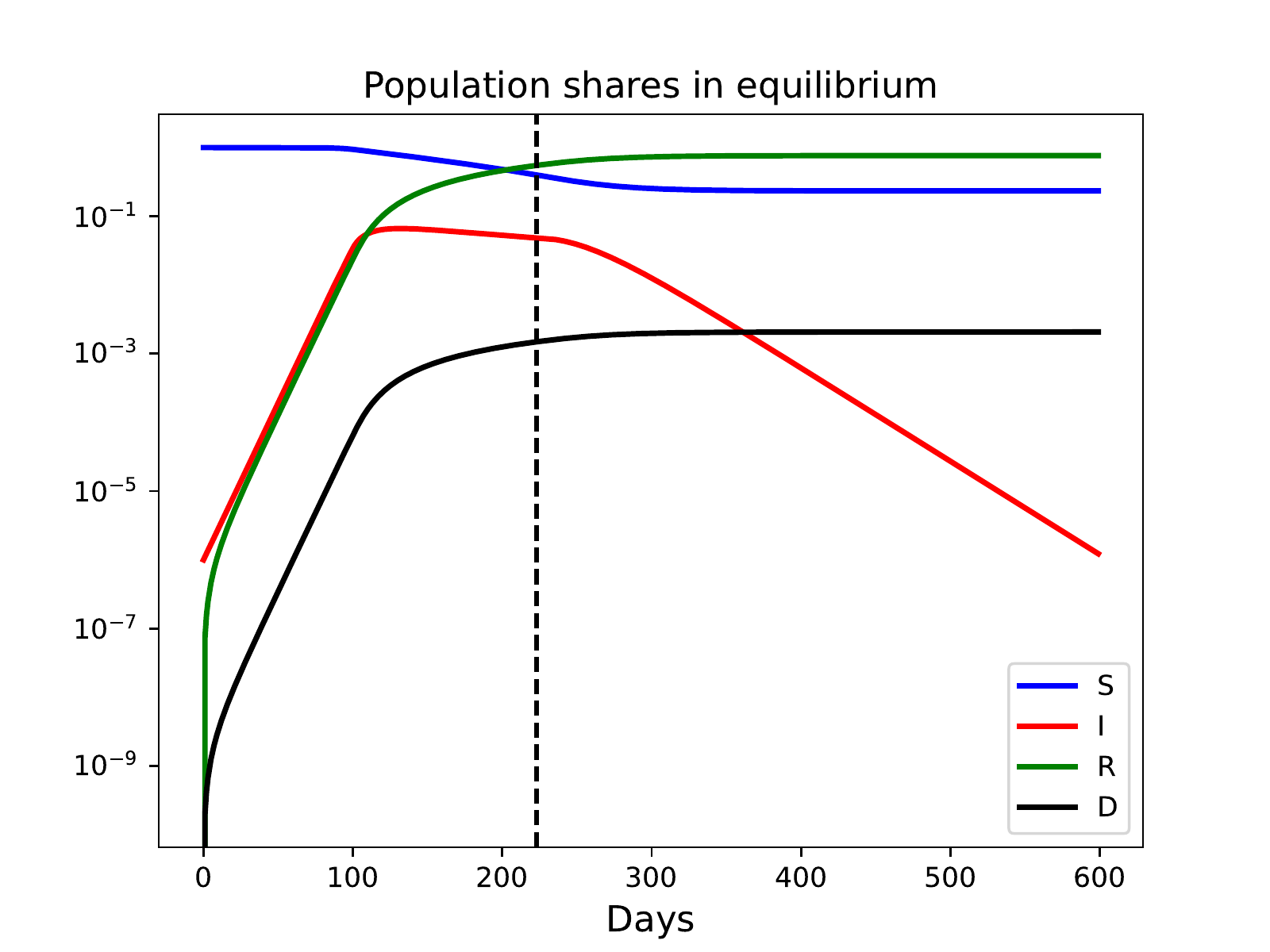}
\includegraphics[width=0.48\linewidth]{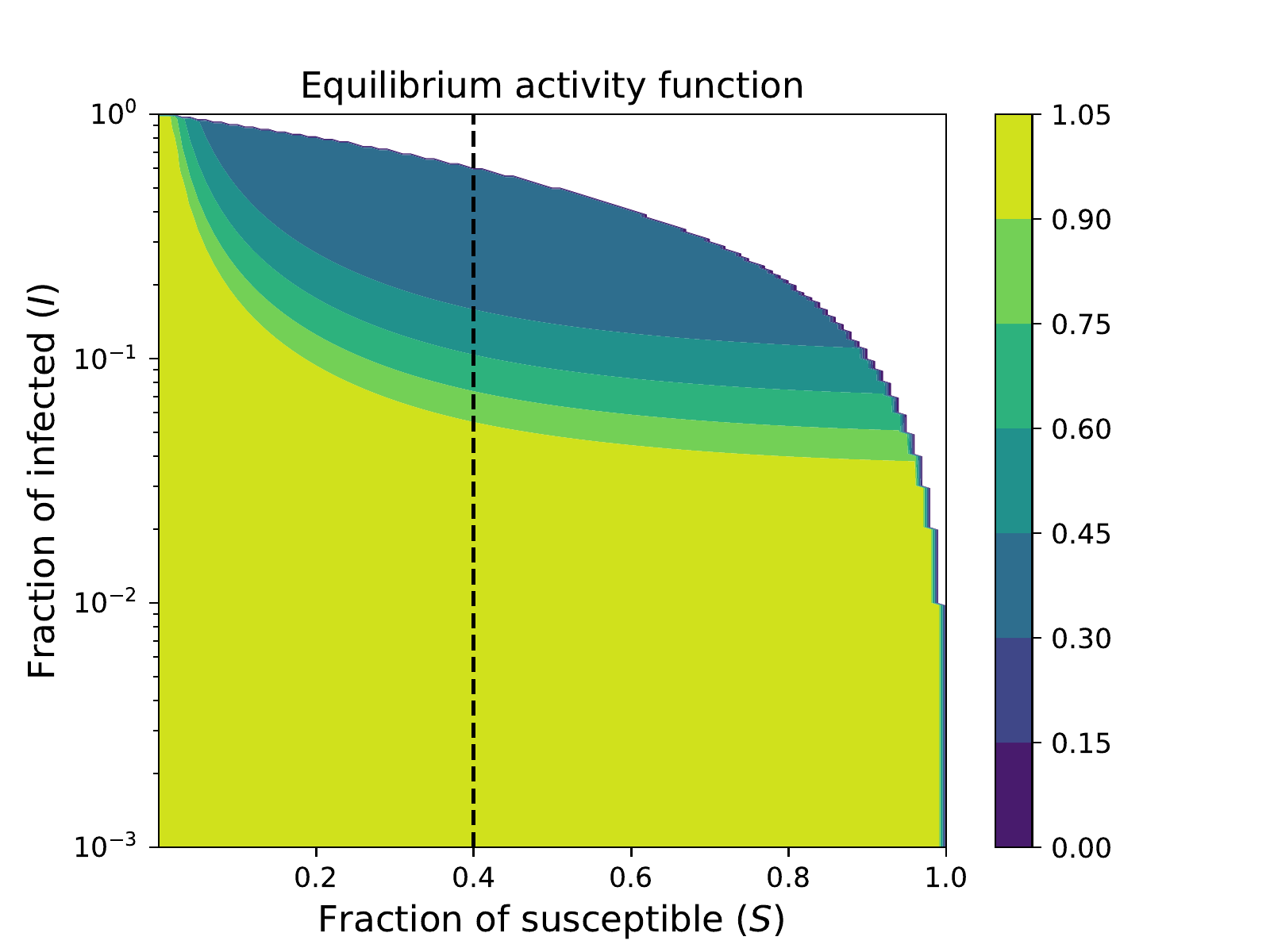}
\includegraphics[width=0.48\linewidth]{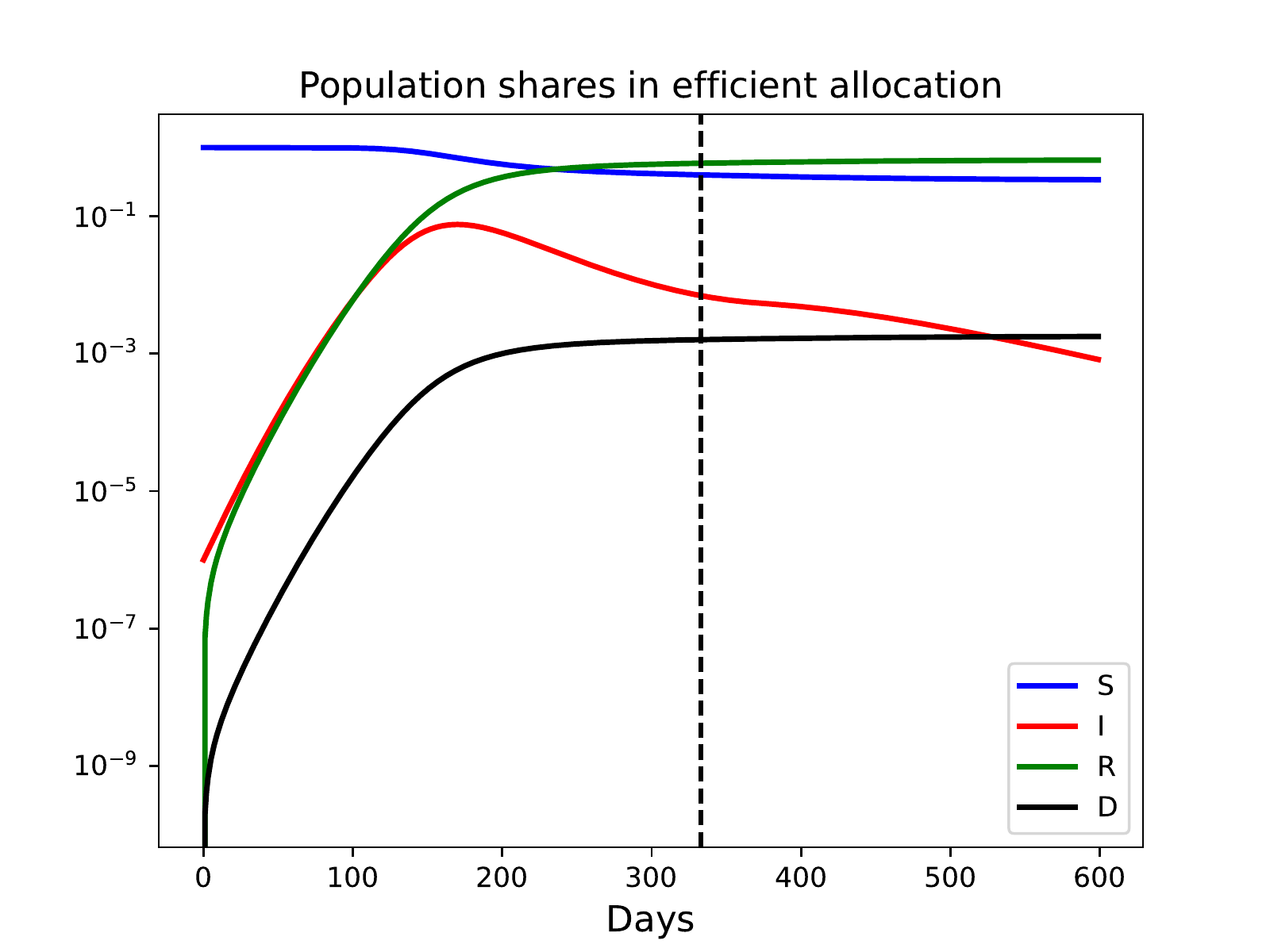}
\includegraphics[width=0.48\linewidth]{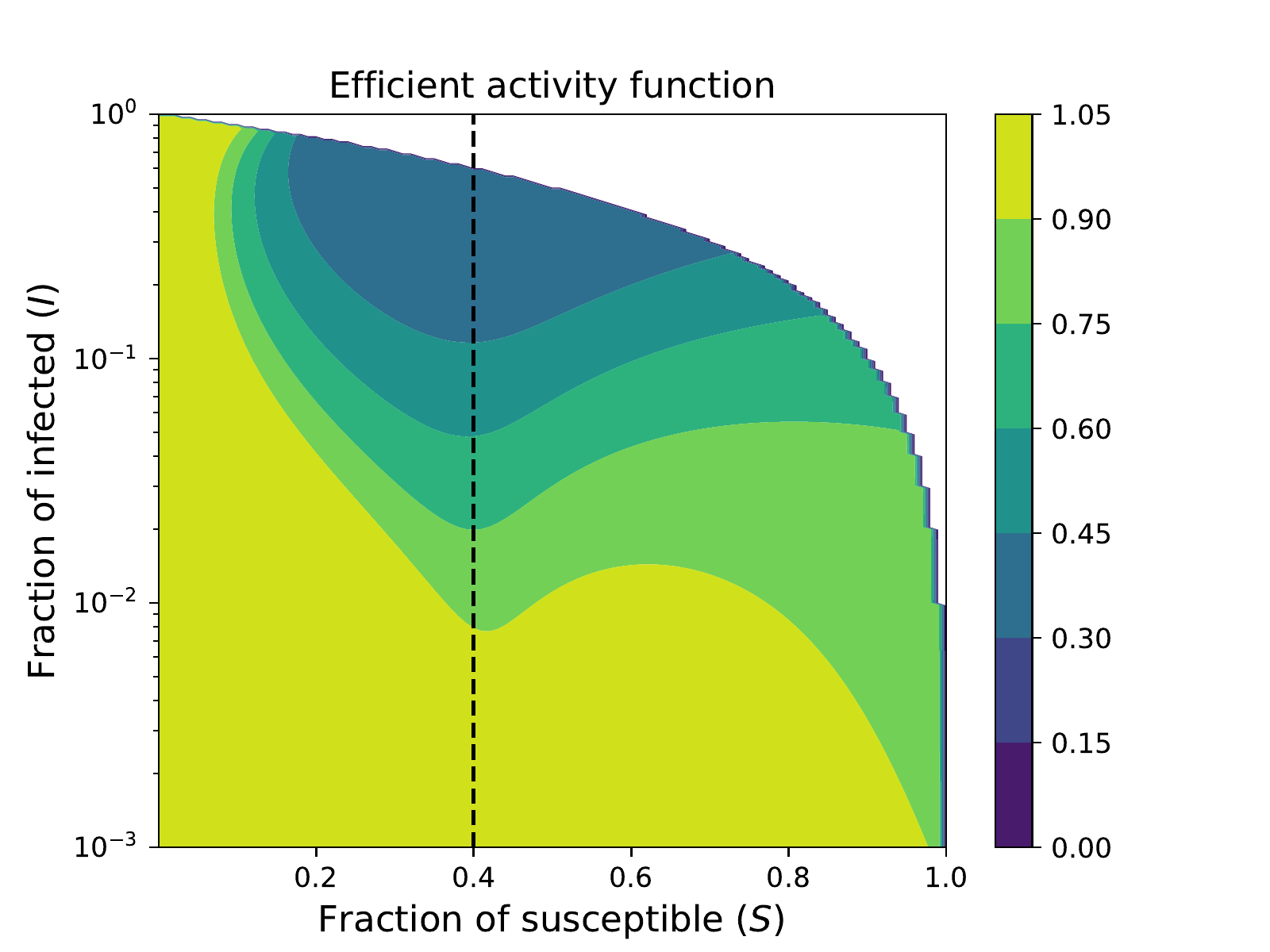}
\caption{Epidemic dynamics and recommended actions in equilibrium (top) and efficient (bottom) allocations.}\label{fig:eq}
\end{figure}

We now make a few observations regarding efficient and equilibrium allocations. First, when agents are forward-looking, the epidemic has three phases: during the first phase, the disease spreads freely and we see exponential growth (peak prevalence is \ImaxEq\%); during the second phase, $U$ agents voluntarily practice social distancing (set $a_U^*<1$) and the disease spread is endogenously mitigated; during the third phase, the society achieves herd immunity and the fraction of infected decays exponentially. Second, the epidemic dynamics and recommended action for the solution to the planner's problem (bottom panels) are qualitatively different from the Markov equilibrium (top panels). The planner who can enforce and commit to a lockdown policy finds it optimal to substantially reduce the action of unknown agents (bottom right), especially so when the population share of susceptible agents is high or the society is about to achieve herd immunity. In the resulting dynamics (bottom left), the fraction of infected agents both grows and declines more slowly, and it takes almost 50\% longer to reach herd immunity. Figure \ref{fig:aUT}  (left) plots the time paths of recommended actions (equilibrium and efficient) corresponding to the dynamics in the left panels of Figure \ref{fig:eq}. As expected, we see that the efficient action is almost everywhere below that of the equilibrium action, implying that the planner wishes to curb activity more quickly and for longer than do the individual agents. Further, compared with the equilibrium allocation, the reduction in activity in the efficient allocation begins earlier, is more gradual, and extends well beyond the date at which activity has returned to normal in the equilibrium allocation. 

\begin{figure}[!htb]
\centering
\includegraphics[width=0.48\linewidth]{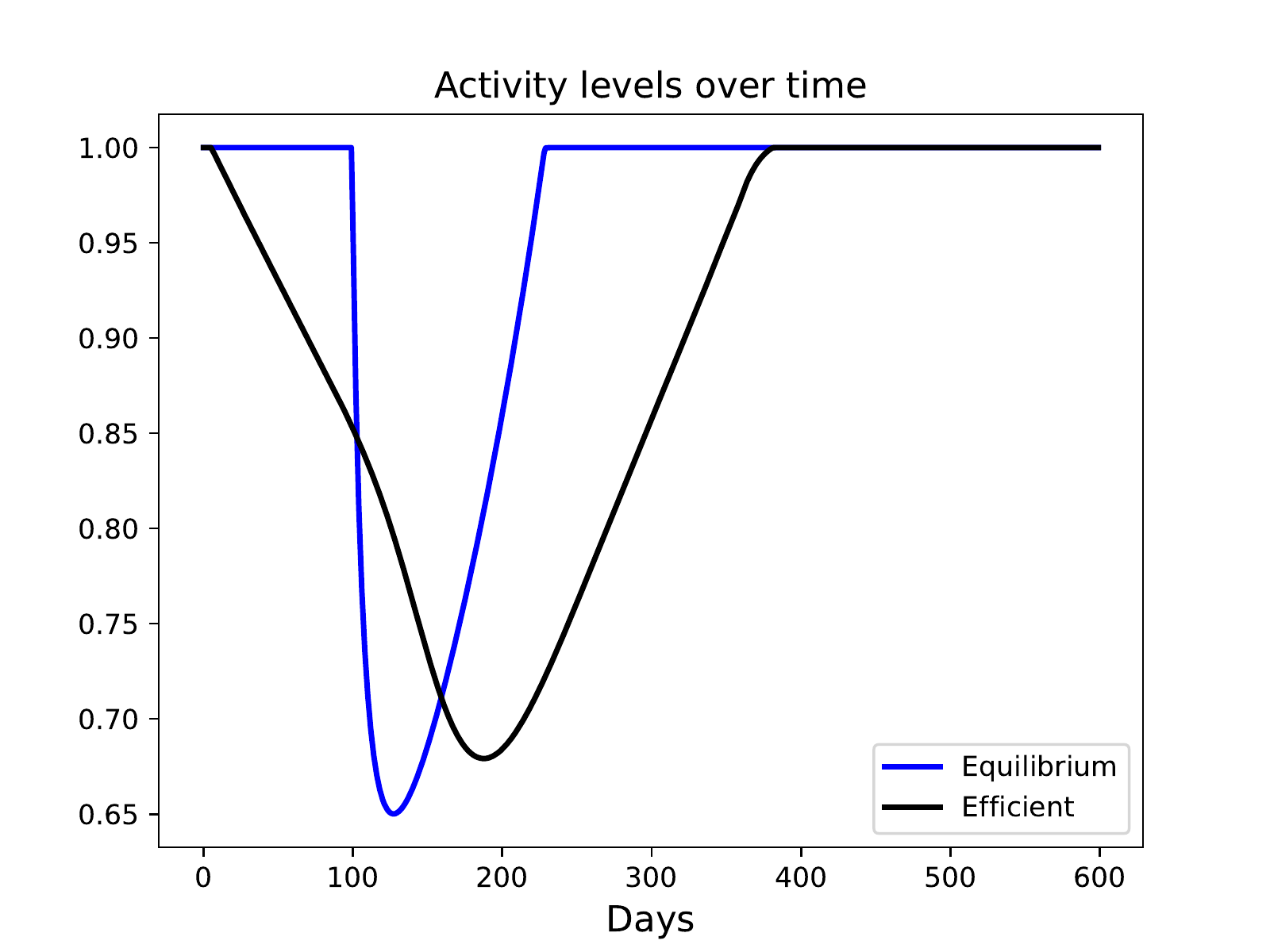}
\includegraphics[width=0.48\linewidth]
{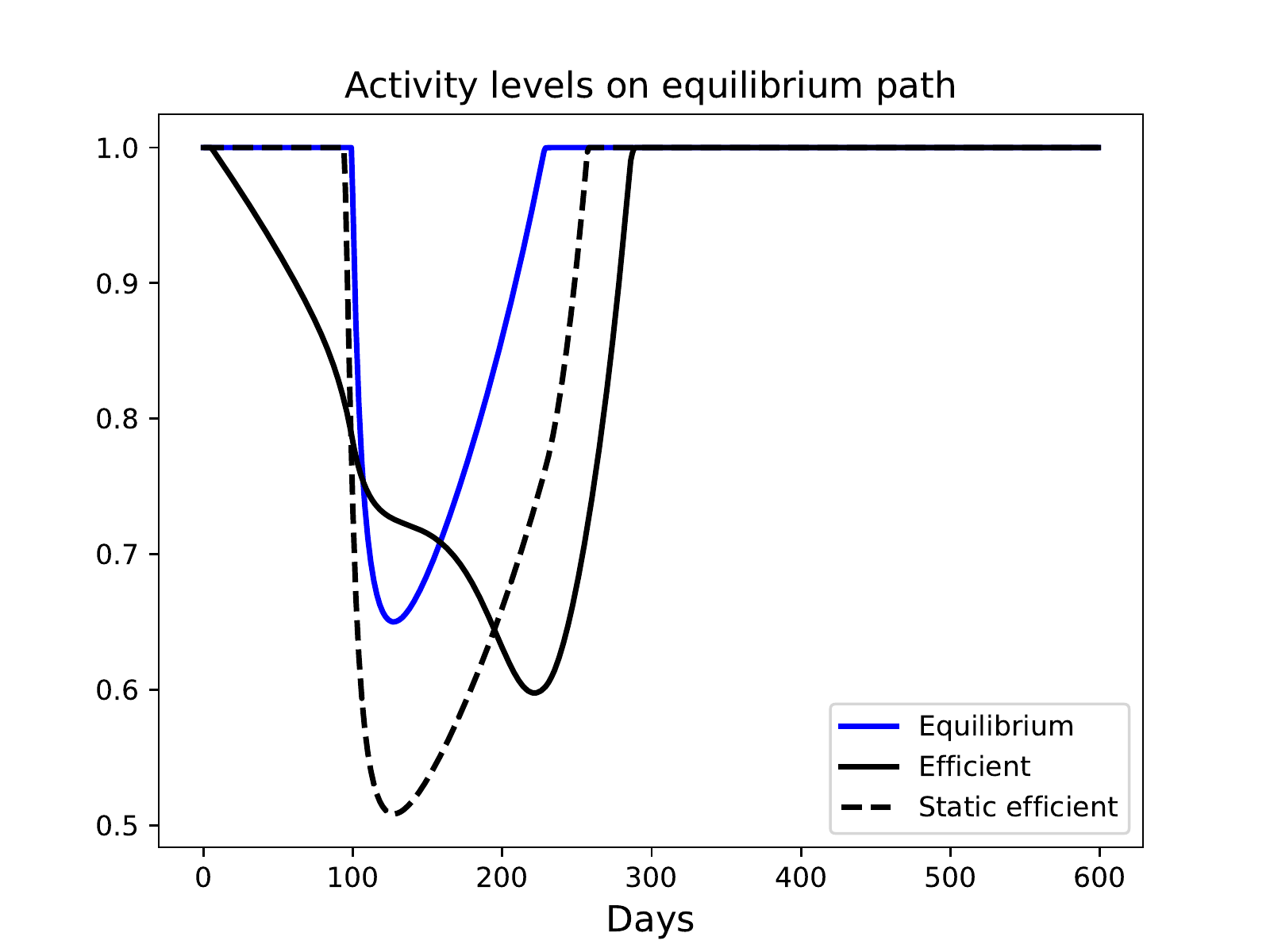}
\caption{Recommended activity over time and along equilibrium paths.}\label{fig:aUT}
\end{figure}

However, as we noted in the introduction, the efficient and equilibrium time paths plotted in Figure \ref{fig:aUT} (left) do not necessarily provide adequate guidance to a government who was either slow to react to the initial infection, or who has limited ability to carry out their desired intervention. Figure \ref{fig:aUT} (left) ought to be interpreted as plotting the activity levels across two fictitious countries, one that pursued no non-pharmaceutical interventions at all, and one that followed the optimal path from the outset of the virus. Therefore it is instructive to complement the foregoing and examine how the recommendations of governments with varying degrees of enforcement ability vary along the \emph{same} path for the state variables.

To analyze this point and to illustrate the theoretical analysis in Section \ref{subsec:external}, Figure \ref{fig:aUT} (right) plots the equilibrium level of activity together with the static efficient action ($\eta=\infty$) and the action the planner would recommend ($\eta=0$) along the equilibrium path. As Theorem \ref{thm:aU_se} suggests, the static efficient path is everywhere below that of the equilibrium path, and so a government who can only enforce lockdowns for a short period of time would unambiguously wish to do so. However, it does not follow that a government with the ability to perfectly enforce activity levels in perpetuity would wish to reduce activity. Indeed, in this example the difference between the efficient and equilibrium activity levels cannot be unambiguously signed, for at some points in the middle of the pandemic the activity choice of the planner exceeds that of the agents in equilibrium. We interpret this last observation to illustrate that relative to the efficient allocation, the competitive equilibrium exhibits inefficiently volatile consumption, 
with abrupt changes over the time that the planner may wish to avoid with more gradual increases and decreases in activity. 

\subsection{The importance of unknown infected agents}\label{subsec:num.sigma}

We have so far assumed that unknown infected agents account for $1-\sigma=60\%$ of all the infected agents. To illustrate the role of imperfect testing and reporting (and to therefore relate our results to the existing literature) we compute the welfare cost and death toll across different specifications for the diagnosis rate $\sigma$. For welfare, we use the utilitarian social welfare $\cW$ in \eqref{eq:wz} and apply the inverse utility function $u^{-1}$ to convert into units of activity. Since the welfare without epidemic equals the highest action 1, we can compute the welfare cost of the epidemic given the current state $z\in Z$ as
\begin{equation}
C(z)\coloneqq 1-u^{-1}(\cW(z)).\label{eq:wcost}
\end{equation}
If we identify activity with current output, we can interpret $C(z)$ as the fraction of \emph{permanent} consumption the society is willing to give up to avoid the epidemic. For the death toll, we depict the expected cumulative death (per 100,000 population), taking into consideration the stochastic nature of vaccine arrival.

Figure \ref{fig:welfare_death} shows the results, where the series labeled ``Myopic'', ``PBE'', ``SPP'' correspond to the myopic equilibrium (standard SIR model), perfect Bayesian Markov competitive equilibrium, and the social planner's problem (efficient action), respectively. In the equilibrium allocation, the epidemic has a large welfare cost of about 1.8\% reduction in permanent consumption. Interestingly, for all diagnosis rates examined, the difference in welfare is larger between the equilibrium and myopic allocation than between the equilibrium and efficient allocation, implying that the welfare gains from the optimal lockdown policy are modest for the benchmark parameters. Furthermore, both the welfare loss and expected death toll are increasing in the diagnosis rate $\sigma$. This is because when testing improves, part of the infected agents switch from unknown to known, take full action $a_I=1$, and infect susceptible agents. This example shows that improving testing without enforcing quarantine could (in principle) worsen welfare.

\begin{figure}[!htb]
\centering
\includegraphics[width=0.48\linewidth]{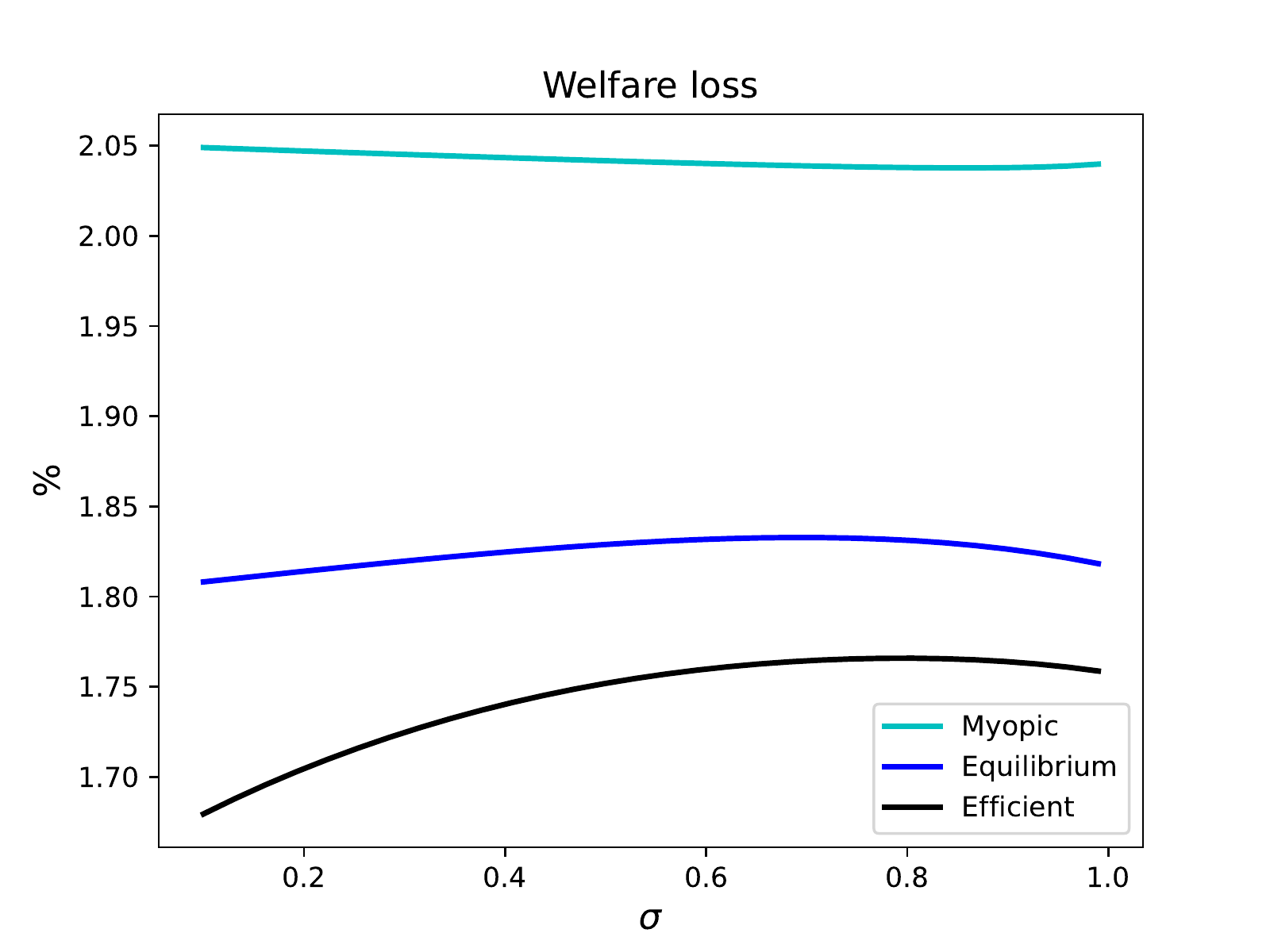}
\includegraphics[width=0.48\linewidth]{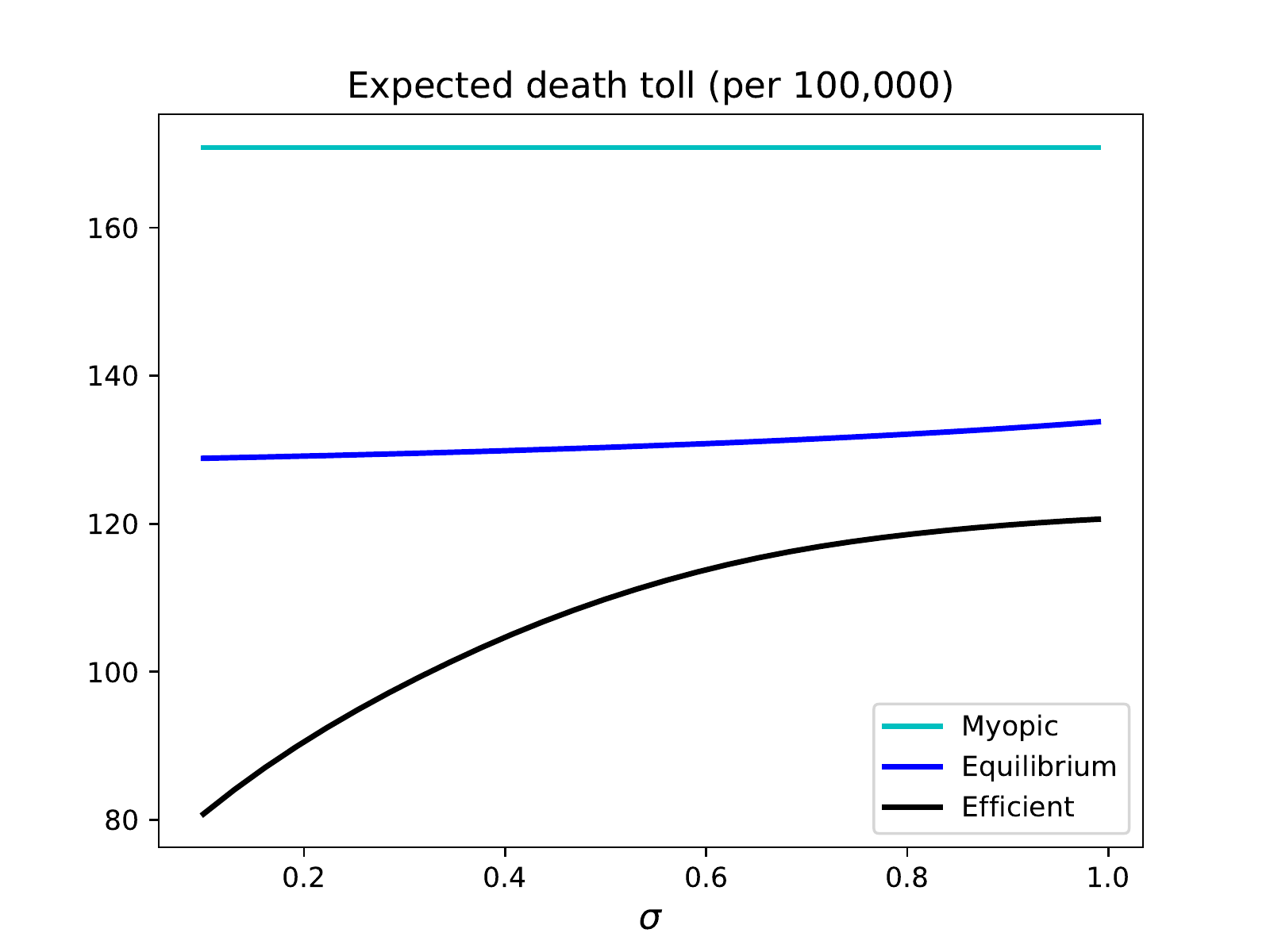}
\caption{Welfare loss and death toll.}\label{fig:welfare_death}
\end{figure}

\subsection{The importance of quarantine policies} \label{subsec:num.quar}

The analysis so far has assumed that the $I_k$ (known infected) agents choose the highest action $a_I = 1$. Although this is unrealistic because $I_k$ agents may self-isolate due to incapacitation or altruism, it provides the most conservative analysis (worst-case scenario). As a complementary analysis, we now solve the model assuming that $I_k$ agents choose $a_I = 0.4$ yet enjoy the highest possible utility ($u_I = 0$). The choice of $a_I = 0.4$ is motivated by the empirical study of \cite{He2020}, who document that 44\% of secondary cases were infected during the presymptomatic stage (prior to diagnosis) of the primary cases. This assumption corresponds to the immediate isolation of the infected cases upon diagnosis, which provides the best-case analysis.

Figure \ref{fig:welfare_death_q} shows the welfare cost and death toll with maximal quarantine. In each case, the welfare gains from quarantine are substantial even with relatively low diagnosis rate $\sigma$. Relative to the equilibrium outcome, the additional welfare gains from the optimal lockdown policy are modest, but now exceed the welfare gains from self-interested behavior for many diagnostic rates. As is intuitive, the difference between the high and low quarantine allocations is particularly stark when the diagnosis rate is high (which increases the effectiveness of quarantine). In the extreme case of $\sigma=1$, individuals take no precautions in either the efficient or equilibrium allocations, since the effective reproduction number in \eqref{eq:Rz} satisfies $\cR<1$ and the epidemic is essentially avoided.

\begin{figure}[!htb]
\centering
\includegraphics[width=0.48\linewidth]{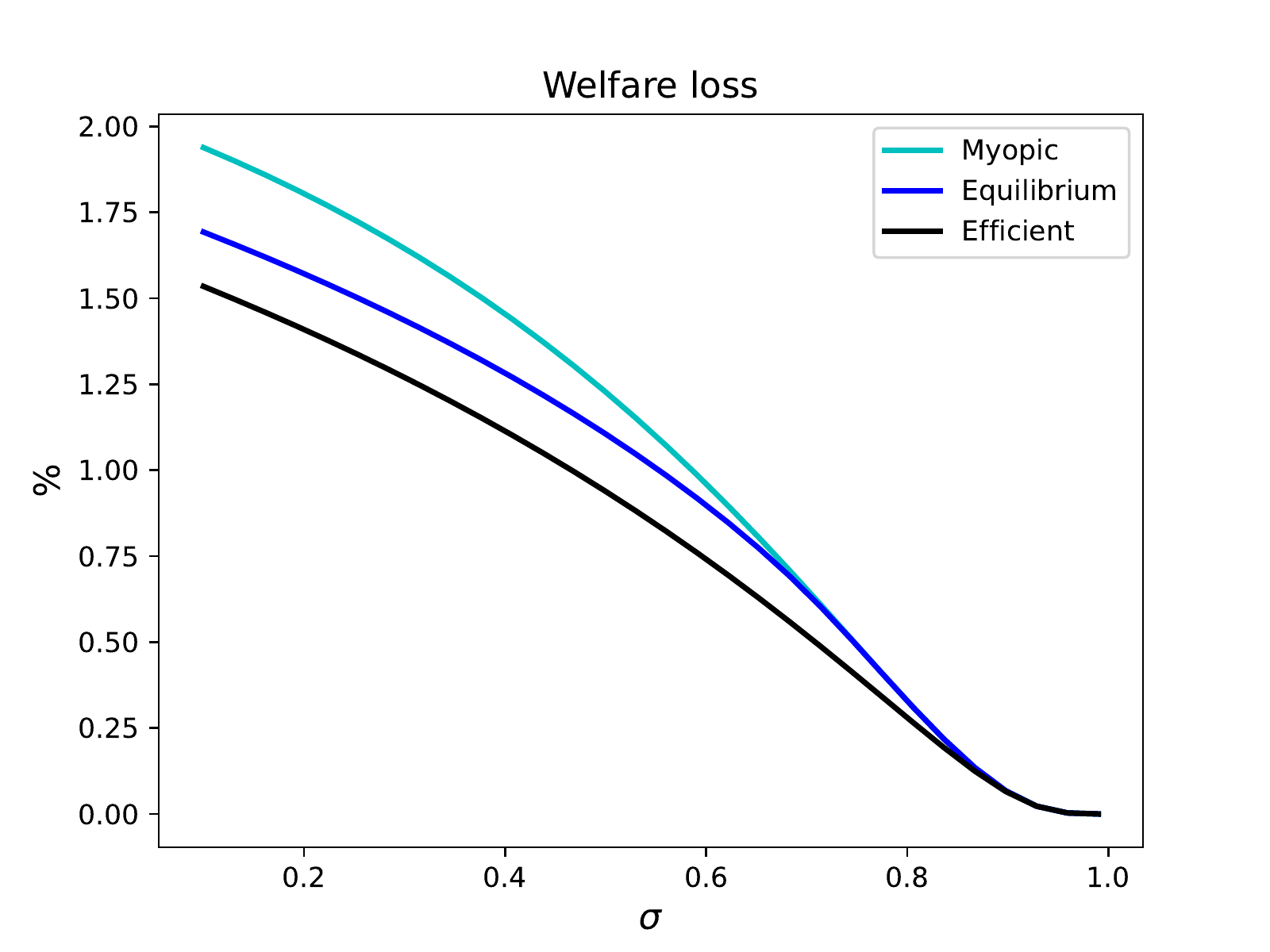}
\includegraphics[width=0.48\linewidth]{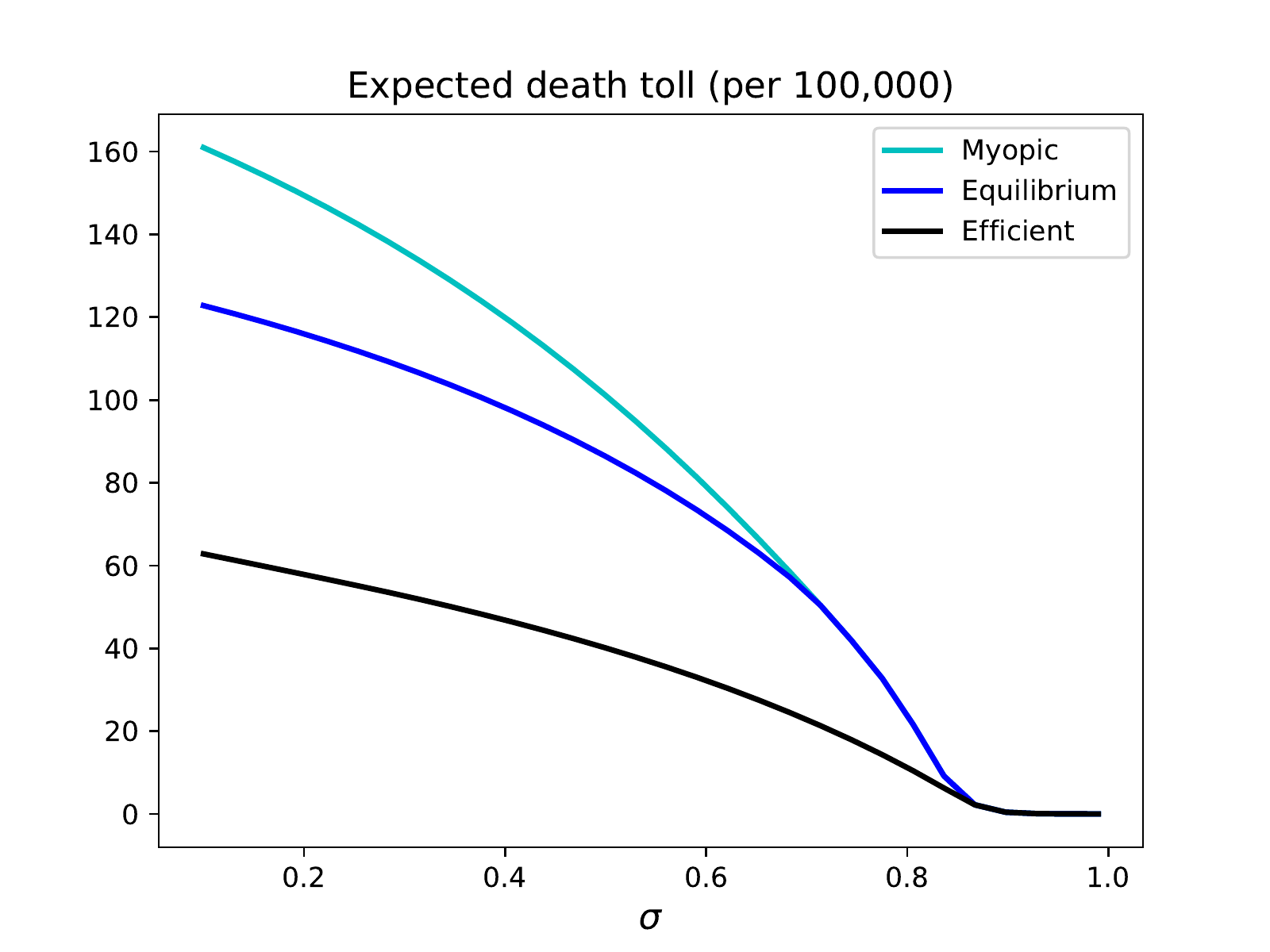}
\caption{Welfare loss and death toll with quarantine.}\label{fig:welfare_death_q}
\end{figure}

\subsection{The importance of vaccine arrival}\label{subsec:num.nu}

Our model has so far assumed that the vaccine is expected to arrive in roughly {\Tvac} year. Consequently, it is likely that the vaccine will have arrived before the attainment of herd immunity. How much does this assumption affect the optimal policies? Should we expect qualitatively similar features to emerge when a vaccine is unlikely to be forthcoming? To explore this point, we solve for the equilibrium and efficient actions when the vaccine arrives very far into the future ($T=100$ years). Although the assumption that a vaccine is (effectively) not forthcoming is quite extreme and unrealistic for COVID-19, we include it here because it illustrates the important distinction between static and dynamic externalities and could be relevant for future pandemics. Figure \ref{fig:eq_nu0} shows the epidemic dynamics (left panels), contour plots of recommended actions over the state space (right panels), and the recommended activity levels, both over time and along the equilibrium path (bottom panel).

\begin{figure}[!htb]
\centering
\includegraphics[width=0.48\linewidth]{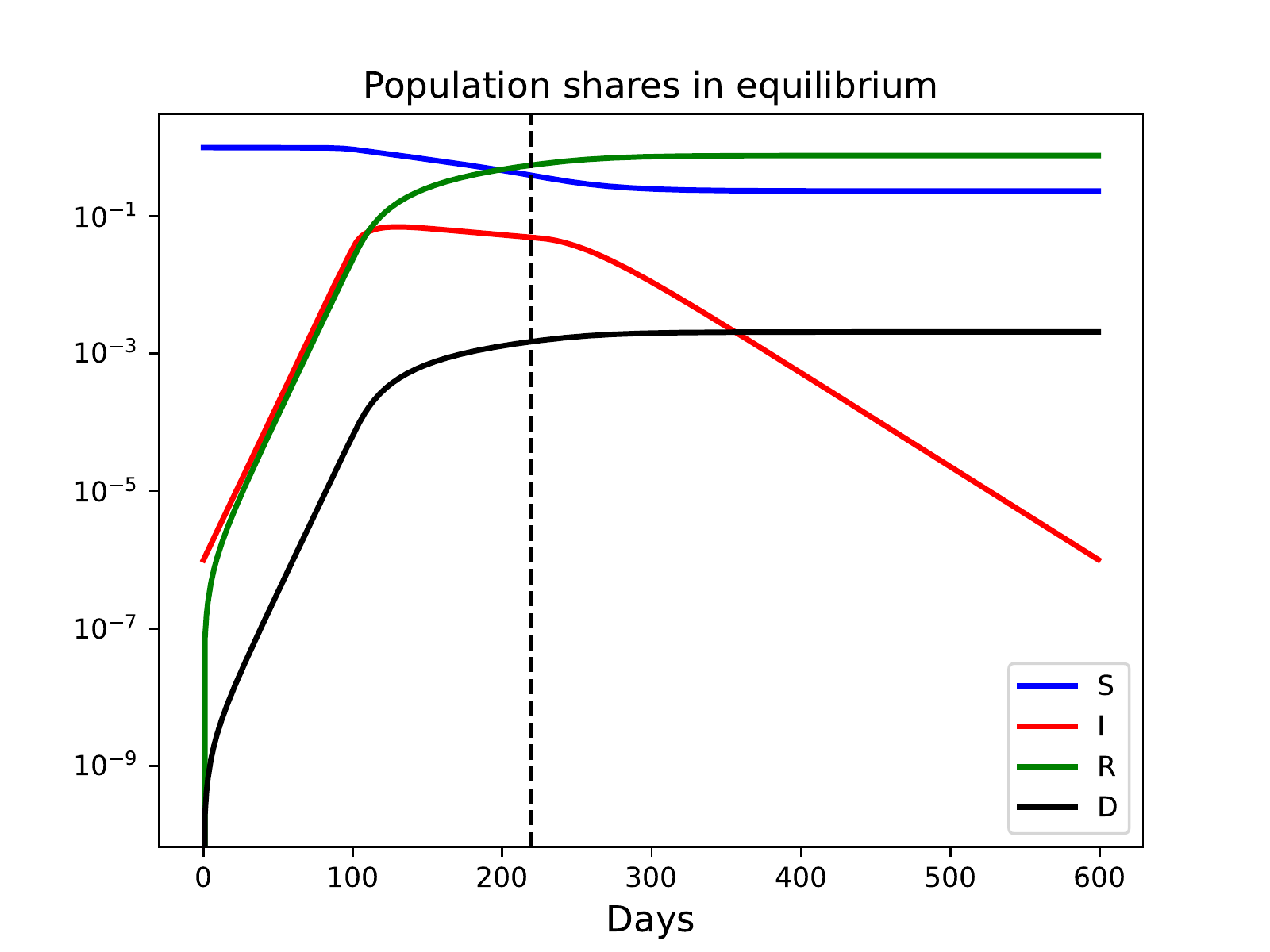}
\includegraphics[width=0.48\linewidth]{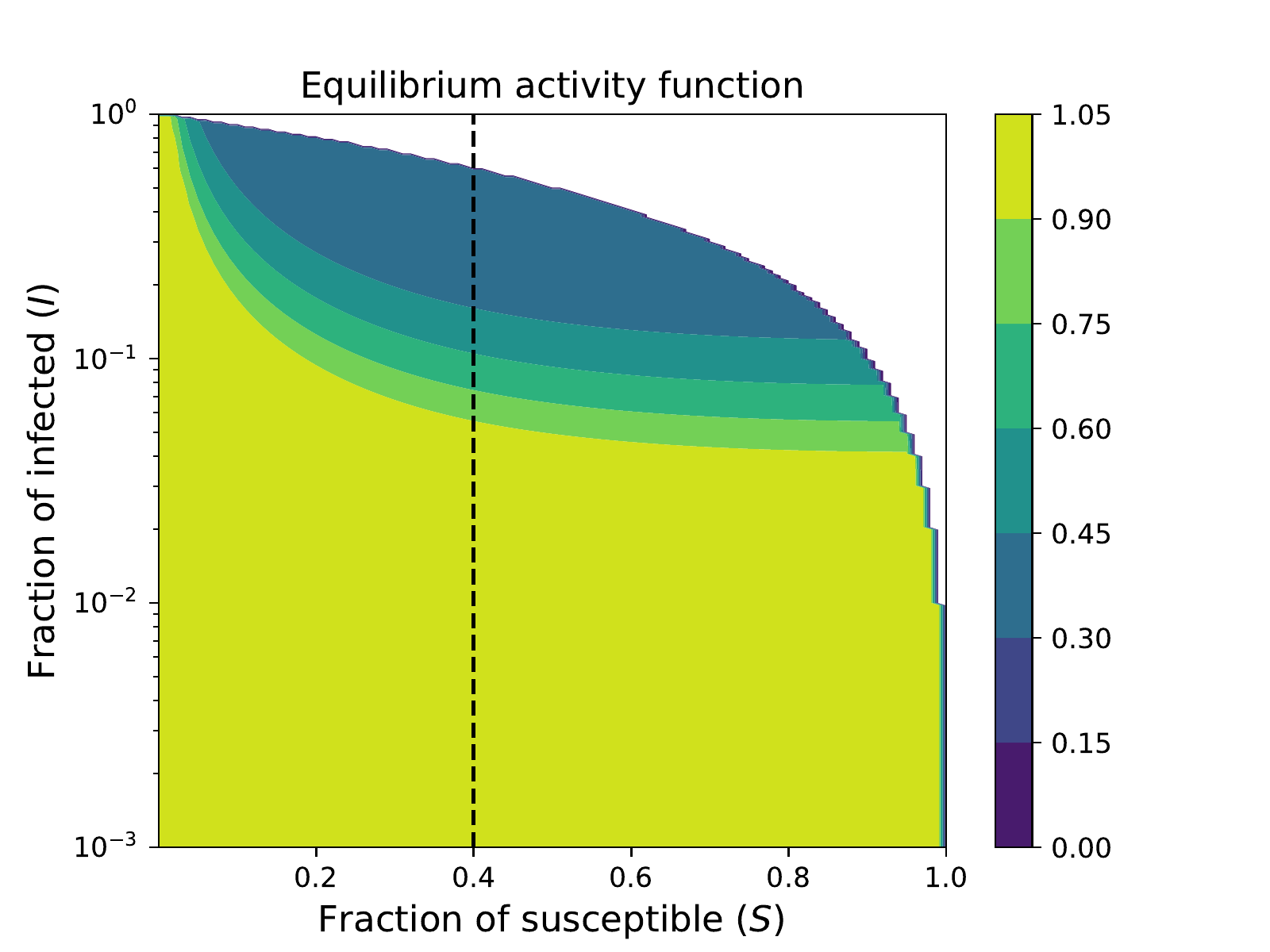}
\includegraphics[width=0.48\linewidth]{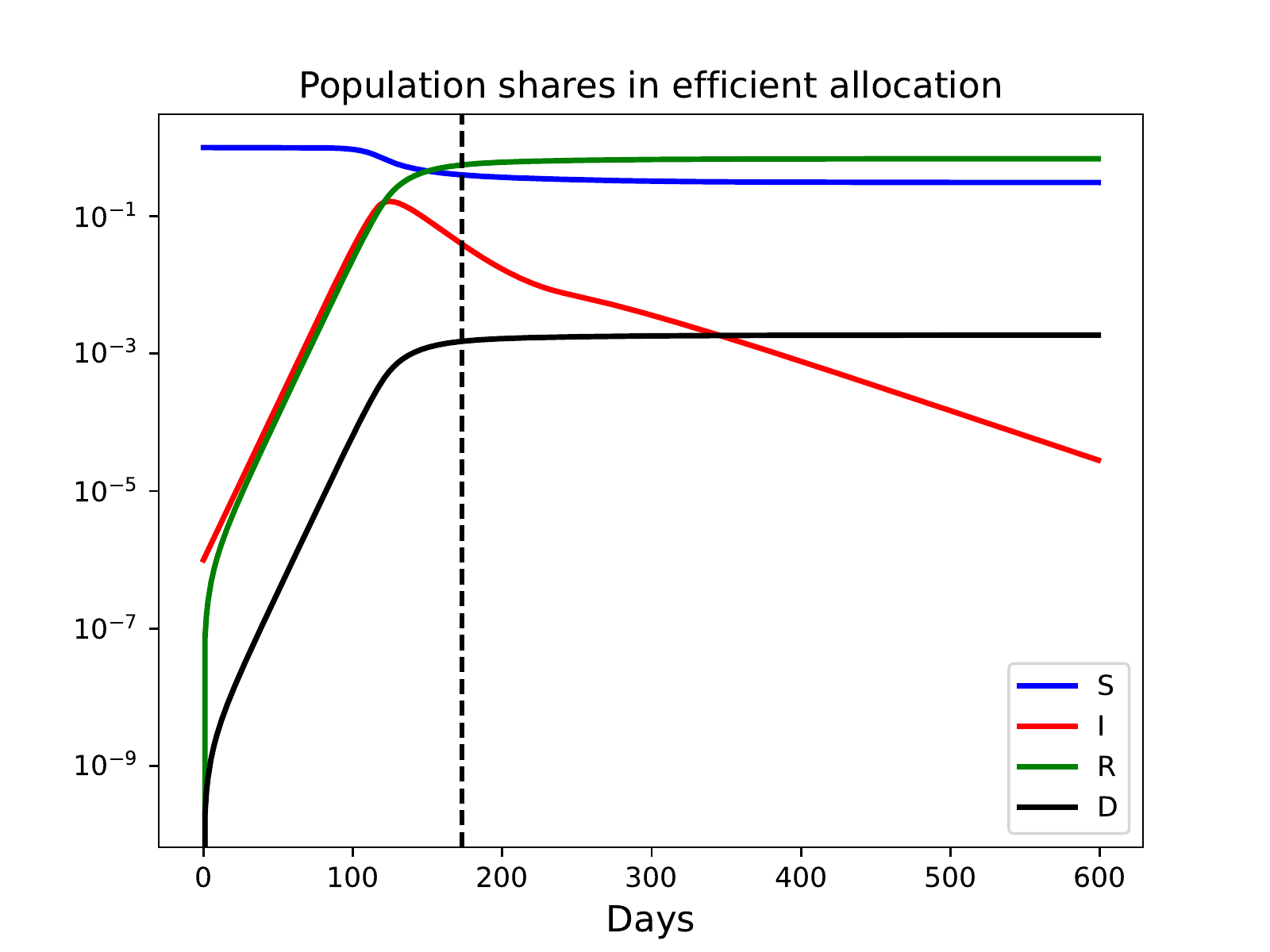}
\includegraphics[width=0.48\linewidth]{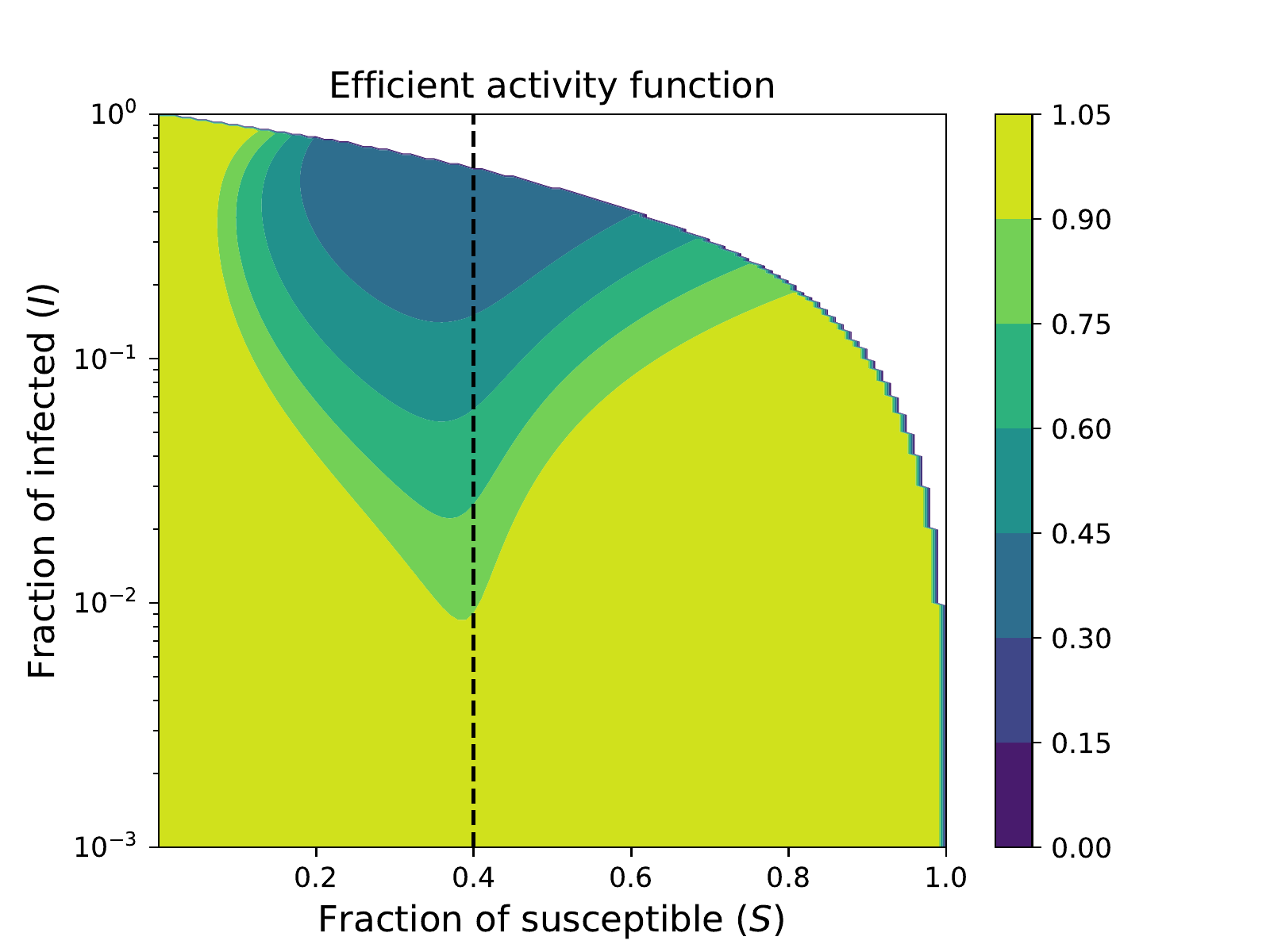}
\includegraphics[width=0.48\linewidth]{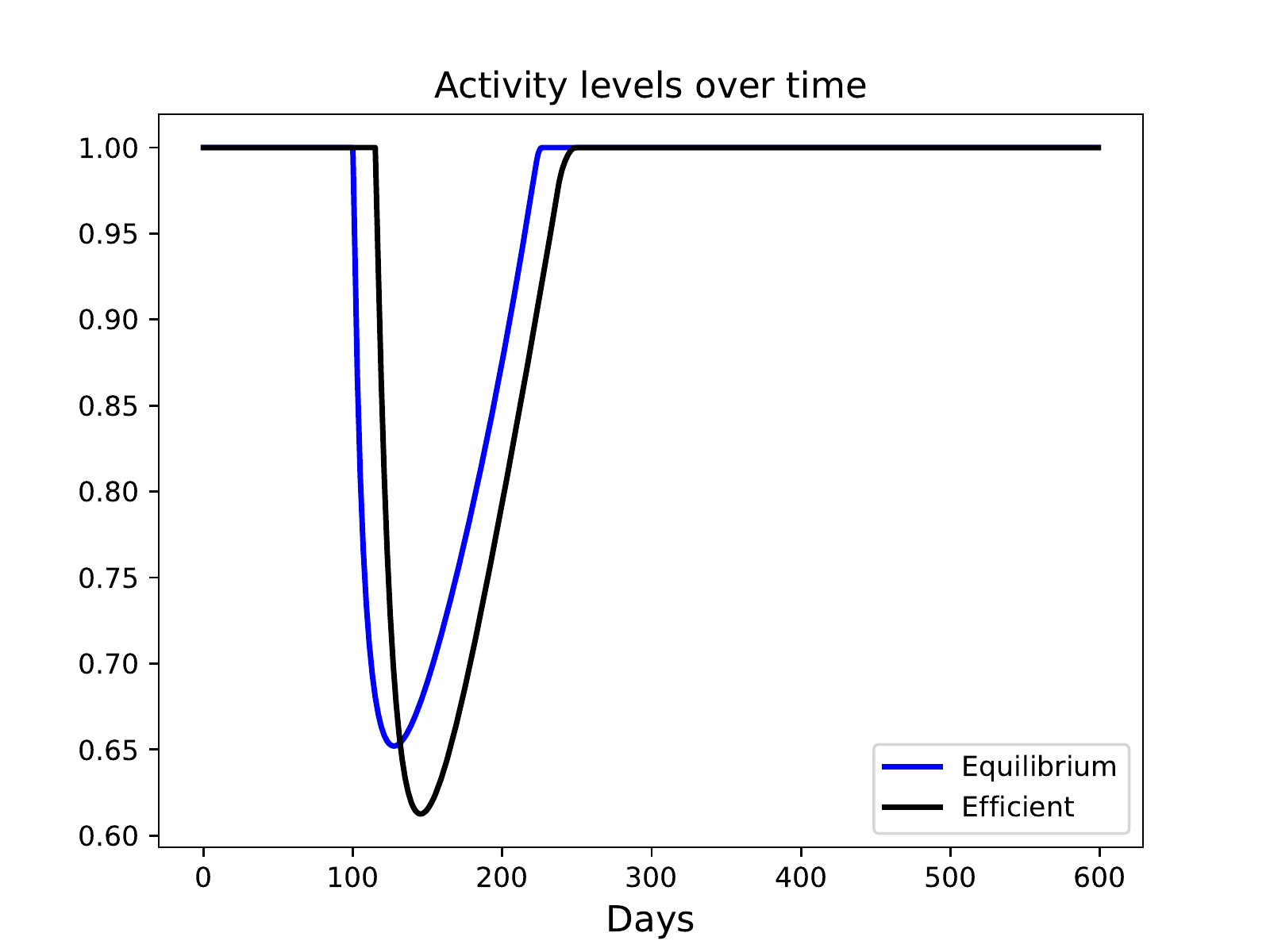}
\includegraphics[width=0.48\linewidth]{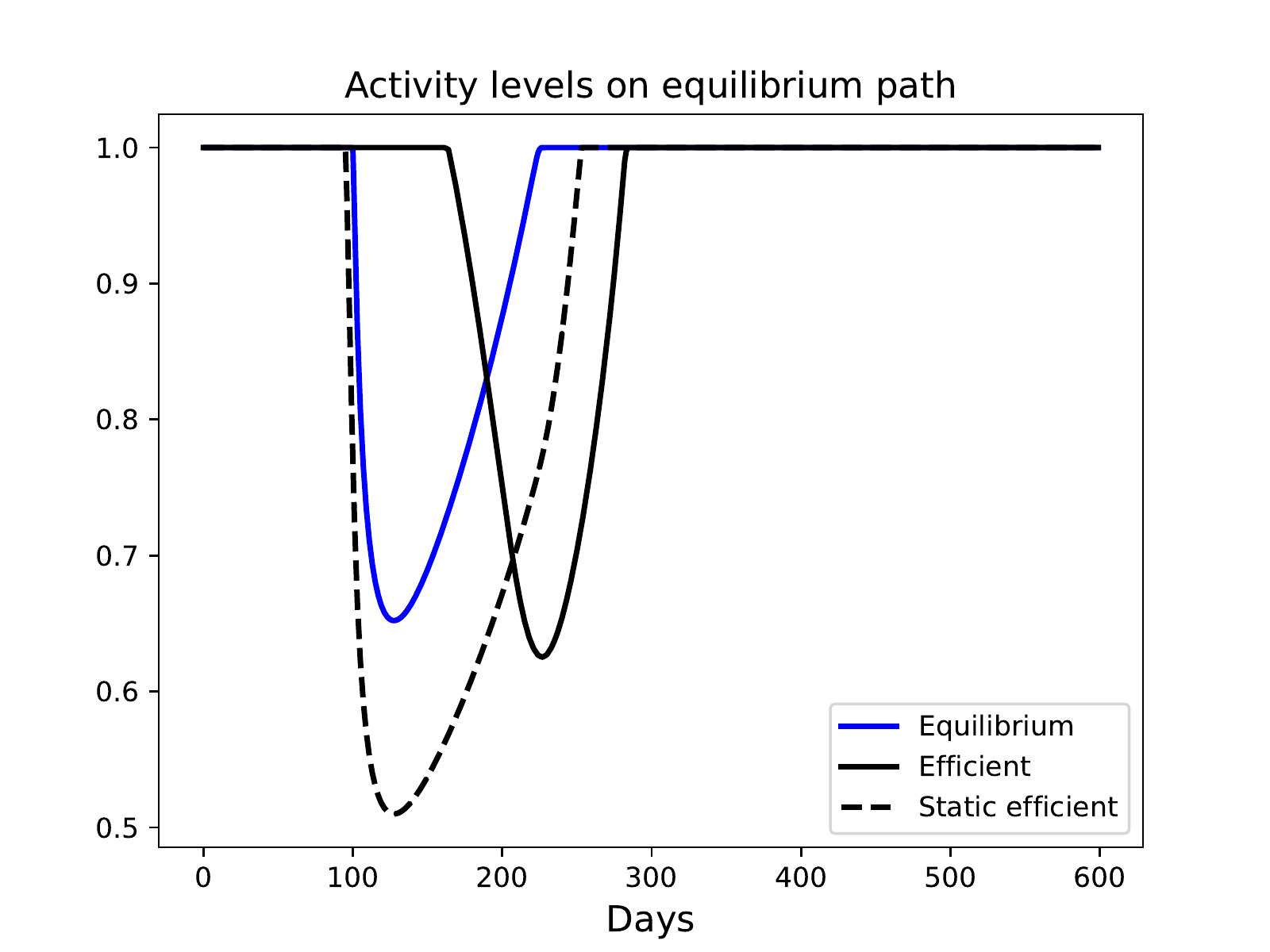}
\caption{Epidemic dynamics and recommended actions with $T=100$.}\label{fig:eq_nu0}
\end{figure}

Compared with the benchmark case (Figure \ref{fig:eq}, where $T=\Tvac$), the equilibrium dynamics and activity levels (top panels) in Figure \ref{fig:eq_nu0} are essentially identical, with herd immunity achieved after roughly 200 days. However, the optimal lockdown policy is qualitatively different. In contrast with Figure \ref{fig:aUT}, in which the reduction in activity is both immediate and gradual, the bottom left panel of Figure \ref{fig:eq_nu0} shows that early in the pandemic agents take high actions in the efficient allocation for longer than in the equilibrium allocation, so that the planner is effectively shifting the ``lockdown'' toward the later stages of the pandemic.


Our intuition for this result is that when the vaccine is unlikely to arrive until very far into the future, the only way in which the pandemic ends is through the attainment of herd immunity. In this case, the planner reduces excess deaths by ensuring that herd immunity is only barely attained.  Consequently, when compared with the equilibrium allocation, the planner first accelerates the reduction in susceptible shares before sharply reducing activity towards the middle and end of the pandemic in order to ensure that herd immunity is not excessively ``overshot''. As shown in the bottom right panel of Figure \ref{fig:eq_nu0}, this difference between equilibrium and efficient activity is starker when evaluated along equilibrium paths. The optimal activity recommended by the planner amounts to encouraging (discouraging) agents to take high action before (after) achieving herd immunity. This is despite the fact that the static efficient activity is always below the equilibrium level (as is assured by Theorem \ref{thm:aU_se} regardless of the rate of vaccine arrival). 

Figure \ref{fig:eq_nu0SD} reinforces this intuition, and shows that the share of susceptible agents is reduced more rapidly and plateaus more smoothly in the efficient allocation when compared with the equilibrium allocation, leading to both a shorter pandemic and an overall reduction in deaths. Compared with the efficient and equilibrium allocations, the pandemic in the myopic allocation is shorter (no precautions are taken) but the loss of life is much higher. 

\begin{figure}[!htb]
\centering
\includegraphics[width=0.48\linewidth]{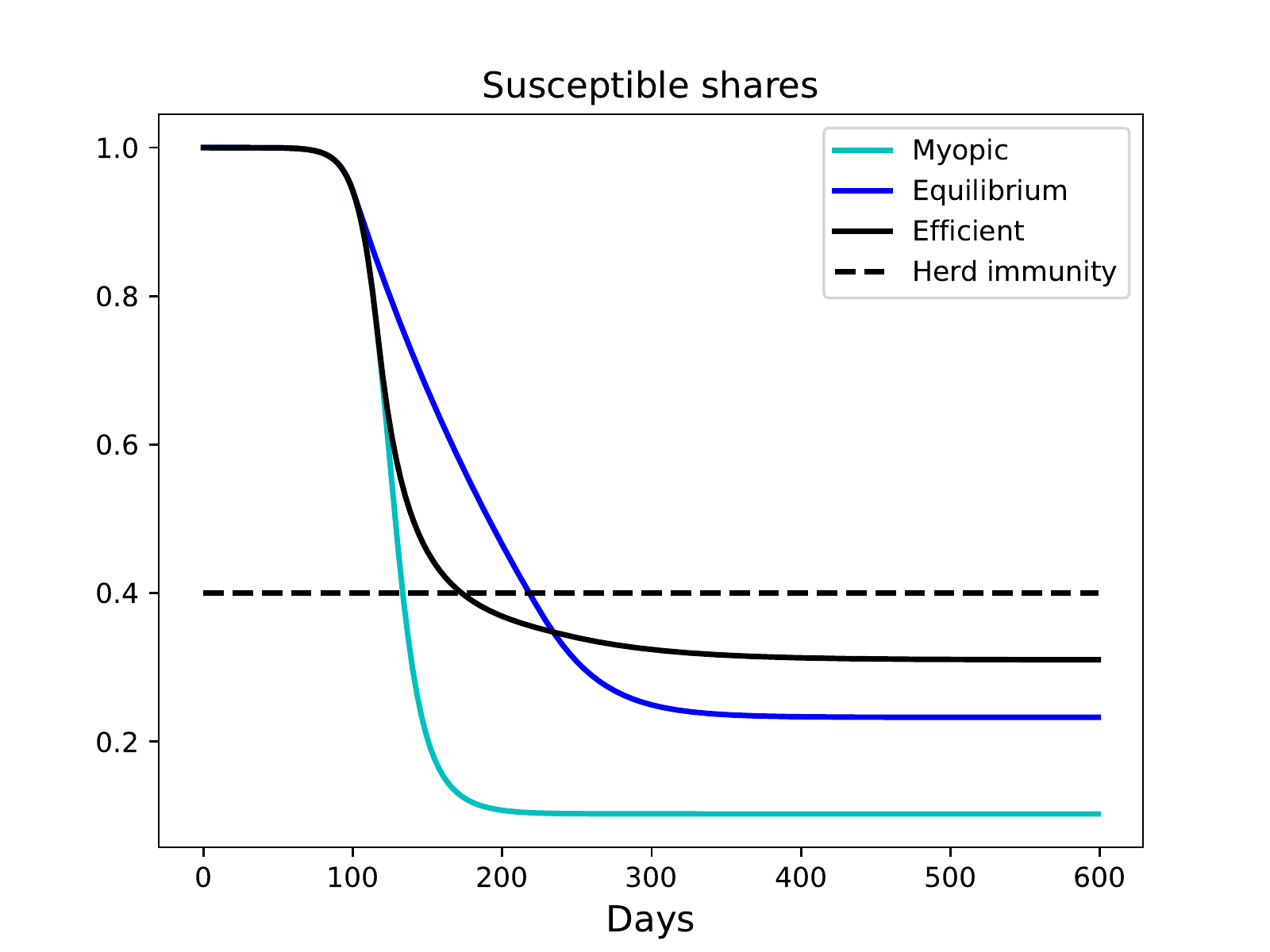}
\includegraphics[width=0.48\linewidth]{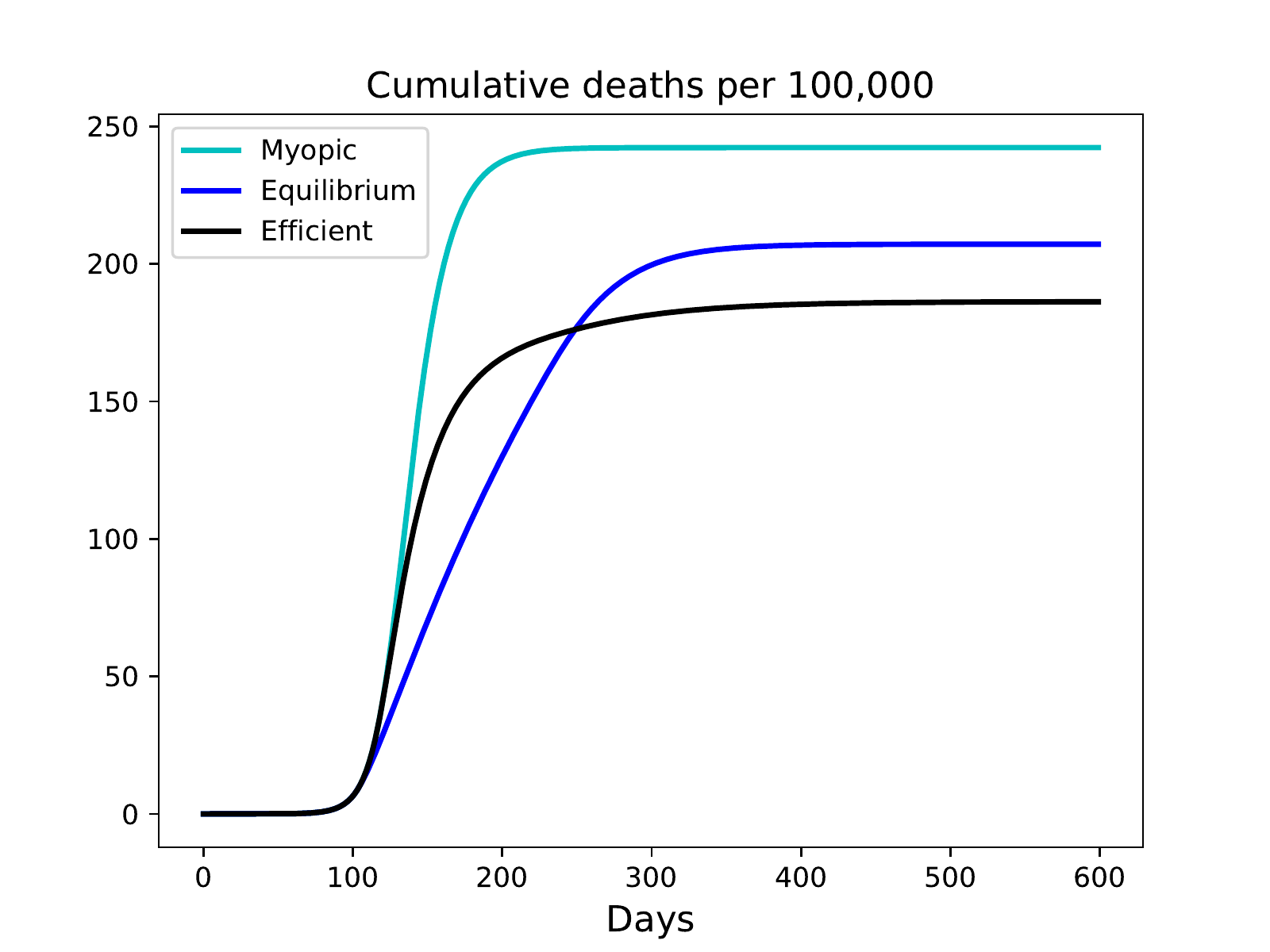}
\caption{Myopic, equilibrium and efficient susceptible shares and deaths with late vaccine arrival.}\label{fig:eq_nu0SD}
\end{figure}

Figure \ref{fig:welfare_death_T100} complements the above discussion by fixing $\sigma=0.4$ and depicting the welfare cost and death toll as a function of the expected arrival time for the vaccine, ranging from $T = 1/4$ (three months) to $T=100$ (note the logarithmic scale). As expected, in all three allocations (myopic, equilibrium and efficient), the welfare loss and expected death toll rise monotonically with the expected arrival times. 

\begin{figure}[!htb]
\centering
\includegraphics[width=0.48\linewidth]{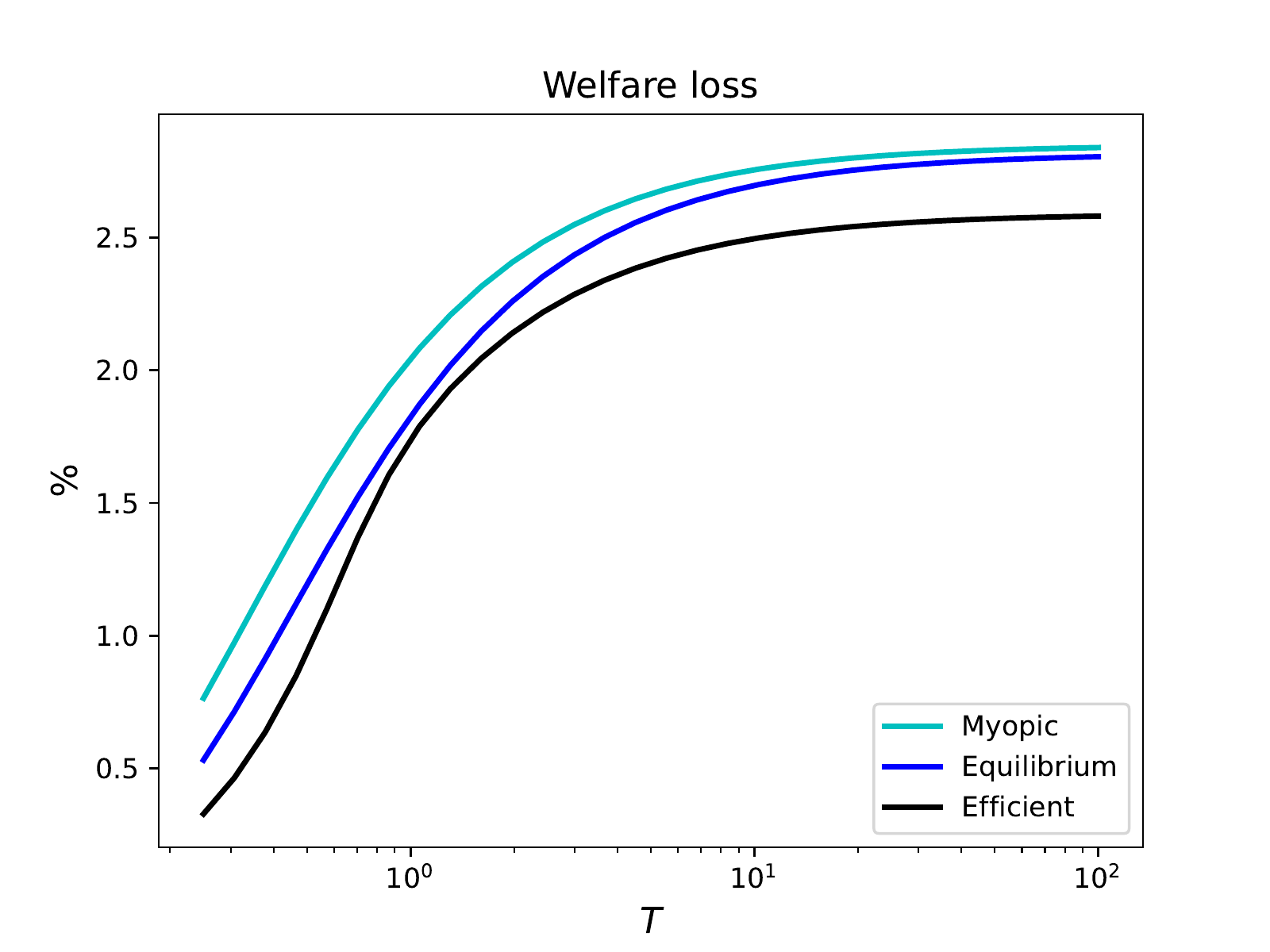}
\includegraphics[width=0.48\linewidth]{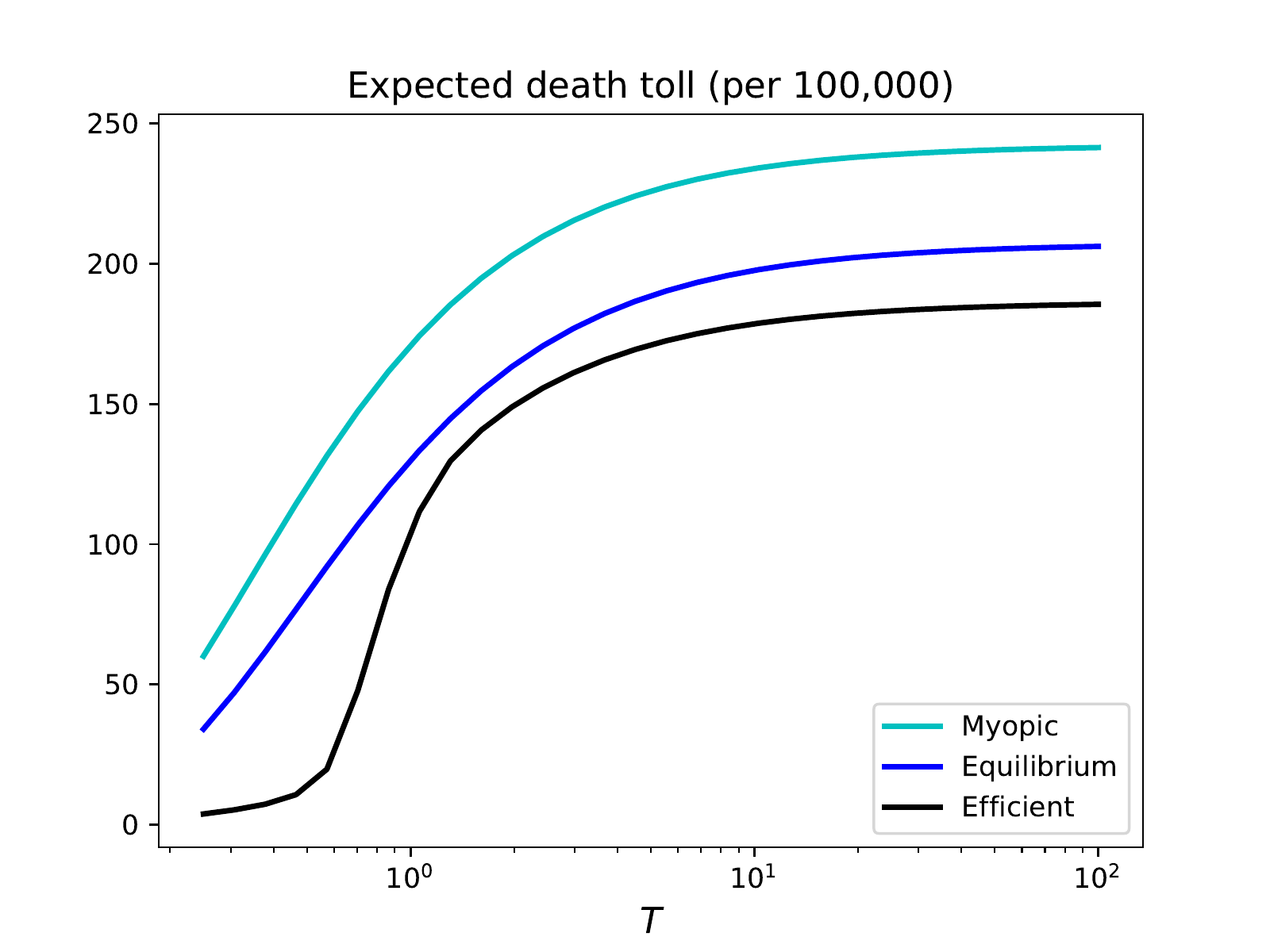}
\caption{Welfare loss and death toll with varying arrival.}\label{fig:welfare_death_T100}
\end{figure}

Interestingly, the welfare benefits from the optimal lockdown policy (relative to the equilibrium outcome) appear to be nonlinear in the arrival length, first falling until roughly $T=1$ and rising thereafter. In contrast, the welfare benefits from individually optimal behavior (relative to the myopic outcome) appear to decrease monotonically with the expected arrival date.

In particular, note that in contrast with the benchmark quantities depicted in Figure \ref{fig:welfare_death}, when the vaccine does not arrive until far into the future there is little difference in welfare between the myopic and equilibrium allocations. Our intuition for this result is that when the vaccine is not forthcoming, the precautionary behavior of agents in equilibrium leads to a relatively small reduction in deaths, and so the benefits from lower mortality are almost outweighed by the reduced length of the pandemic.

\section{Concluding remarks}

In this paper we have theoretically studied optimal epidemic control in an equilibrium model with imperfect testing and enforcement. We proved the existence of a perfect Bayesian Markov competitive equilibrium, and showed that a government that can only enforce lockdowns for a short period of time will always wish to (weakly) reduce activity, but that their incentive to do vanishes as testing becomes perfect. In contrast, a government with the ability to enforce lockdowns for an indefinitely long period of time may wish to increase or decrease activity relative to the equilibrium outcome.

We then calibrated the model to the COVID-19 epidemic and found the following in our numerical experiments:
\begin{enumerate*}
\item relative to the benchmark SIR model (with no precautionary behavior), the welfare gains from lockdown (isolation of the general public) are smaller than the gains from rational, self-interested behavior;
\item in the absence of altruistic behavior, improving testing (which increases $\sigma$) provides limited welfare gains (and may even reduce welfare) if it is not accompanied by quarantine (isolation of infected agents); and
\item the welfare gains from quarantine are large even with imperfect testing.
\end{enumerate*} 


Since the first draft of this paper was circulated (April 2021), considerable scientific knowledge has been accumulated about COVID-19. In particular, we have learned that immunity wanes over time and that reinfections are possible, which illustrates some shortcomings of the SIR framework adopted in this paper. The primary difficulty with incorporating imperfect testing and reinfections is that agents will have heterogeneous beliefs (depending on how many times they have contracted COVID) and so the state space is now infinite-dimensional. We have largely retained our original model in part due to our view that the insights of this paper are not restricted to the specifics of the COVID-19 pandemic, and apply to any (possibly future) pandemic for which the SIR structure holds. Further, if reinfections are not deadly (and only a minor inconvenience), for which there is some evidence \citep{Wolter_2022}, then the resulting SI(D)S model could be analytically and numerically studied using similar techniques to those in this paper because agents that once recovered will always take the highest action. However, we leave this and other extensions (such as population growth or mutations) to future research.



\bibliographystyle{plainnat}
\bibliography{localbib}

\appendix

\section{Proofs}\label{sec:proof}

To prove Proposition \ref{prop:V}, we note the following simple lemma.

\begin{lem}\label{lem:fp_subset}
Let $(X,d)$ be a complete metric space and $T:X\to X$ be a contraction with fixed point $x^*\in X$. If $\emptyset \neq X_1\subset X$ is closed and $TX_1\subset X_1$, then $x^*\in X_1$.
\end{lem}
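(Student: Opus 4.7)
The plan is to combine the invariance $TX_1\subset X_1$ with the closedness of $X_1$ and the convergence of Picard iterates guaranteed by the Banach fixed point theorem. Since $x^*$ is the unique fixed point of $T$ in $X$, it suffices to exhibit a sequence in $X_1$ converging to $x^*$; closedness of $X_1$ then forces $x^*\in X_1$.

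First, I would pick any $x_0\in X_1$ (possible since $X_1\neq\emptyset$) and consider the Picard iterates $x_n\coloneqq T^n x_0$. An easy induction using $TX_1\subset X_1$ gives $x_n\in X_1$ for all $n\ge 0$. Since $T$ is a contraction on the complete metric space $(X,d)$, the Banach fixed point theorem implies $x_n\to x^*$ in $X$. Because $X_1$ is closed in $X$, the limit of a sequence in $X_1$ remains in $X_1$, and therefore $x^*\in X_1$, as claimed.

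I do not anticipate any serious obstacle: the argument is essentially a one-line application of the Banach fixed point theorem to the restriction $T\vert_{X_1}:X_1\to X_1$, which is well-defined by invariance, is a contraction with the same constant, and acts on the complete metric subspace $X_1$. The only subtle point is to note that the fixed point produced by Banach's theorem inside $X_1$ must coincide with $x^*$ by the global uniqueness of fixed points of $T$ on $X$, which is why the conclusion $x^*\in X_1$ follows rather than merely the existence of some fixed point in $X_1$.
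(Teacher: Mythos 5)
Your proposal is correct. Your primary argument (pick $x_0\in X_1$, note by induction that the Picard iterates $T^n x_0$ stay in $X_1$ by invariance, use Banach's theorem on the full space $X$ to get $T^n x_0\to x^*$, and conclude by closedness of $X_1$) is a slightly different route from the paper's: the paper instead observes that $(X_1,d)$ is itself a complete metric space, applies the contraction mapping theorem to the restriction $T\vert_{X_1}:X_1\to X_1$ to obtain a fixed point $x_1^*\in X_1$, and then invokes uniqueness of the fixed point on $X$ to conclude $x^*=x_1^*\in X_1$. The two arguments are essentially dual ways of exploiting the same facts -- yours uses convergence of iterates plus closedness, the paper's uses completeness of the closed subspace plus uniqueness -- and your closing paragraph in fact spells out the paper's version verbatim, so there is no gap; either write-up suffices, and neither buys anything substantive over the other.
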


\begin{proof}
Since $X_1\subset X$ is closed, $(X_1,d)$ is a complete metric space. Since $T:X_1\to X_1$ is a contraction, it has a unique fixed point $x_1^*\in X_1$, which is also a fixed point of $T:X\to X$. Therefore $x^*=x_1^*\in X_1$ by uniqueness.
\end{proof}

\begin{proof}[Proof of Proposition \ref{prop:V}]
Let $X=\R^Z$. For $V\in X$, define $TV(z)$ by the right-hand side of \eqref{eq:Bellman.u2}, where $V_U=V$. Since $A=[\ubar{a},1]$ is nonempty and compact, $u:A\to \R$ is continuous, and $Z$ is a finite set, the maximum is achieved and $T:X\to X$ is well-defined. Let us show that $T$ is a contraction by verifying \cite{blackwell1965}'s sufficient conditions. If $V_1,V_2\in X$ and $V_1\le V_2$ pointwise, since $0\le \sigma\mu pa\le 1$, it follows from the definition of $T$ that
\begin{align*}
(TV_1)(z)& =\max_{a\in A}\bigl\{(1-\e^{-r\Delta})u(a) +\e^{-r\Delta}\E_z((1-\sigma\mu pa)\e^{-\nu\Delta}V_1(z')+\sigma\mu pa V_{I_k})\bigr\}\\
\le &\max_{a\in A}\bigl\{(1-\e^{-r\Delta})u(a)
 +\e^{-r\Delta}\E_z((1-\sigma\mu pa)\e^{-\nu\Delta}V_2(z')+\sigma\mu pa V_{I_k})\bigr\}\\
& =(TV_2)(z),
\end{align*}
so $T$ is monotone. If $V\in X$ and $c\ge 0$ is any constant, then
\begin{align*}
&(T(V+c))(z)\\
&=\max_{a\in A}\bigl\{(1-\e^{-r\Delta})u(a) + \e^{-r\Delta}\E_z((1-\sigma\mu pa)\e^{-\nu\Delta}(V(z')+c)+\sigma\mu pa V_{I_k})\bigr\}\\
&\le \max_{a\in A}\bigl\{(1-\e^{-r\Delta})u(a) + \e^{-r\Delta}\E_z((1-\sigma\mu pa)\e^{-\nu\Delta}V(z')+\sigma\mu pa V_{I_k})\bigr\}+\e^{-(r+\nu)\Delta}c\\
&=(TV)(z)+\e^{-(r+\nu)\Delta}c,
\end{align*}
so the discounting property holds with modulus $\e^{-(r+\nu)\Delta}<1$. Therefore $T$ is a contraction mapping and has a unique fixed point $V_U\in X$. Because $u:A=[\ubar{a},1]\to \R$ is continuous (hence bounded), it is straightforward to verify the transversality condition. Therefore $V_U$ is the value function.

Next, let us show the inequalities \eqref{eq:Vineq}. Noting that $u_D<u_I\le 0$, $r>0$, $\gamma>0$, $\delta\in (0,1]$, and $V_{I_k}$ is given by \eqref{eq:VIk}, we have
\begin{equation*}
V_D=u_D<\frac{(\e^{r\Delta}-1)u_I+\gamma\Delta\delta u_D}{\e^{r\Delta}-1+\gamma\Delta}=\frac{(1-\e^{-r\Delta})u_I+\e^{-r\Delta}\gamma\Delta\delta u_D}{1-\e^{-r\Delta}(1-\gamma\Delta)}=V_{I_k}<0.
\end{equation*} 
To obtain the lower bound for $V_U$, Define the constants
\begin{equation}
c_1\coloneqq \frac{\sigma\beta\Delta}{\e^{(r+\nu)\Delta}-1+\sigma\beta\Delta}\in (0,1)\label{eq:c1}
\end{equation}
and $V_1\coloneqq c_1\e^{\nu\Delta}V_{I_k}$. Note that
\begin{align*}
c_1\e^{\nu\Delta}<1&\iff \frac{\e^{\nu\Delta}\sigma\beta\Delta}{\e^{(r+\nu)\Delta}-1+\sigma\beta\Delta}<1\\
&\iff \e^{\nu\Delta}\sigma\beta\Delta<\e^{(r+\nu)\Delta}-1+\sigma\beta\Delta\\
&\iff (\e^{\nu\Delta}-1)\sigma\beta\Delta<\e^{(r+\nu)\Delta}-1,
\end{align*}
which is true because $\sigma\in (0,1]$, $\beta\Delta\in [0,1]$, $r>0$, and $\nu\ge 0$. Since $c_1\e^{\nu\Delta}<1$ and $V_{I_k}<0$, it follows that
\begin{equation*}
V_{I_k}<c_1\e^{\nu\Delta}V_{I_k}=V_1=\frac{\e^{\nu\Delta}\sigma\beta\Delta}{\e^{(r+\nu)\Delta}-1+\sigma\beta\Delta}V_{I_k},
\end{equation*}
which is part of \eqref{eq:Vineq}. Let us next show $V_U\ge V_1$. To this end, define the (nonempty closed) subset of $X$ by $X_1=\set{V\in X|V\ge V_1}$. By Lemma \ref{lem:fp_subset}, to show $V_U\ge V_1$, it suffices to show $TX_1\subset X_1$. If $V\in X_1$, then by the definition of $T$, we have
\begin{align}
(TV)(z)& =\max_{a\in A}\bigl\{(1-\e^{-r\Delta})u(a) +\e^{-r\Delta}\E_z((1-\sigma\mu pa)\e^{-\nu\Delta}V(z')+\sigma\mu pa V_{I_k})\bigr\} \notag \\
\ge &\max_{a\in A}\bigl\{(1-\e^{-r\Delta})u(a) +\e^{-r\Delta}\E_z((1-\sigma\mu pa)c_1V_{I_k}+\sigma\mu pa V_{I_k})\bigr\}. \label{eq:VUlb1}
\end{align}
Setting $a=1$ in the right-hand side of \eqref{eq:VUlb1} and noting that $u(1)=0$, we obtain
\begin{equation*}
(TV)(z)\ge \e^{-r\Delta}((1-\sigma\mu p)c_1+\sigma\mu p)V_{I_k}.
\end{equation*}
Therefore to show $TX_1\subset X_1$, noting that $V_{I_k}<0$, it suffices to show that
\begin{equation}
\e^{-r\Delta}((1-\sigma\mu p)c_1+\sigma\mu p)\le c_1\e^{\nu\Delta}.\label{eq:c1ineq}
\end{equation}
By \eqref{eq:pz}, $a_U(z)\le 1$, and the definition of $Z$ in \eqref{eq:Z}, we have
\begin{equation*}
p=\beta\Delta(a_II_k+a_U(z)I_u)\le \beta\Delta.
\end{equation*}
Since $\mu\in [0,1]$ and $c_1\in (0,1)$, it follows that
\begin{equation*}
\e^{-r\Delta}((1-\sigma\mu p)c_1+\sigma\mu p)\le \e^{-r\Delta}((1-\sigma\beta\Delta)c_1+\sigma\beta\Delta)=c_1\e^{\nu\Delta}
\end{equation*}
by \eqref{eq:c1}, which implies \eqref{eq:c1ineq}. Therefore $TX_1\subset X_1$, as desired.

To show $V_U(z)\le V_{R_k}=0$, define the (nonempty closed) subset of $X$ by $X_2\coloneqq \set{V\in X|V\le 0}$. If $V\in X_2$, then
\begin{align*}
(TV)(z)&=\max_{a\in A}\set{(1-\e^{-r\Delta})u(a)+\e^{-r\Delta}\E_z((1-\sigma\mu pa)\e^{-\nu\Delta}V(z')+\sigma\mu pa V_{I_k})}\\
&\le 0
\end{align*}
because $u(a)\le u(1)=0$, $V\le 0$, and $V_{I_k}<0$. Therefore $TX_2\subset X_2$, which implies $V_U\le 0$ by Lemma \ref{lem:fp_subset}.
\end{proof}

\begin{proof}[Proof of Proposition \ref{prop:aU}]
Let
\begin{equation*}
f(a)\coloneqq (1-\e^{-r\Delta})u(a)+\e^{-(r+\nu)\Delta}\E_z((1-\sigma\mu pa)V_U(z')+\sigma\mu pa V_{I_k})
\end{equation*}
be the expression inside the braces in \eqref{eq:Bellman.u2}. Since by Assumption \ref{asmp:comp} individual agents view the next period's state $z'$ as exogenous, $V_U(z')$ does not depend on $a$ and $f$ is strictly concave. Furthermore,
\begin{equation*}
f'(a)=(1-\e^{-r\Delta})u'(a)-\e^{-r\Delta}\sigma\mu p(\E_z \e^{-\nu\Delta}V_U(z')-V_{I_k}).
\end{equation*}
Hence considering the cases $f'(\ubar{a})\gtrless 0$ and $f'(1)\gtrless 0$, the optimal action is given by \eqref{eq:aU}.
\end{proof}

\begin{proof}[Proof of Corollary \ref{cor:aU1}]
Since $0\ge V_U>V_{I_k}$ by \eqref{eq:Vineq} and $\e^{-\nu\Delta}\le 1$, we always have $\E_z \e^{-\nu\Delta}V_U(z')-V_{I_k}>0$. By \eqref{eq:Vineq} and \eqref{eq:aU}, the best response can be rewritten as $a^*=\phi(x)$, where
\begin{equation*}
x\coloneqq \frac{\sigma\mu(z)p(z)}{\e^{r\Delta}-1}\left(\E_z \e^{-\nu\Delta}V_U(z')-V_{I_k}\right)>0.\label{eq:x}
\end{equation*}
Using the bound $V_U\le 0$ in \eqref{eq:Vineq}, we can bound $x$ from above as
\begin{equation}
x\le \frac{\sigma\mu(z)p(z)}{\e^{r\Delta}-1}(-V_{I_k}).\label{eq:xub1}
\end{equation}
Using $\mu(z)\le 1$, $a_U(z)\le 1$, and
\begin{equation*}
p(z)=\beta\Delta (a_II_k+a_U(z)I_u)\le \beta\Delta (I_k+I_u)=\beta \Delta I
\end{equation*}
by \eqref{eq:pz}, it follows from \eqref{eq:xub1} and $V_{I_k}<0$ that
\begin{equation}
x\le -\frac{\sigma\beta\Delta}{\e^{r\Delta}-1}I V_{I_k}.\label{eq:xub2}
\end{equation}
Therefore if $I\le \bar{I}$, where $\bar{I}$ is as in \eqref{eq:Ibar}, it follows from \eqref{eq:xub2} that $x\le u'(1)$. Therefore $a^*=\phi(x)=1$ by \eqref{eq:phi}.
\end{proof}

To prove Theorem \ref{thm:exist}, we consider a general stochastic game with complete information and finitely many competitive agents denoted by $n\in N=\set{1,\dots,N}$.\footnote{There is a large literature on existence theorems for stationary Markov perfect equilibria in stochastic games, see \cite{HeSun2017} and the references therein. We do not use these earlier results as we are interested in symmetric pure strategy equilibria, which requires a special structure as in our SIR model.} Suppose that there are finitely many agent types indexed by $h\in H=\set{1,\dots,H}$. At each point in time, the aggregate state of the economy is denoted by $z\in Z$, which is a finite set. A type $h$ agent takes action $x\in A_h$. Let $A=\bigotimes_{h=1}^H A_h$.

Let $u_h(x,a,z)$ be the flow utility of a type $h$ agent when the agent takes action $x\in A_h$, the average action of other agents is $a\in A$, and the aggregate state is $z\in Z$. Type $h$ agents discount future utility with discount factor $\rho_h\in [0,1)$. Time is denoted by $t=0,1,\dotsc$. The aggregate state $z$ evolves stochastically depending on agents' average actions and the current state. Let $q(a,z,z')$ be the probability of $z_{t+1}=z'$ conditional on $(a_t,z_t)=(a,z)\in A\times Z$. When computing $q(a,z,z')$, agents take the average action $a$ as given and ignore the impact of their action $x$ on $a$ (competitive behavior). An agent's type evolves stochastically depending on an agent's own action, the other agents' average actions, and the current state. Let $p_{hh'}(x,a,z)$ be the probability that a type $h$ agent switches to type $h'$ next period when the state is $z\in Z$, he takes action $x\in A_h$, the average action is $a\in A$, and the current state is $z\in Z$. Value functions are denoted by $V=(V_1,\dots,V_H)$, where $V_h:Z\to \R$. Reaction functions are denoted by $\theta=(\theta_1,\dots,\theta_H)$, where $\theta_h:Z\to A_h$.

\begin{defn}\label{defn:eq_gen}
A (pure strategy) \emph{Markov competitive equilibrium} is a pair $(V,\theta)$ of value and reaction functions such that
\begin{enumerate}
\item (Consistency) The transition probability of the aggregate state is consistent with individual behavior, so $\Pr(z'\mid z)=q(\theta(z),z,z')$.
\item (Sequential rationality) For each $h\in H$ and $z\in Z$, the Bellman equation
\begin{equation}
V_h(z)=\max_{x\in A_h}\set{(1-\rho_h)u_h(x,\theta(z),z)+\rho_h\sum_{h'=1}^Sp_{hh'}(x,\theta(z),z)\E_zV_{h'}(z')}\label{eq:Bellman_gen}
\end{equation}
holds, and $x=\theta_h(z)$ achieves the maximum.
\end{enumerate}
\end{defn}

\begin{thm}\label{thm:exist_gen}
Suppose the following assumptions hold:
\begin{enumerate*}
\item For each $h\in H$, the action set $A_h$ is a nonempty compact convex subset of some Euclidean space.
\item For each $h\in H$, the payoff function $u_h:A_h\times A\times Z \to \R$ is continuous in $(x,a,z)$ and strictly concave in $x$.
\item The transition probability $q:A\times Z\times Z\to [0,1]$ is continuous.
\item For each $h,h'\in H$, the transition probability $p_{hh'}:A_h\times A\times Z\to [0,1]$ is continuous in $(x,a,z)$ and affine in $x$.
\end{enumerate*}
Then there exists a pure strategy Markov competitive equilibrium.
\end{thm}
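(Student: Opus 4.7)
The plan is to prove existence via Brouwer's fixed-point theorem applied to a best-response map on the space of pure-strategy reaction profiles. Because $Z$ is finite and each $A_h$ is a nonempty compact convex subset of a Euclidean space, the set of reaction profiles $\Theta \coloneqq \bigotimes_{h\in H} A_h^Z$ is itself a nonempty compact convex subset of a Euclidean space. I would construct a continuous self-map $\Phi:\Theta\to\Theta$ whose fixed points are precisely the equilibrium reaction functions.

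The first step is to associate a value function with each candidate profile. For fixed $\theta\in\Theta$, define $T_\theta:\R^{H\times Z}\to\R^{H\times Z}$ by
\begin{equation*}
(T_\theta V)_h(z)\coloneqq (1-\rho_h)u_h(\theta_h(z),\theta(z),z)+\rho_h\sum_{h'\in H}p_{hh'}(\theta_h(z),\theta(z),z)\sum_{z'\in Z}q(\theta(z),z,z')V_{h'}(z').
\end{equation*}
Since $\sum_{h'}p_{hh'}=1$, $\sum_{z'}q=1$, and $\rho\coloneqq \max_h\rho_h<1$, $T_\theta$ is a $\rho$-contraction in the sup-norm and has a unique fixed point $V^\theta$. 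By the continuity of $u_h$, $p_{hh'}$, and $q$, the operator $T_\theta$ depends continuously on $\theta$ (uniformly in $V$ on bounded sets), so the standard parametric contraction argument shows that $\theta\mapsto V^\theta$ is continuous from $\Theta$ into $\R^{H\times Z}$.

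The second step is to define the best-response map $\Phi$. Given $\theta\in\Theta$, let $\Phi_h(\theta)(z)$ be the maximizer over $x\in A_h$ of
\begin{equation*}
F_h(x;\theta,z)\coloneqq (1-\rho_h)u_h(x,\theta(z),z)+\rho_h\sum_{h'\in H}p_{hh'}(x,\theta(z),z)\sum_{z'\in Z}q(\theta(z),z,z')V^\theta_{h'}(z').
\end{equation*}
Here the crucial observation is that $u_h$ is strictly concave in $x$ while $p_{hh'}$ is affine in $x$, so the continuation term is affine in $x$ (not merely concave, which would be insufficient when $V^\theta_{h'}$ can take either sign). Consequently $F_h(\,\cdot\,;\theta,z)$ is strictly concave on the compact convex set $A_h$, so its maximizer is unique and $\Phi:\Theta\to\Theta$ is single-valued. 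Joint continuity of $F_h$ in $(x,\theta)$ (via continuity of $u_h$, $p_{hh'}$, $q$, and $V^\theta$) combined with compactness of $A_h$ and uniqueness of the maximizer yields continuity of $\Phi$ by Berge's Maximum Theorem.

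Brouwer's fixed-point theorem then produces $\theta^*\in\Theta$ with $\Phi(\theta^*)=\theta^*$. Evaluating $F_h(\,\cdot\,;\theta^*,z)$ at $x=\theta^*_h(z)$ and using $V^{\theta^*}=T_{\theta^*}V^{\theta^*}$ shows that the pair $(V^{\theta^*},\theta^*)$ satisfies the Bellman equation \eqref{eq:Bellman_gen} with $\theta^*_h(z)$ attaining the maximum; consistency of the aggregate transition is immediate from the definition of $V^{\theta^*}$. The main technical point to handle carefully is the continuous dependence $\theta\mapsto V^\theta$, and the key structural hypothesis that cannot be relaxed is the affinity of $p_{hh'}$ in $x$, which is exactly what preserves strict concavity of the objective $F_h$ and therefore single-valuedness (hence continuity) of $\Phi$.
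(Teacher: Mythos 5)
Your proposal is correct and follows essentially the same route as the paper: a parametric contraction argument giving continuity of the value function in the policy profile, strict concavity of the objective in $x$ (from strict concavity of $u_h$ plus affinity of $p_{hh'}$) giving a single-valued best response that is continuous by the maximum theorem, and Brouwer's fixed-point theorem on the compact convex policy space. The only difference is that the paper's operator already contains the maximization over $x$ (so its fixed point is the optimal value against $\theta$), whereas you use the policy-evaluation operator and best-respond afterwards; this is equally valid, since at the Brouwer fixed point $\Phi(\theta^*)=\theta^*$ combined with $V^{\theta^*}=T_{\theta^*}V^{\theta^*}$ gives exactly the Bellman equation \eqref{eq:Bellman_gen} with $\theta^*_h(z)$ attaining the maximum.
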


To prove Theorem \ref{thm:exist_gen}, we need the following lemma.

\begin{lem}\label{lem:fp_cont}
Let $(X,d)$ be a complete metric space and $\Theta$ be a topological space. Endow $X\times \Theta$ with the product topology. Suppose $T:X\times \Theta\to X$ is continuous and there exists $\beta\in [0,1)$ such that
\begin{equation}
d(T(x,\theta),T(y,\theta))\le \beta d(x,y) \label{eq:contraction}
\end{equation}
for all $x,y\in X$ and all $\theta\in\Theta$. Then
\begin{enumerate*}
\item for each $\theta\in \Theta$, there exists a unique $x^*(\theta)\in X$ such that $T(x^*(\theta),\theta)=x^*(\theta)$, and
\item $x^*:\Theta\to X$ is continuous.
\end{enumerate*}
\end{lem}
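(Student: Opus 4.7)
The plan is to apply the Banach fixed point theorem pointwise in $\theta$ for existence and uniqueness, then use the uniform contraction constant to transfer continuity from $T$ to $x^*$ via an ``add and subtract'' estimate.

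For part (i), I would fix $\theta \in \Theta$. By hypothesis \eqref{eq:contraction}, the map $T(\cdot,\theta)\colon X \to X$ is a contraction with modulus $\beta \in [0,1)$ on the complete metric space $(X,d)$. The Banach fixed point theorem then yields a unique $x^*(\theta) \in X$ with $T(x^*(\theta),\theta) = x^*(\theta)$.

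For part (ii), I would fix an arbitrary $\theta_0 \in \Theta$ and $\varepsilon > 0$, and estimate, for any $\theta \in \Theta$,
\begin{align*}
d(x^*(\theta), x^*(\theta_0))
&= d\bigl(T(x^*(\theta), \theta),\, T(x^*(\theta_0), \theta_0)\bigr) \\
&\le d\bigl(T(x^*(\theta), \theta),\, T(x^*(\theta_0), \theta)\bigr) + d\bigl(T(x^*(\theta_0), \theta),\, T(x^*(\theta_0), \theta_0)\bigr) \\
&\le \beta\, d(x^*(\theta), x^*(\theta_0)) + d\bigl(T(x^*(\theta_0), \theta),\, T(x^*(\theta_0), \theta_0)\bigr),
\end{align*}
where the first inequality uses the triangle inequality with the inserted point $T(x^*(\theta_0), \theta)$ and the second uses \eqref{eq:contraction} applied with $\theta$ held fixed. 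Rearranging gives
\begin{equation*}
d(x^*(\theta), x^*(\theta_0)) \le (1-\beta)^{-1}\, d\bigl(T(x^*(\theta_0), \theta),\, T(x^*(\theta_0), \theta_0)\bigr).
\end{equation*}

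The map $\theta \mapsto T(x^*(\theta_0), \theta)$ is the composition of the continuous inclusion $\theta \mapsto (x^*(\theta_0), \theta)$ from $\Theta$ into $X\times\Theta$ (continuous because the first coordinate is constant and the second is the identity) with the continuous map $T$, hence continuous. Therefore there exists an open neighborhood $U$ of $\theta_0$ in $\Theta$ on which $d(T(x^*(\theta_0), \theta), T(x^*(\theta_0), \theta_0)) < (1-\beta)\varepsilon$, which by the displayed inequality implies $d(x^*(\theta), x^*(\theta_0)) < \varepsilon$ on $U$. This is exactly continuity of $x^*$ at $\theta_0$.

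There is no real obstacle; the only point requiring care is that $\Theta$ is merely a topological space (not assumed first countable or metrizable), so I avoid sequential arguments and work directly with open neighborhoods, which is clean once one observes that the relevant map $\theta \mapsto T(x^*(\theta_0), \theta)$ is continuous as a composition.
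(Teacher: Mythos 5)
Your proof is correct and takes essentially the same route as the paper: both reduce continuity of $x^*$ to the bound $d(x^*(\theta),x^*(\theta_0))\le (1-\beta)^{-1}\,d\bigl(T(x^*(\theta_0),\theta),T(x^*(\theta_0),\theta_0)\bigr)$ and then invoke continuity of $T$ on a neighborhood of $\theta_0$, exactly as in the paper's argument. The only cosmetic difference is that you derive this bound directly from the triangle inequality at the two fixed points and rearranging (using $\beta<1$), whereas the paper first establishes the a priori estimate $d(x^*(\theta),x_0)\le (1-\beta)^{-1}d(T(x_0,\theta),x_0)$ by summing the geometric series along the Picard iterates and then specializes $x_0=x^*(\theta_0)$; your shortcut is slightly more direct but yields the identical inequality.
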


\begin{proof}
The first claim is immediate from the contraction mapping theorem.

To show the second claim, for simplicity write $T_\theta x\coloneqq T(x,\theta)$. Fix any $(x_0,\theta)\in X\times \Theta$ and define $x_n=T_\theta^n x_0$ for $n\in \N$. Then by the triangle inequality and the contraction property \eqref{eq:contraction}, we have
\begin{align*}
d(x_n,x_0)&\le d(x_n,x_{n-1})+\dots+d(x_1,x_0)\\
&\le (\beta^{n-1}+\dots+1)d(x_1,x_0)\\
&\le \frac{1}{1-\beta}d(x_1,x_0)=\frac{1}{1-\beta}d(T_\theta x_0,x_0).
\end{align*}
Letting $n\to\infty$ and noting that $x_n\to x^*(\theta)$ by the contraction mapping theorem, we have
\begin{equation}
d(x^*(\theta),x_0)\le \frac{1}{1-\beta}d(T_\theta x_0,x_0).\label{eq:dub}
\end{equation}
Since $x_0\in X$ and $\theta\in \Theta$ are arbitrary in \eqref{eq:dub}, set $\theta=\theta'$ and $x_0=x^*(\theta)$, where $\theta,\theta'\in \Theta$ are arbitrary. Since by definition $T_\theta x_0=x_0$, it follows that
\begin{align}
d(x^*(\theta'),x^*(\theta))&\le \frac{1}{1-\beta}d(T_{\theta'} x_0,x_0)=\frac{1}{1-\beta}d(T_{\theta'} x_0,T_\theta x_0)\notag \\
&=\frac{1}{1-\beta}d(T(x_0,\theta'),T(x_0,\theta)),\label{eq:dub1}
\end{align}
where $x_0=x^*(\theta)$. Since $T$ is continuous and $x_0=x^*(\theta)$ depends only on $\theta$, for any $\epsilon>0$ there exists an open neighborhood $U$ of $\theta$ such that
\begin{equation}
d(T(x_0,\theta'),T(x_0,\theta))<(1-\beta)\epsilon \label{eq:dub2}
\end{equation}
whenever $\theta'\in U$. Combining \eqref{eq:dub1} and \eqref{eq:dub2}, for all $\theta'\in U$ we have 
\begin{equation*}
d(x^*(\theta'),x^*(\theta))<\epsilon, 
\end{equation*}
so $x^*:\Theta\to X$ is continuous.
\end{proof}

\begin{proof}[Proof of Theorem \ref{thm:exist_gen}]
Let $\cV=bc(H\times Z)$ be the space of bounded continuous functions from $H\times Z$ to $\R$ equipped with the supremum norm. Then $(\cV,\norm{\cdot})$ is a Banach space. (Indeed, it is just the Euclidean space $\R^{HZ}$ because $H,Z$ are finite sets.) For $(V,\theta)\in \cV\times A^Z$, define
\begin{multline}
(T_\theta V)_h(z)\\
=
\max_{x\in A_h}\set{(1-\rho_h)u_h(x,\theta(z),z)+\rho_h\sum_{h',z'}p_{hh'}(x,\theta(z),z)q(\theta(z),z,z')V_{h'}(z')}.\label{eq:TVhz}
\end{multline}
A straightforward application of the maximum theorem implies that $(T_\theta V)_h(z)$ is continuous in $\theta$. Since $H$ is a finite set, we have $\rho\coloneqq \max_h \rho_h\in [0,1)$. It is then straightforward to verify \cite{blackwell1965}'s sufficient conditions, and for fixed $\theta \in A^Z$, the map $\cV \ni V \mapsto T_\theta V$ is a contraction mapping with modulus $\rho$. It follows from Lemma \ref{lem:fp_cont} that $T_\theta$ has a unique fixed point 
$V^*(\theta)=\set{V_h^*(z,\theta)}_{(h,z)\in H\times Z}\in \cV$, 
which is continuous in $\theta$.

Since by assumption $u_h(x,a,z)$ is continuous and strictly concave in $x$ and $p_{hh'}(x,a,z)$ is affine in $x$, the objective function inside the braces of \eqref{eq:TVhz} (where $V_h(z)=V_h^*(z,\theta)$) is continuous and strictly concave in $x$. Therefore there exists a unique maximizer, which we call $x=a_h^*(z,\theta)$. By the maximum theorem, $a_h^*(z,\theta)$ is continuous in $\theta$. Since $H,Z$ are finite sets, we may view $a^*:A^Z\to A^Z$ as a continuous map. Since $A$ is nonempty, compact, and convex, by Brouwer's fixed point theorem, $a^*$ has a fixed point. Letting $\theta=(\theta_h(z))$ be this fixed point, it is clear that all conditions in Definition \ref{defn:eq_gen} are satisfied.
\end{proof}

\begin{proof}[Proof of Theorem \ref{thm:exist}]
We apply Theorem \ref{thm:exist_gen}.

The agent type is denoted by $h\in H=\set{U,I_k,R_k,D}$, which is finite. The aggregate state is denoted by $z\in Z$ defined in \eqref{eq:Z}, which is finite. Define the action set of type $h$ agents by $A_h=[\ubar{a},1]$ if $h\neq D$ and $A_D=\set{0}$, which are nonempty compact convex subsets of $\R$. Define the utility function $u_h:A_h\times A\times Z\to \R$ by 
$u_U(x,a,z)=u_{R_k}(x,a,z)=u(x)$, $u_{I_k}(x,a,z)=u_I(x)$, and $u_D(x,a,z)=u_D$.
Note that each $u_h$ is continuous in $(x,a,z)$ and strictly concave in $x\in A_h$. (Although $u_D$ is a constant function, it is strictly concave in $x$ because its domain $A_D=\set{0}$ is a singleton.)

Define the transition probabilities of individual states $p_{hh'}(x,a,z)$ as follows. For $h=R_k,D$ agents, because they remain in their corresponding state forever, the transition probabilities are 0 or 1, which are clearly continuous in all variables and affine in $x$. For $I_k$ agents, by assumption they die with probability $\gamma\Delta \delta$. Hence $p_{I_kD}(x,a,z)=\gamma \Delta \delta$, which is continuous in all variables and constant (hence affine) in $x$. The same is true for $p_{I_kh'}$ for any $h'\in H$. By \eqref{eq:infect_prob} and \eqref{eq:Bellman.u2}, the transition probabilities of $U$ agents are
\begin{align*}
p_{UU}(x,a,z)&=\e^{-\nu\Delta}(1-\sigma\mu(z)\beta\Delta (a_{I_k}I_k+a_UI_u)x),\\
p_{UI_k}(x,a,z)&=\sigma\mu(z)\beta\Delta (a_{I_k}I_k+a_UI_u)x,\\
p_{UR_k}(x,a,z)&=(1-\e^{-\nu\Delta})(1-\sigma\mu(z)\beta\Delta (a_{I_k}I_k+a_UI_u)x),\\
p_{UD}(x,a,z)&=0,
\end{align*}
which are continuous in all variables and affine in $x$. Finally, the transition probability for the aggregate state $q:A\times Z\times Z\to [0,1]$ is clearly continuous because it is determined by the current state $z\in Z$ (which is finite) and individual's actions, whose transition probabilities are all continuous. The existence of equilibrium in the sense of Definition \ref{defn:eq_gen} then follows from Theorem \ref{thm:exist_gen}. The resulting value and policy functions clearly satisfy Definition \ref{defn:eq}. Because $U$ agents take identical actions, their beliefs always satisfy \eqref{eq:muz}.
\end{proof}

\begin{proof}[Proof of Theorem \ref{thm:aU_se}]
Fix $z\in Z$ and let $\mu=\mu(z)$ and $a_U^*=a_U^*(z)$. Define the functions $f,g:A\to \R$ by
\begin{align*}
f(x)&=(1-\e^{-r\Delta})u(x)+\e^{-r\Delta}\E_z((1-\sigma\mu p(a_U^*)x)\e^{-\nu\Delta}V_U(z')+\sigma\mu p(a_U^*)x V_{I_k}(z')),\\
g(x)&=(1-\e^{-r\Delta})u(x)+\e^{-r\Delta}\E_z((1-\sigma\mu p(x)x)\e^{-\nu\Delta}V_U(z')+\sigma\mu p(x)x V_{I_k}(z')),
\end{align*}
where (with a slight abuse of notation) $p(x)\coloneqq \beta\Delta(a_II_k+xI_u)$. Note that by Definition \ref{defn:eq}, the maximum of $f$ is precisely $a_U^*$, and since the last term of \eqref{eq:Waz} does not depend on $a_U^*$, by Definition \ref{defn:se} the maximum of $g$ is precisely $a_U^\dagger=a_U^\dagger(z)$. Our proof relies on careful comparisons of the derivatives of $f$ and $g$ together with some case-by-case analysis. To this end, first note that
\begin{align*}
f'(x)&=(1-\e^{-r\Delta})u'(x)-\e^{-r\Delta}\sigma\beta\Delta\mu (a_II_k+a_U^*I_u) \E_z(\e^{-\nu\Delta}V_U(z')-V_{I_k}),\\
g'(x)&=(1-\e^{-r\Delta})u'(x)-\e^{-r\Delta}\sigma\beta\Delta\mu (a_II_k+2xI_u) \E_z(\e^{-\nu\Delta}V_U(z')-V_{I_k}),\\
f''(x)&=(1-\e^{-r\Delta})u''(x)<0,\\
g''(x)&=(1-\e^{-r\Delta})u''(x)-2\e^{-r\Delta}\sigma\beta\Delta\mu I_u\E_z(\e^{-\nu\Delta}V_U(z')-V_{I_k})<0,
\end{align*}
where the last two inequalities use $u''<0$ and $\e^{-\nu\Delta}V_U(z')\ge V_U(z')>V_{I_k}$, which follows from Proposition \ref{prop:V} (recall that $V_U,V_{I_k} < 0$). Since $A=[\ubar{a},1]$ is nonempty, compact, convex, and $f,g$ are continuous and strictly concave, they achieve unique maxima. Below, we divide the proof into four steps.

\begin{step}
If $I_u=0$, then $a_U^\dagger=a_U^*$.
\end{step}

To simplify notation, let
\begin{equation*}
C\coloneqq \e^{-r\Delta}\sigma\beta\Delta\mu\E_z(\e^{-\nu\Delta}V_U(z')-V_{I_k})>0.
\end{equation*}
Then we can rewrite the derivatives of $f,g$ as
\begin{align*}
f'(x)&=(1-\e^{-r\Delta})u'(x)-C(a_II_k+a_U^*I_u),\\
g'(x)&=(1-\e^{-r\Delta})u'(x)-C(a_II_k+2xI_u).
\end{align*}
If $I_u=0$ (which holds if $\sigma=1$), then $f'=g'$, so the maxima of $f,g$ agree and $a_U^*=a_U^\dagger$. \qedsymbol

In what follows, without loss of generality we may assume $I_u>0$.

\begin{step}
$a_U^\dagger\le a_U^*$ holds.
\end{step}

Note that at $x=a_U^*$, we have
\begin{align*}
f'(a_U^*)&=(1-\e^{-r\Delta})u'(a_U^*)-C(a_II_k+a_U^*I_u)\\
&\ge (1-\e^{-r\Delta})u'(a_U^*)-C(a_II_k+2a_U^*I_u)=g'(a_U^*).
\end{align*}
If $f'(a_U^*)\le 0$, then $g'(a_U^*)\le f'(a_U^*)\le 0$. Since $g$ is strictly concave and $a_U^\dagger$ achieves its maximum, it follows that $a_U^\dagger\le a_U^*$. If $f'(a_U^*)>0$, then since $f$ is strictly concave, it follows that $a_U^*=1$ and hence $a_U^\dagger\le 1=a_U^*$. In either case, we have $a_U^\dagger\le a_U^*$. \qedsymbol

\begin{step}
$a_U^*/2\le a_U^\dagger$ and \eqref{eq:aU_bound2} hold.
\end{step}
If $f'(a_U^*)<0$, then since $f$ is strictly concave, we must have $a_U^*=\ubar{a}$. Then $\ubar{a}\le a_U^\dagger\le a_U^*=\ubar{a}$, so $a_U^*-a_U^\dagger=0$ and the claim is trivial. Similarly, if $g'(a_U^\dagger)>0$, then since $g$ is strictly concave, it must be $a_U^\dagger=1$. Then $1=a_U^\dagger\le a_U^*\le 1$, so $a_U^*-a_U^\dagger=0$ and the claim is trivial. Therefore without loss of generality we may assume $f'(a_U^*)\ge 0\ge g'(a_U^\dagger)$.

Under this condition, we obtain
\begin{align*}
0&\le f'(a_U^*)=(1-\e^{-r\Delta})u'(a_U^*)-C(a_II_k+a_U^*I_u),\\
0&\ge g'(a_U^\dagger)=(1-\e^{-r\Delta})u'(a_U^\dagger)-C(a_II_k+2a_U^\dagger I_u).
\end{align*}
Taking the difference, we obtain
\begin{equation*}
0\le (1-\e^{-r\Delta})(u'(a_U^*)-u'(a_U^\dagger))+C(2a_U^\dagger-a_U^*)I_u.
\end{equation*}
Applying the mean value theorem to $u'$, there exists $a\in [a_U^\dagger,a_U^*]$ such that
\begin{equation}
-(1-\e^{-r\Delta})u''(a)(a_U^*-a_U^\dagger)\le C(2a_U^\dagger-a_U^*)I_u.\label{eq:u''bound}
\end{equation}
Since $u''<0$, $a_U^*-a_U^\dagger\ge 0$, and $I_u>0$, it follows from \eqref{eq:u''bound} that $2a_U^\dagger-a_U^*\ge 0$, which implies $a_U^*/2\le a_U^\dagger$. Finally, noting that $m=\min_{a\in A}\abs{u''(a)}$, it follows from \eqref{eq:u''bound} that
\begin{equation*}
(1-\e^{-r\Delta})m(a_U^*-a_U^\dagger)\le C(2a_U^\dagger-a_U^*)I_u,
\end{equation*}
which is equivalent to \eqref{eq:aU_bound2}. \qedsymbol

\begin{step}
If $a_U^*,a_U^\dagger$ are interior, then \eqref{eq:aU_bound3} holds.
\end{step}

If $a_U^*,a_U^\dagger$ are interior, then $f'(a_U^*)=g'(a_U^\dagger)=0$. Then \eqref{eq:u''bound} holds with equality. Noting that $M=\max_{a\in A}\abs{u''(a)}$, we obtain \eqref{eq:aU_bound3} by a similar argument.
\end{proof}

\newpage

\begin{center}
\textbf{\Huge Online Appendix}
\end{center}

\section{Observational equivalence between single and multiple signals}\label{sec:signal}


In Section \ref{sec:model} we supposed that an infected agent is informed of the infection with probability $\sigma$ and that a known (unknown) infected agent dies with probability $\delta$ ($0$). In this Appendix we elaborate upon this assumption and show that it can model the case in which there are multiple signals.

Suppose now that there are multiple signals that may be received upon infection, indexed $j=1,\dots,J$. Let $\sigma_j>0$ be the probability of receiving signal $j$ upon infection, with $\sigma\coloneqq \sum_{j=1}^J\sigma_j\in (0,1]$. Let $\delta_j\in [0,1]$ be the fatality rate conditional on receiving signal $j$ and $\delta\coloneqq (\sum_{j=1}^J\sigma_j\delta_j)/\sigma$ be the expected fatality conditional on receiving any signal. For instance, the signal could encode the result of a laboratory test as well as the type and severity of symptoms. Suppose a type $j$ known infected agent takes action $a_j$ with associated flow utility $u_j$. Then by \eqref{eq:VIk}, the value function of a type $j$ agent is
\begin{equation*}
V_j\coloneqq \frac{(1-\e^{-r\Delta})u_j+\e^{-r\Delta}\gamma\Delta\delta_j u_D}{1-\e^{-r\Delta}(1-\gamma\Delta)}.
\end{equation*}
Therefore the expected continuation value of being known infected is
\begin{equation}
V_{I_k}\coloneqq \frac{1}{\sigma}\sum_{j=1}^J\sigma_j V_j=\frac{(1-\e^{-r\Delta})u_I+\e^{-r\Delta}\gamma\Delta\delta u_D}{1-\e^{-r\Delta}(1-\gamma\Delta)},\label{eq:VIkj}
\end{equation}
where $u_I\coloneqq(\sum_{j=1}^J\sigma_ju_j)/\sigma$ is the expected flow utility of being known infected. Because \eqref{eq:VIkj} is identical to \eqref{eq:VIk} and the continuation values $\set{V_j}$ affect the behavior of unknown agents only through $V_{I_k}$ due to expected utility maximization, a model with multiple signals and heterogeneous fatality is observationally equivalent to a model with a single signal and uniform fatality. The flow utility $u_I$ and action $a_I$ in Section \ref{sec:model} can be interpreted as the average across those of known infected agents.

\section{Continuous-time recursive formulations} \label{sec:recAPP}

In the main text we considered a model in which the number of agents was finite and time was discrete. However, to make the model computationally tractable, when producing the figures in the main text we assume a continuous-time model with a continuum of agents. In this appendix we consider three allocations: 
\begin{enumerate}
\item the efficient allocation (denoted SPP for ``social planner's problem'');
\item Perfect Bayesian equilibria (denoted PBE);
\item Perfect recall Markov equilibria (denoted ME). 
\end{enumerate}
We first define the above allocations and formulate the individual decisions problems recursively. The numerical algorithm used to solve each of the above will be outlined in Appendix \ref{sec:solve}. 

\subsection{Simplifications of the state space} \label{sec:state_simp}

Prior to the separate analysis of the above problems, we outline some simplifications common to all three. Note that in the model of the main text, the state variable for the economy is the sextuple $(S,I_u,I_k,R_u,R_k,D)$. However, by using the fact that an exogenous fraction $\sigma$ of the total number of infected agents $I \coloneqq I_u + I_k$ develop symptoms, we have the simplifications
\begin{subequations}\label{eq:SIsimp}
\begin{align}
(I_u, I_k) & = ((1-\sigma) I, \sigma I) \label{eq:SIsimp.I}
\\ (R_u, R_k+D) & = ((1-\sigma) (R+D), \sigma (R+D)) \label{eq:SIsimp.RD}
\end{align}
\end{subequations}
and hence
\begin{equation}
S + I_u + R_u  = S + (1-\sigma) I + (1-\sigma) (1 - S - I) = \sigma S + 1 - \sigma.
\label{SIuRu}
\end{equation}
The fraction of susceptible agents who do not exhibit symptoms (which is also the belief of an unknown agent that they are susceptible) is given by
\begin{equation}
\mu \coloneqq \frac{S}{\sigma S + 1 - \sigma}.
\label{eq:muAPP}
\end{equation}
Section \ref{subsec:dynamics} shows that when we assume a continuum of agents, the law of motion of $S$ and $I$ becomes deterministic and is given by equations \eqref{eq:SI}. Because the behavior of $R_k,R_u,D$ agents does not affect state transitions, with a slight abuse of notation we can define the minimal state space by
\begin{equation}
Z=\set{(S,I)|S\ge 0, I\ge 0, S+I\le 1} \label{eq:Z_min}
\end{equation}
and so we henceforth write the action $\tilde{a}(S,I)$ of unknown agents as a function of $(S,I)$ rather than the entire state $z$. When unknown agents all adhere to the policy function $\tilde{a}(S,I)$, the continuous-time law of motion of $S$ and $I$ is given by
\begin{subequations}\label{eq:SIcont}
\begin{align}
\dot{S} & = -\beta \tilde{a}(S,I)S(\sigma a_{I_k}^*+(1-\sigma)\tilde{a}(S,I))I, \label{eq:SIcont.s}
\\ \dot{I} & = (\beta \tilde{a}(S,I)S(\sigma a_{I_k}^*+(1-\sigma)\tilde{a}(S,I)-\gamma)I  \label{eq:SIcont.i}
\end{align}
\end{subequations}
where the dot notation indicates derivative with respect to time. When known infected agents  adhere to $a_{I_k}^*$ and unknown infected agents adhere to the policy function $\tilde{a}$, a susceptible agent taking action $a$ becomes infected at rate
\begin{equation}
\beta a(\sigma a_{I_k}^*+(1-\sigma)\tilde{a}(S,I))I.
\label{ind_hazard}
\end{equation}
Consequently, an unknown agent taking action $a_U$ develops symptoms at rate 
\begin{equation}
\sigma \mu \beta a_U(\sigma a_{I_k}^*+(1-\sigma)\tilde{a}(S,I))I
\label{ind_hazard_sympt}
\end{equation}
where $\mu$ is given by \eqref{eq:muAPP}. 

\subsection{Social planner's problem}

In this section we derive the form of the social planner's problem in preparation for the numerical analysis in Appendix \ref{sec:solve}. The flow payoffs experienced by the planner when the state is $(S,I_u,I_k,R_u,R_k,D)$ and the action taken by the unknown agents is $a_U$ are given by
$$
F(S,I_u,I_k,R_u,R_k,D,a_U) = r{\left[Du_D + (S + I_u + R_u)u(a_U) + I_ku(a_{I_k}^*)\right]} 
$$
because $a=1$ for recovered agents and $u(1) = 0$. Using the simplifications in Appendix \ref{sec:state_simp}, the payoff to the planner may be written
$$
r(Du_D + (\sigma S + 1-\sigma)u(a_U) + \sigma Iu(a_{I_k}^*)).
$$ 
The value function of the planner conditional on no vaccine arrival solves the Hamilton-Jacobi-Bellman equation
\begin{equation}
\begin{aligned} 
rW(S,I,D) & = rDu_D + \nu[\overline{W}(I,D) - W(S,I,D)] + \gamma\delta\sigma I \partial_D W(S,I,D)
\\ & + \max_{\substack{a_U\in [\underline{a}, 1]} } r{\left((\sigma S + 1-\sigma )u(a_U) + \sigma Iu(a_{I_k}^*)\right)}
\\ & - \beta Sa_U{\left((1-\sigma) a_U + \sigma a_{I_k}^*\right)}I\partial_SW(S,I,D) 
\\ & + [\beta Sa_U{\left((1-\sigma) a_U + \sigma a_{I_k}^* \right)} - \gamma ]I \partial_I W(S,I,D)
\end{aligned}
\label{Wnovac}
\end{equation}
where $\overline{W}(I,D)$ is the value function of the planner after the arrival of the vaccine, which solves the Hamilton-Jacobi-Bellman equation
\begin{equation} 
r\overline{W}(I,D) = rDu_D + r \sigma Iu(a_{I_k}^*) - \gamma I\partial_I\overline{W}(I,D) + \gamma\delta\sigma I \partial_D \overline{W}(I,D).
\label{Wbar}
\end{equation} 
\noindent The following shows that the above equations admit further simplifications, because the value function after the arrival of the vaccine is linear in the population shares of infected and dead agents.  

\begin{lem}\label{LemmaFORM}
The value function of the planner after the arrival of the vaccine is $\overline{W}(I,D) = Du_D - \overline{C}_{\textnormal{vac}}I$, where the function $C$ solves the Hamilton-Jacobi-Bellman equation
\begin{equation}
\overline{C}_{\textnormal{vac}} = \frac{\sigma}{r+\gamma}(-\gamma\delta u_D - r u(a_{I_k}^*)).
\label{Vvac}
\end{equation}
The value function prior to the arrival of the vaccine may be written as $W(S,I,D) = Du_D - C(S,I)$, where 
\begin{align*}
(r+\nu)C(S,I) = & \min_{\substack{a_U\in [\underline{a}, 1]} } r{\left((\sigma S + 1-\sigma )[-u(a_U)] + \sigma I[-u(a_{I_k}^*)]\right)} + \nu \overline{C}_{\textnormal{vac}}I
\\ & + \gamma\delta\sigma I[-u_D] -\beta SIa_U{\left((1-\sigma) a_U + \sigma a_{I_k}^*\right)}\partial_SC(S,I)
\\ & + [\beta Sa_U{\left((1-\sigma) a_U + \sigma a_{I_k}^*\right)} - \gamma ]I \partial_IC(S,I).
\end{align*}
\end{lem}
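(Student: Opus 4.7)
The plan is a guess-and-verify argument, exploiting the fact that in both HJB equations the variable $D$ appears only through the flow term $rDu_D$ and through $\partial_D W$ multiplied by the outflow rate $\gamma\delta\sigma I$. This structure suggests that the value function should be affine in $D$ with slope exactly $u_D$, peeling off the $D$-dependence cleanly and leaving an equation in $(S,I)$ alone.

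First I would handle the post-vaccine HJB \eqref{Wbar}. Substituting the ansatz $\overline{W}(I,D) = Du_D - \overline{C}_{\text{vac}}I$, we have $\partial_D \overline{W} = u_D$ and $\partial_I \overline{W} = -\overline{C}_{\text{vac}}$, so \eqref{Wbar} becomes
\begin{equation*}
r(Du_D - \overline{C}_{\text{vac}}I) = rDu_D + r\sigma I u(a_{I_k}^*) + \gamma I \overline{C}_{\text{vac}} + \gamma\delta\sigma I u_D.
\end{equation*}
The $rDu_D$ terms cancel; dividing through by $I$ and solving the resulting linear equation for $\overline{C}_{\text{vac}}$ yields \eqref{Vvac}.

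Next I would address the pre-vaccine case, substituting $W(S,I,D) = Du_D - C(S,I)$ into \eqref{Wnovac}. Using the result of the previous step, the jump term simplifies to $\nu[\overline{W}(I,D) - W(S,I,D)] = \nu[C(S,I) - \overline{C}_{\text{vac}}I]$, again eliminating the $Du_D$ dependence because $\overline{W}$ and $W$ share the same affine-in-$D$ form. The remaining $rDu_D$ on the right cancels with the corresponding term on the left, while $\gamma\delta\sigma I \partial_D W = \gamma\delta\sigma I u_D$ becomes a stand-alone term independent of $a_U$. After replacing $\partial_S W = -\partial_S C$ and $\partial_I W = -\partial_I C$ and moving $\nu C(S,I)$ to the left, I would convert the maximization over $a_U$ into a minimization by negating the integrand, which turns $u(a_U)$ and $u(a_{I_k}^*)$ into their negatives $[-u(a_U)]$ and $[-u(a_{I_k}^*)]$ and $u_D$ into $[-u_D]$; collecting terms yields the claimed HJB for $C(S,I)$.

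The argument is essentially a verification, so the main thing to be careful about is tracking signs consistently when passing between the (nonpositive) welfare $W$ and the (nonnegative) cost $C$, and making sure the $a_U$-independent terms are handled correctly when the $\max$ becomes a $\min$. Strictly speaking, guess-and-verify only shows that the stated functional form is a solution of the HJB system; to conclude it equals the planner's value function one would invoke the standard verification theorem together with uniqueness of the bounded (viscosity) solution to the HJB, which holds under the standing regularity of $u$ and boundedness of the state space $Z$ in \eqref{eq:Z_min}.
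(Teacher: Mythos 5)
Your proposal is correct and follows essentially the same route as the paper: a direct guess-and-verify substitution of the affine-in-$D$ ansatz into \eqref{Wbar} and \eqref{Wnovac}, cancelling the $rDu_D$ terms, solving the resulting linear equation for $\overline{C}_{\textnormal{vac}}$, and converting the $\max$ into a $\min$ by negation to obtain the equation for $C(S,I)$. Your closing remark on invoking a verification/uniqueness theorem is extra caution beyond what the paper records, not a different method.
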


\begin{proof}
Substituting the assumed form $\overline{W}(I,D) = Du_D - \overline{C}_{\textnormal{vac}}I$ into \eqref{Wbar} gives 
\begin{align*}
rDu_D - r\overline{C}_{\textnormal{vac}}I & = rDu_D + r\sigma Iu(a_{I_k}^*) + \gamma \overline{C}_{\textnormal{vac}} I + \gamma\delta\sigma I u_D
\end{align*} 
which rearranges to \eqref{Vvac}. Substituting $W(S,I,D) = Du_D - C(S,I)$ into \eqref{Wnovac} gives 
\begin{align*}
r(Du_D - C(S,I)) = & rDu_D + \nu[Du_D - \overline{C}_{\textnormal{vac}}I - Du_D + C(S,I)] + \gamma\delta\sigma I u_D
\\ & + \max_{\substack{a_U\in [\underline{a}, 1]} } r{\left((\sigma S + 1-\sigma )u(a_U) + \sigma Iu(a_{I_k}^*)\right)}
\\ & - \beta Sa_U{\left((1-\sigma) a_U + \sigma a_{I_k}^*\right)}I[-\partial_SC(S,I)]
\\ & + [\beta Sa_U{\left((1-\sigma) a_U + \sigma a_{I_k}^* \right)} - \gamma ]I[-\partial_IC(S,I)]
\end{align*}
from which the claim follows upon simplification.
\end{proof}

The function $C$ defined in Lemma \ref{LemmaFORM} has a natural interpretation as the ``cost of the pandemic'' in terms of utility (omitting the role of deaths) and is the solution to a cost minimization problem with discount rate $r+\nu$ and flow cost
\begin{equation}
\begin{aligned}
r{\left((\sigma S + 1 - \sigma)[-u(a_U)] + \sigma I[-u(a_{I_k}^*)]\right)} - \gamma\delta\sigma Iu_D + \nu \overline{C}_{\textnormal{vac}}I
\end{aligned}
\label{flowCOST2}
\end{equation}
where the state variable $(S,I)$ evolves according to 
\begin{equation}
\begin{aligned}
\dot{S} & = -\beta SIa_U ((1-\sigma) a_U + \sigma a_{I_k}^*)
\\ \dot{I} & = \beta SIa_U ((1-\sigma) a_U + \sigma a_{I_k}^*) - \gamma I.
\end{aligned}
\label{FinalSYS} 
\end{equation}
To produce the figures for the social planner's problem we solve a continuous-time control problem with state $(S,I)$, flow payoff \eqref{flowCOST2}, discount rate $r+ \nu$ and law of motion for the state variables \eqref{FinalSYS}.

\subsection{Perfect Bayesian equilibrium} \label{PBE_app}

We now state the recursive formulations of the problem of the unknown agents in the Perfect Bayesian equilibria. Let $V_h(S,I)$ be the value function of type $h=U,I_k$ agents, $\tilde{a}(S,I)$ be the policy function of unknown agents, and let partial derivatives be denoted by $\partial_S$ etc. In this case the Hamilton-Jacobi-Bellman equation for the unknown agent becomes
\begin{align}
(r+\nu)V_U(S, I) = & \max_{a_U \in [\ubar{a},1]} ru(a_U) \notag \\
& + \sigma \mu \beta a_U{\left((1-\sigma) \tilde{a}(S,I) + \sigma a_{I_k}^* \right)}I[V_{I_k}(S,I) - V_U(S,I)] \notag \\
&  -\beta S\tilde{a}(S,I){\left((1-\sigma) \tilde{a}(S,I) + \sigma a_{I_k}^* \right)} I\partial_S V_U(S,I) \notag \\
& + {\left(\beta S\tilde{a}(S,I){\left((1-\sigma) \tilde{a}(S,I) + \sigma a_{I_k}^* \right)} - \gamma \right)}I\partial_I V_U(S,I).
\label{PBEvalue2}
\end{align}

\subsection{Perfect recall Markov equilibrium}\label{CE2new}

We now consider an extension of the equilibrium concept explored in the main text in which we allow the agents to remember their previous actions so the beliefs of the agents enter as a separate state variable. In contrast with the Perfect Bayesian equilibrium notion in the main text, the evolution of an individual's belief will now depend directly upon their own actions. The evolution of the aggregate state variables and the value functions of the known infected, recovered, and dead agents are unchanged relative to the perfect Bayesian equilibrium, and we must therefore only alter the problem of the unknown agents. Now \eqref{eq:muz} will remain their belief \emph{in equilibrium}, but not if they deviated from the equilibrium action in the past.

If the average action of unknown agents is $\tilde{a}$, an agent with belief $\mu$ who chooses activity $a$ will believe that if susceptible, they become infected at rate $\beta a((1-\sigma)\tilde{a} + \sigma a^*_{I_k})I$. Their belief that they are susceptible is then the probability they were susceptible in the previous period multiplied by the probability they were not infected without diagnosis during the last period, or
\begin{equation}
\mu_{+\Delta} = \mu(1 - \Delta(1-\sigma) \beta a((1-\sigma) \tilde{a} + \sigma a^*_{I_k})I).
\label{beliefEVOL}
\end{equation} 
We then define a \emph{perfect recall Markov equilibrium} as an allocation in which the unknown agents optimize given the law of motion of the population shares \emph{and} their beliefs, and these beliefs are consistent with equilibrium behavior and Bayes' rule. Formally, we adopt the continuous-time formulation used in the numerical analysis and proceed as follows.  Rearranging \eqref{beliefEVOL} and sending $\Delta \to 0$ gives the evolution of beliefs
\begin{equation}
\dot{\mu} = - \mu(1-\sigma) \beta a{\left((1-\sigma) \tilde{a}(S,I) + \sigma a_{I_k}^* \right)}I.
\label{rhoLAW}
\end{equation}
The state variables for the unknown agents are now their health status and $(S,I,\mu)$, with the evolution of $(S,I)$ and $\mu$ given by \eqref{FinalSYS} and \eqref{rhoLAW}, respectively. Denote the value function of unknown agents by $V_U(S,I,\mu)$ and the value function of known infected agents by $V_{I_k}(S,I)$.  The value function for symptomatic agents is independent of the aggregate state variables, since their corresponding Hamilton-Jacobi-Bellman equation is 
$$
rV_{I_k}(S,I) = ru(a_{I_k}^*) + \gamma {\left(\delta[u_D - V_{I_k}(S,I)] + (1-\delta)[-V_{I_k}(S,I)] \right)},
$$
which rearranges to 
\begin{equation}
V_{I_k}(S,I) = \frac{1}{r+\gamma}[ru(a_{I_k}^*) + \gamma \delta u_D].
\label{Uvac}
\end{equation}
The Hamilton-Jacobi-Bellman for unknown agents is
\begin{equation}
\begin{aligned} 
(r+\nu)V_U(S,I,\mu) = & \max_{a_U \in [\underline{a},1]} ru(a_U) 
\\ & + \sigma \mu \beta a_U{\left((1-\sigma) \tilde{a}(S,I) + \sigma a_{I_k}^* \right)}I[V_{I_k}(S,I,\mu) - V_U(S,I,\mu)]
\\ &  -\beta S\tilde{a}(S,I){\left((1-\sigma) \tilde{a}(S,I) + \sigma a_{I_k}^* \right)} I \partial_S V_U(S,I,\mu)
\\ & + {\left(\beta S\tilde{a}(S,I){\left((1-\sigma) \tilde{a}(S,I) + \sigma a_{I_k}^* \right)} - \gamma \right)}I\partial_I V_U(S,I,\mu)
\\ & - \mu(1-\sigma)\beta a_U{\left((1-\sigma) \tilde{a}(S,I) + \sigma a_{I_k}^* \right)}I\partial_{\mu}V_U(S,I,\mu).
\end{aligned} 
\label{AGENTprob}
\end{equation}

Given an average action $\tilde{a}$ of unknown agents, there is an associated policy function $a(S,I,\mu; \tilde{a})$ solving the problem of a given unknown agent. Now define an operator $J(\tilde{a})(S,I) = a(S,I,S/[1-\sigma + \sigma S]; \tilde{a}) - \tilde{a}$. The equilibrium notion we adopt in this section is then that of a Markov perfect equilibrium, in which all agents solve their individual problems taking the aggregate law of motion as given, and the associated law of motion is consistent with individual behavior. 

\begin{defn}\label{Mdef}
A \emph{Perfect recall Markov equilibrium} (or Markov equilibrium for short) consists of value functions $V_U(S,I,\mu)$ and $V_{I_k}(S,I)$ for unknown agents and known infected agents together with a policy function $a(S,I,\mu)$ for unknown agents such that:
\begin{itemize}
\item The functions $V_U(S,I,\mu)$, $a(S,I,\mu)$ and $V_{I_k}(S,I)$ solve the problems of the unknown agents and known infected agents, respectively.
\item The law of motion of the aggregate state is consistent with the policy function of the unknown agents, or $J(\tilde{a}) = 0$.
\end{itemize}
\end{defn}

As Figure \ref{fig:PRME_PBE} shows, in the perfect recall Markov equilibrium agents take higher actions everywhere in the state space and the length of the pandemic is reduced. However, the difference in actions between the perfect Bayesian and perfect recall equilibria are small in the relevant part of the state space and the two models provide similar predictions.

\begin{figure}[!htb]
\centering
\includegraphics[width=0.48\linewidth]{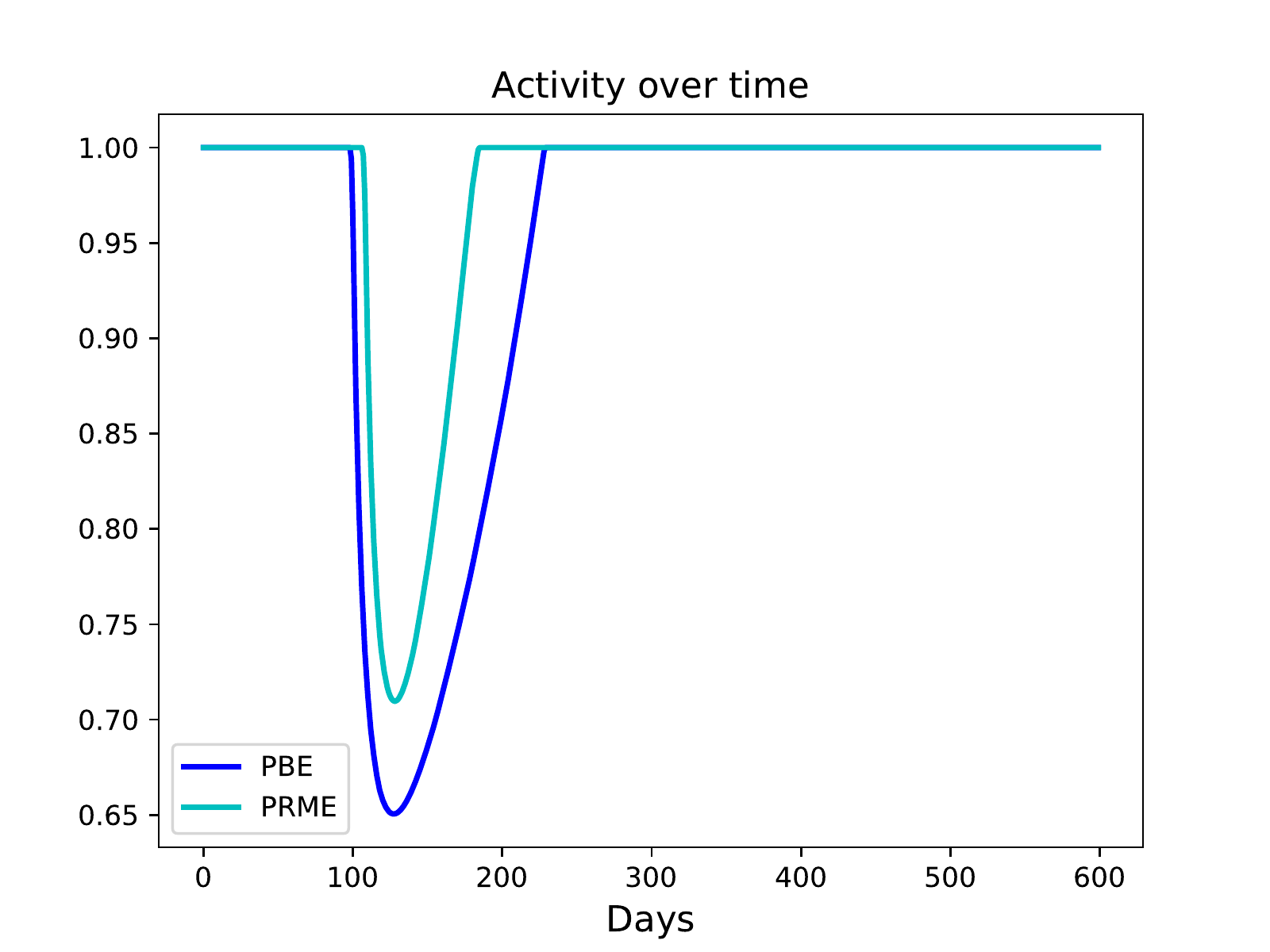}
\includegraphics[width=0.48\linewidth]{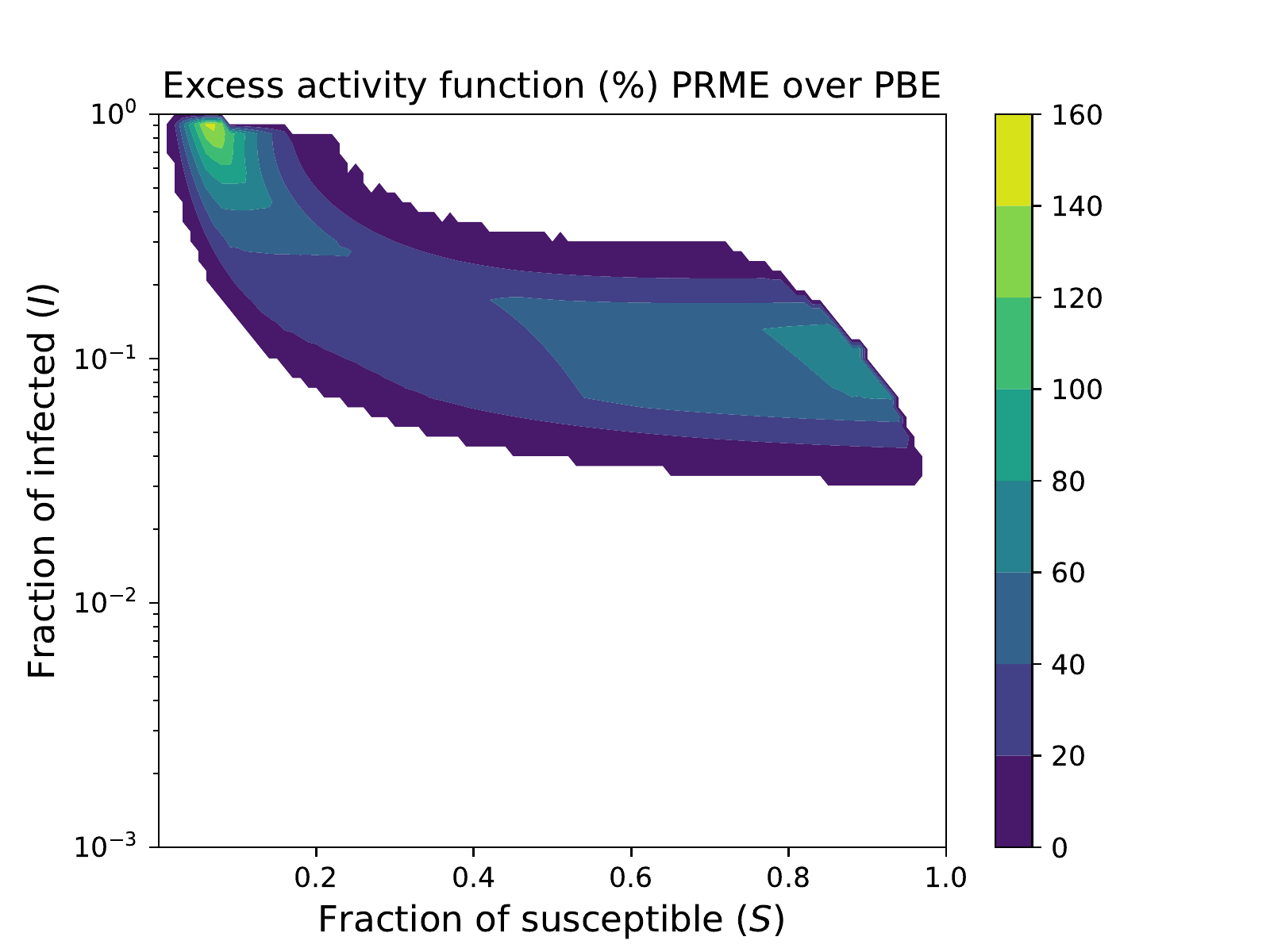}
\caption{Comparison of activity levels in perfect Bayesian and perfect recall Markov equilibria.}\label{fig:PRME_PBE}
\end{figure}

\section{Solution algorithm}\label{sec:solve}

In this section we outline the numerical method used to produce the figures in the main text. As noted in Appendix \ref{sec:recAPP}, we consider the continuous-time limit of our model with a continuum of agents and apply the finite-state Markov chain method of \cite{KushnerDupuis1992}. The local consistency requirements for the aggregate state are given by
\begin{subequations}\label{LC3}
\begin{align}
\E[\Delta S] & = - \Delta_t\beta SIa_U{\left((1-\sigma) a_U + \sigma a_{I_k}^*\right)} + o(\Delta_t)
\\ \E[\Delta I] & = \Delta_t{\left(\beta SIa_U{\left((1-\sigma) a_U + \sigma a_{I_k}^*\right)} -  \gamma I\right)} + o(\Delta_t).
\end{align}
\end{subequations}

\subsection{Exponential grid}\label{subsec:solve.grid}

When solving for an SIR model numerically, because the fraction of infected agents $I$ varies by many orders of magnitude (say between $10^{-6}$ and $10^{-1}$), it is important to use a grid that properly covers the relevant range of the state space, such as an exponential grid. Below, we describe the exponential grid proposed by \cite{Gouin-BonenfantToda2022QE}. In general, suppose we would like to construct an $N$-point exponential grid on a given interval $(a,b)$, where the lower bound $a$ is not necessarily positive. A natural idea to deal with such a case is as follows.

\begin{framed}
\begin{oneshot}[Constructing exponential grid]\quad 
\begin{enumerate}
\item Choose a shift parameter $s>-a$.
\item Construct an $N$-point evenly-spaced grid on $(\log(a+s),\log(b+s))$.
\item Take the exponential and subtract $s$.
\end{enumerate}
\end{oneshot}
\end{framed}

The remaining question is how to choose the shift parameter $s$. Suppose we would like to specify the median grid point as $c\in (a,b)$. Since the median of the evenly-spaced grid on $(\log(a+s),\log(b+s))$ is $\frac{1}{2}(\log(a+s)+\log(b+s))$, we need to take $s>-a$ such that
\begin{align*}
c=\exp\left(\frac{1}{2}(\log(a+s)+\log(b+s))\right)-s
&\iff s=\frac{c^2-ab}{a+b-2c}.
\end{align*}
Note that in this case $s+a=\frac{c^2-ab}{a+b-2c}+a=\frac{(c-a)^2}{a+b-2c}$, so $s+a$ is positive if and only if $c<\frac{a+b}{2}$. Therefore, for any $c\in \left(a,\frac{a+b}{2}\right)$, it is possible to construct an exponentially-spaced grid with end points $(a,b)$ and median point $c$.

To solve the model in Section \ref{sec:num} numerically, we construct finite grids for $S,I$ and define the minimal state space $Z$ in \eqref{eq:Z_min} by the Cartesian product of the $S,I$ grids. For the $S$-space (fraction of susceptible agents), we use a 100-point uniform grid on $[10^{-8},1]$. For the $I$-space we use a 400-point exponential grid on $[10^{-8},1]$ with a median point of $10^{-4}$. We now write these grids as $\Sigma_S= \set{0,1/N_S,\dots,1-1/N_S,1}$ and $\Sigma_I=\set{I_0,I_1,\dots,I_{N_I}}$, and define
\begin{align*}
\Delta_{Ii}^- & = I_i - I_{i-1} &  i & =1,\dots, N_I 
\\
\Delta_{Ii}^+ & = I_{i+1} - I_{i} & i & =0,\dots, N_I-1
\end{align*}
and declare $\Delta_{I0}^{-} = \Delta_{IN_I}^{+} = 0$ and $\Delta_S = 1/N_S$. We then write $\Sigma \coloneqq \Sigma_S \times \Sigma_I$. This will be the state space for both the social planner's problem and the problem of the unknown agent in the perfect Bayesian equilibrium. 

\subsection{Social planner's problem}\label{subsec:solve.e}

We now describe the numerical method we use for computing the solution to the social planner's problem. The value function of the planner is of the form $W(S,I,D) = Du_D - C(S,I)$, where $C$ is given in Lemma \ref{LemmaFORM}. We first construct a locally consistent Markov chain for the law of motion of the state variables. For an arbitrary $(S,I) \in \Sigma$ there are three possible transitions, to the points $(S-\Delta_S,I), (S,I-\Delta_I^-)$ and $(S,I+\Delta_I^+)$, with associated probabilities denoted $p^{-S}, p^{-I}$ and $p^{+I}$. The local consistency requirements \eqref{LC3} become
\begin{align*}
 - \Delta_Sp^{-S} & = -\Delta_t \beta SI a_U{\left((1-\sigma) a_U + \sigma a_{I_k}^*\right)} + o(\Delta_t),
\\
- \Delta_I^-p^{-I} + \Delta_I^+p^{+I} & = \Delta_t{\left(\beta Sa_U{\left((1-\sigma) a_U + \sigma a_{I_k}^*\right)} -  \gamma\right)} I + o(\Delta_t).
\end{align*}
Inspection reveals it will suffice to set 
\begin{subequations}\label{probSmain2}
\begin{align}
p^{-S} & = \frac{\Delta_t}{\Delta_S}\beta SIa_U{\left((1-\sigma) a_U + \sigma a_{I_k}^*\right)},
\\
p^{+ I} & = \frac{\Delta_t}{\Delta_I^+} \beta SIa_U{\left((1-\sigma) a_U + \sigma a_{I_k}^*\right)},
\\
p^{- I} & = \frac{\Delta_t}{\Delta_I^-} \gamma I.
\end{align} 
\end{subequations} 
To ensure the chain remains on the grid we declare that at $I=1$ we have $a_U \le \hat{a}(S)$, the point at which $0 = \beta Sa_U((1-\sigma) a_U + \sigma a_{I_k}^*) - \gamma$. For $S>0$ we have
\begin{equation*}
\hat{a}(S) = \frac{1}{2\beta S(1-\sigma)}{\left(-\beta S\sigma a_{I_k}^* + \sqrt{[\beta S\sigma a_{I_k}^*]^2 + 4\gamma \beta S(1-\sigma)} \right)}. 
\end{equation*}
We then have the Bellman equation 
\begin{align*}
C(S,I) = & \nu \Delta_t\bar{C}_{\mathrm{vac}}I + \min_{a_U\in [\ubar{a}, 1]}-\Delta_t r{\left((1-\sigma+\sigma S)u(a_U) + \sigma I u(a_{I_k}^*)\right)}  - \Delta_t \gamma\delta\sigma Iu_D 
\\
& + \e^{-(r+\nu)\Delta_t} {\left(p^{-S}C(S-\Delta_S,I) + p^{+I}C(S,I+\Delta_I^+) + p^{-I}C(S,I-\Delta_I^-)\right)} \\
& + \e^{-(r+\nu)\Delta_t} (1-p^{-S}-p^{+I} -p^{-I})C(S,I).
\end{align*}
Omitting terms independent of the control, using \eqref{probSmain2} and sending $\Delta_t \to 0$ gives
\begin{align*}
\min_{a_U\in [\ubar{a}, 1]} & - r(1-\sigma+\sigma S)u(a_U) + \beta SIa_U{\left((1-\sigma) a_U + \sigma a_{I_k}^*\right)}[-C^{BS}] \\
&  + \beta SIa_U{\left((1-\sigma) a_U + \sigma a_{I_k}^*\right)} C^{FI},
\end{align*}
where we abbreviated 
\begin{align*}
C^{BS} & = \frac{1}{\Delta_S}(C(S,I) - C(S-\Delta_S,I)),
\\ C^{FI} & = \frac{1}{\Delta_I^+}(C(S,I+\Delta_I^+) - C(S,I)),
\end{align*}
and the superscripts stand for ``backward in $S$'' and ``forward in $I$'', respectively. Dividing by $-r(1-\sigma+\sigma S)$ gives
\begin{equation}
\max_{a_U\in [\ubar{a}, 1]} u(a_U) + \beta Sa_U{\left((1-\sigma) a_U + \sigma a_{I_k}^*\right)}\frac{I(C^{BS}-C^{FI})}{r(1-\sigma+\sigma S)}.
\label{obj2}
\end{equation}
Note that \eqref{obj2} is of the form
\begin{equation}
G(a,b,c) \coloneqq \max_{a_U \in [a,b]} u(a_U) + ca_U{\left((1-\sigma)a_U + \sigma a_{I_k}^*\right)},
\label{maxFORM}
\end{equation} 
where $c \coloneqq \beta SI{\left(C^{BS}-C^{FI}\right)}/[r(1-\sigma+\sigma S)]$. If $c \ge 0$ then $a_U = 1$. Otherwise the objective function is concave, and we only need to evaluate the first-order conditions. The first-order condition condition for \eqref{maxFORM} when $u$ is CRRA is 
\begin{equation}
0 = a_U^{-\alpha} + c(\sigma a_{I_k}^* + 2(1-\sigma) a_U).
\label{GENfoc1}
\end{equation}
In the case of log utility the first-order condition \eqref{GENfoc1} reduces to the quadratic
\begin{equation}
0 = 1 + c\sigma a_{I_k}^*a_U + 2c(1-\sigma) a_U^2.
\label{GENfoc1LOG}
\end{equation}

\begin{lem}\label{lem:alg}
For any $0 < a< b$, the solution to \eqref{maxFORM} is 
\begin{equation*}
a_U(c) = b1_{c\ge 0} + (1- 1_{c\ge 0})\max\set{a,  \min\set{a_U^\mathrm{FOC}(c), b }},
\end{equation*}
where $a_U^\mathrm{FOC}(c)$ solves \eqref{GENfoc1}. 
\end{lem}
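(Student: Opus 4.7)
The plan is to split the analysis according to the sign of $c$ and exploit the fact that the objective function
\begin{equation*}
F(a_U) \coloneqq u(a_U) + c a_U \bigl((1-\sigma) a_U + \sigma a_{I_k}^*\bigr)
\end{equation*}
has a second derivative $F''(a_U) = u''(a_U) + 2c(1-\sigma)$, which controls its shape. Since Assumption \ref{asmp:u} gives $u' > 0$ and $u'' < 0$, and by construction $\sigma \in (0,1]$ and $a_{I_k}^* \in [\ubar{a},1]$, the two cases lead to qualitatively different optimizers.

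For the case $c \ge 0$, I would argue that $F$ is strictly increasing on $[a,b]$. Indeed, $F'(a_U) = u'(a_U) + c\bigl(2(1-\sigma) a_U + \sigma a_{I_k}^*\bigr)$, and every term is nonnegative (with $u'(a_U) > 0$), so $F'(a_U) > 0$ throughout $[a,b]$. Hence the unique maximizer is $a_U = b$, which matches the formula's first term $b \cdot \mathbf{1}_{c \ge 0}$.

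For the case $c < 0$, I would first establish strict concavity of $F$ by noting $F''(a_U) = u''(a_U) + 2c(1-\sigma) < 0$, since both terms are nonpositive and $u'' < 0$. Strict concavity implies that $F'$ is strictly decreasing, so the equation $F'(a_U) = 0$, which is exactly the first-order condition \eqref{GENfoc1}, has at most one solution; denote this (possibly extended-real) root by $a_U^{\mathrm{FOC}}(c)$. Standard concave maximization on $[a,b]$ then yields the clipping rule: the maximizer equals $a_U^{\mathrm{FOC}}(c)$ when this root lies in $[a,b]$; equals $a$ when $a_U^{\mathrm{FOC}}(c) < a$ (because then $F'(a_U) \le F'(a) < 0$ on $[a,b]$, so $F$ is decreasing and the max is at $a$); and equals $b$ when $a_U^{\mathrm{FOC}}(c) > b$ (because then $F' > 0$ on $[a,b]$). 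All three subcases are simultaneously captured by $\max\{a, \min\{a_U^{\mathrm{FOC}}(c), b\}\}$, matching the second term of the claimed formula.

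I expect no substantial obstacles; the argument is essentially a textbook concave-maximization-on-an-interval exercise. The only points requiring care are the verification that the quadratic part of $F$ does not destroy monotonicity when $c \ge 0$ (handled by noting the coefficients $c(1-\sigma)$ and $c\sigma a_{I_k}^*$ are both nonnegative in that regime) and the verification that the FOC root is unique when $c < 0$ (handled by the strict concavity just established). Existence of the FOC root as a real number is not actually needed for the formula to be well-defined, since if no real root exists in $[a,b]$, the $\min$ and $\max$ operations select the appropriate endpoint automatically, consistent with the sign of $F'$ there.
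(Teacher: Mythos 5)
Your argument is correct and takes essentially the same route as the paper, whose justification for Lemma \ref{lem:alg} is just the brief remark preceding it: if $c\ge 0$ the objective is increasing so the corner $a_U=b$ is optimal, and if $c<0$ the objective is strictly concave so the maximizer is the first-order-condition root clipped to $[a,b]$. Your write-up simply makes the monotonicity check ($F'>0$ when $c\ge 0$) and the concavity check ($F''=u''+2c(1-\sigma)<0$ when $c<0$) explicit, which is exactly what the paper's argument relies on.
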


To avoid overflow in the code we divide all quantities by $\Delta_t$ and consider the limit of the above as $\Delta_t \to 0$. The linear system we solve at each stage is
\begin{align}
0 = & -r{\left((1-\sigma+\sigma S)u(a_U) + \sigma Iu(a_{I_k}^*)\right)} - \gamma\delta\sigma Iu_D + \nu \bar{C}_{\mathrm{vac}}I \notag \\
& + \bar{p}^{-S}C(S-\Delta_S,I)  + \bar{p}^{+I}C(S,I+\Delta_I^+) + \bar{p}^{-I}C(S,I-\Delta_I^-) \notag \\
& - (r +\nu + \bar{p}^{-S}+\bar{p}^{+I} + \bar{p}^{-I})C, \label{linSPP}
\end{align}
where for each transition probability we have $\bar{p} = p/\Delta_t$. The above system may be written in the form $0 = \textnormal{cost} + T_{\textnormal{SPP}}C$ for some matrix $T_{\textnormal{SPP}}$. Beginning with an arbitrary guess $a_U$, we solve the planner's problem by solving the linear system \eqref{linSPP}, replacing $a_U$ with the implied policy function $a_U$ in Lemma \ref{lem:alg}, and repeating until convergence.

\subsection{Perfect Bayesian Markov competitive equilibrium}\label{subsec:solve.eq}

We now outline the numerical method for computing the perfect Bayesian equilibrium. The Hamilton-Jacobi-Bellman equation for an unknown agent is given by \eqref{PBEvalue2}. We suppose that at an arbitrary $(S,I) \in \Sigma$ there are three possible transitions, to $(S-\Delta_S,I)$, $(S,I-\Delta_I^-)$ and $(S,I+\Delta_I^+)$, with associated transition probabilities $p^{-S}, p^{-I}$ and $p^{+I}$ given by
\begin{subequations}\label{probCE1}
\begin{align}
p^{-S} & = \frac{\Delta_t}{\Delta_S}\beta S\tilde{a}(S,I)({\left((1-\sigma) \tilde{a}(S,I) + \sigma a_{I_k}^* \right)}I,
\\
p^{\pm I} & = \frac{\Delta_t}{\Delta_I^{\pm}}\max\set{ \pm{\left[\beta S\tilde{a}(S,I){\left((1-\sigma) \tilde{a}(S,I) + \sigma a_{I_k}^* \right)} -  \gamma\right]} I, 0}.
\end{align}
\end{subequations}
The probability with which an unknown agent becomes known infected is then
\begin{equation}
p^{uk} = \Delta_t\sigma \mu\beta a_U{\left((1-\sigma) \tilde{a}(S,I) + \sigma a_{I_k}^* \right)}I, \label{eq:puk}
\end{equation}
where the superscript indicates this is the probability of transitioning from the unknown $u$ state to the known infected state $k$. The Bellman equation for unknown agents is then 
\begin{align*} 
& V_U(S,I) = \max_{a_U \in [\ubar{a},1]} \Delta_t ru(a_U)  + \e^{-(r+\nu)\Delta_t}p^{uk}V_{I_k}(S,I)
\\
& + \e^{-(r+\nu)\Delta_t}{\left(p^{-S} V_U(S-\Delta_S,I+\Delta_S) + p^{-I}V_U(S,I-\Delta_I^-) + p^{+I} V_U(S,I+\Delta_I^+)\right)}
\\
& + \e^{-(r+\nu)\Delta_t}{\left(1 - p^{-S} - p^{-I} - p^{+I} -  p^{uk}\right)}V_U(S,I). 
\end{align*}
Omitting terms independent of $a_U$, dividing by $r\Delta_t$ and sending $\Delta_t \to 0$, the maximization problem becomes
\begin{equation*}
\max_{a_U \in [\ubar{a},1]} u(a_U) + \bar{p}^{uk}r^{-1}[V_{I_k}(S,I) - V_U(S,I)]
\end{equation*}
where $\bar{p}^{uk} = p^{uk}/\Delta_t$. This problem is equivalent to
\begin{equation*}
\max_{a_U \in [\ubar{a},1]} u(a_U) + ca_U{\left((1-\sigma) \tilde{a}(S,I) + \sigma a_{I_k}^* \right)},
\end{equation*}
where
\begin{equation}
c = \sigma \mu\beta r^{-1}[V_{I_k}(S,I) - V_U(S,I)]I.
\label{2ndMAXPBE}
\end{equation} 
When the utility function takes the form \eqref{eq:CRRA}, the first-order condition rearranges to $a_U = a_U^\mathrm{FOC} \coloneqq [- b]^{-1/\alpha}$ for $b = c{\left((1-\sigma) \tilde{a}(S,I) + \sigma a_{I_k}^* \right)}$ and the optimal choice of $a_U$ is therefore
\begin{equation}
a_U = 1_{c \ge 0} + (1-1_{c \ge 0})\max\set{\ubar{a}, \min\set{1, a_U^\mathrm{FOC}}}. \label{aUFOC}
\end{equation}
The linear system characterizing the value function of the unknown agents is
\begin{align}
0 = & ru(a_U) + \bar{p}^{uk}V_{I_k}(S,I) \notag
\\
& + \bar{p}^{-I}V_U(S,I-\Delta_I^-) + \bar{p}^{+I}V_U(S,I+\Delta_I^+) + \bar{p}^{-S} V_U(S-\Delta_S,I+\Delta_S) \notag
\\
& - {\left(r + \nu + \bar{p}^{-S} + \bar{p}^{-I} + \bar{p}^{+I} + \bar{p}^{uk}\right)}V_U(S,I).
\label{linAPBE}
\end{align}
Which may be written in the form $0 = b + TV_S$ for some matrix $T$. We then iterate upon the policy function $a_U$. Beginning with an arbitrary guess $a_U$, we compute the value function of the unknown agent by solving the linear system \eqref{linAPBE}, replace $a_U$ with the implied policy function $a_U$ in \eqref{aUFOC}, and repeat until convergence.

\subsection{Perfect recall Markov equilibrium}\label{sec:PRME_num}

We now outline the numerical method for computing the perfect recall Markov equilibrium. We define a grid for beliefs $\Sigma_\mu = \{0, 1/N_\mu, \dots, 1-1/N_\mu, 1\}$ for some integer $N_\mu \geq 1$, and now let the state space be $\Sigma\coloneqq \Sigma_S\times \Sigma_I\times \Sigma_\mu$.

For unknown agents, we must specify the transition probabilities for their beliefs and the probability with which they become known infected. The latter quantity is simply \eqref{eq:puk}. We suppose that at an arbitrary $(S,I,\mu) \in \Sigma$, there are four possible transitions, to $(S-\Delta_S,I,\mu)$, $(S,I-\Delta_I^-,\mu)$, $(S,I+\Delta_I^+,\mu)$ and $(S,I,\mu-\Delta_{\mu})$, with associated probabilities $p^{-S}$, $p^{-I}$, $p^{+I}$ and $p^{-\mu}$. The local consistency requirements for the aggregate state are again satisfied if we choose $p^{-S},  p^{-I}$ and $p^{+I}$ according to \eqref{probCE1}. Using \eqref{rhoLAW}, the local consistency requirement for the belief variable is
\begin{equation*}
-\Delta_{\mu}p^{-\mu} = -\Delta_t (1-\sigma)\mu \beta a_U{\left((1-\sigma) \tilde{a}(S,I) + \sigma a_{I_k}^*\right)}I + o(\Delta_t),
\end{equation*}
which will be satisfied if we choose
\begin{equation}
p^{-\mu} = \frac{\Delta_t}{\Delta_{\mu}}\mu(1-\sigma)\beta a_U{\left((1-\sigma) \tilde{a}(S,I) + \sigma a_{I_k}^*\right)}I = \frac{1}{\Delta_{\mu}}(1/\sigma-1) p^{uk},
\label{probSf6}
\end{equation}
where $p^{uk}$ is given by \eqref{eq:puk}. The Bellman equation for unknown agents is then 
\begin{align*}
V_U(S,I,\mu) = & \max_{a_U \in [\ubar{a},1]} \Delta_t ru(a_U)  + \e^{-(r+\nu)\Delta_t}\frac{\Delta_{\mu} \sigma}{1-\sigma}p^{-\mu}[V_{I_k}(S,I) - V_U(S,I,\mu)]
\\ & + \e^{-(r+\nu)\Delta_t}{\left(p^{-S} V_U(S-\Delta_S,I+\Delta_S,\mu) + p^{-I}V_U(S,I-\Delta_I^-,\mu; u)\right)}
\\ & + \e^{-(r+\nu)\Delta_t}{\left(p^{+I} V_U(S,I+\Delta_I^+,\mu) +  p^{-\mu}V_U(S,I,\mu-\Delta_{\mu})\right)}
\\ & + \e^{-(r+\nu)\Delta_t}{\left(1 - p^{-S} - p^{-I} - p^{+I} -  p^{-\mu} - \frac{\Delta_{\mu} \sigma}{1-\sigma}p^{-\mu}\right)}V_U(S,I,\mu). 
\end{align*}
Omitting terms independent of $a_U$, dividing by $r\Delta_t$ and sending $\Delta_t \to 0$, the maximization problem becomes
\begin{multline*}
\max_{a_U \in [\ubar{a},1]} u(a_U) \\
+\bar{p}^{-\mu}r^{-1}\left(\frac{\Delta_{\mu}\sigma}{1-\sigma}[V_{I_k}(S,I) - V_U(S,I,\mu)] + V_U(S,I,\mu-\Delta_{\mu}) - V_U(S,I,\mu) \right),
\end{multline*}
where $\bar{p}^{-\mu} = p^{-\mu}/\Delta_t$. Using \eqref{probSf6}, this problem is equivalent to
\begin{equation*}
\max_{a_U \in [\ubar{a},1]} u(a_U) + ca_U{\left((1-\sigma) \tilde{a}(S,I) + \sigma a_{I_k}^* \right)},
\end{equation*}
where
\begin{equation}
c = \frac{\mu\beta}{r\Delta_{\mu}}{\left(\Delta_{\mu}\sigma[V_{I_k}(S,I) - V_U(S,I,\mu)] + (1 - \sigma)[V_U(S,I,\mu-\Delta_{\mu}) - V_U(S,I,\mu)] \right)} I.
\label{2ndMAX}
\end{equation} 
When the utility function takes the form \eqref{eq:CRRA}, the first-order condition rearranges to $a_U = a_U^\mathrm{FOC} \coloneqq [- b]^{-1/\alpha}$ for $b = c{\left((1-\sigma) \tilde{a}(S,I) + \sigma a_{I_k}^* \right)}$ and the optimal choice of $a_U$ is therefore
\begin{equation}
a_U = 1_{c \geq 0} + (1-1_{c \geq 0})\max\set{\ubar{a}, \min\set{1, a_U^\mathrm{FOC}}}.
\label{polCRRA}
\end{equation}
The fact that the belief $\mu$ is monotonically decreasing allows us to solve the problem of the unknown agent in a manner similar to backward induction. Specifically, given values for $V_U(S,I,\mu-\Delta_{\mu})$ for $(S,I) \in \Sigma_S \times \Sigma_I$, when the agent adheres to $a$ and other agents adhere to $\tilde{a}$, the values of $V_U(S,I,\mu)$ for $(S,I) \in \Sigma_S \times \Sigma_I$ solve
\begin{align}
0 = & ru(a_U) + \frac{\Delta_{\mu} \sigma}{1-\sigma}\bar{p}^{-\mu}V_{I_k}(S,I) + \bar{p}^{-\mu}V_U(S,I,\mu-\Delta_{\mu}) \notag \\
& + \bar{p}^{-I}V_U(S,I-\Delta_I^-,\mu) + \bar{p}^{+I}V_U(S,I+\Delta_I^+,\mu) + \bar{p}^{-S} V_U(S-\Delta_S,I+\Delta_S,\mu) \notag \\
& - {\left(r+\nu + \bar{p}^{-S} + \bar{p}^{-I} + \bar{p}^{+I} +  \bar{p}^{-\mu} + \frac{\Delta_{\mu} \sigma}{1-\sigma}\bar{p}^{-\mu}\right)}V_U(S,I,\mu). \label{linA}
\end{align}
In this way the solution to the problem of the unknown agent may be found by repeatedly solving two-dimensional dynamic programming problems, using the fact that $V_U(S,I,0) = V_{R_k} = u(1) = 0$.

We then iterate upon the policy function $\tilde{a}$. Beginning with an arbitrary guess $\tilde{a}$, we compute the value function of the unknown agent by solving the above discrete-time Bellman equation. If $a(S,I,\mu;\tilde{a})$ is the resulting policy function, we then replace $\tilde{a}$ with $\tilde{a}' = a(S,I,S/(\sigma S + 1 - \sigma); \tilde{a})$ and repeat until convergence.

\section{Calibrating $u_D$}\label{sec:uD}

We calibrate the flow utility from death $u_D$ based on the case study from Sweden, which did not introduce mandatory lockdowns. For this purpose, we first obtain the daily cumulative number of reported cases and deaths for Sweden from Johns Hopkins University CSSE (Footnote \ref{fn:CSSE}). Let $N_{C,t},N_{D,t}$ be the cases and deaths up to date $t$. We divide these numbers by the population $N=10.38\times 10^6$ to obtain the population shares of reported cases $C_t\coloneqq N_{C,t}/N$ and deaths $D_t\coloneqq N_{D,t}/N$. We compute the case fatality rate as $\delta_{\mathrm{CFR},t}\coloneqq D_t/C_t$. Figure \ref{fig:CFR_SW} shows the evolution of the case fatality rate. The fact that the CFR peaked at around 12\% in May 2020 and has settled down below 2\% at the time of writing suggests that the diagnosis rate has significantly changed over time and that the reported cases are unreliable. 

We thus estimate the fraction of infected population $I_t$ using the accounting equation from the SIR model
\begin{equation}
D_{t+1}-D_t=\gamma \delta_{\mathrm{IFR}}I_t\iff I_t=\frac{D_{t+1}-D_t}{\gamma\delta_{\mathrm{IFR}}}.\label{eq:I_estim}
\end{equation}
To control for the day-of-week effect (deaths seem to be unreported over the weekend), we take the 7-day moving average of \eqref{eq:I_estim}, which is plotted in Figure \ref{fig:I_SW.pdf}. The estimated peak prevalence is thus $\max_t I_t=0.0663$. Finally, we choose the value for $u_D$ to match the peak prevalence in the model and obtain $u_D=\uD$.

\begin{figure}[!htb]
\centering
\begin{subfigure}{0.48\linewidth}
\includegraphics[width=\linewidth]{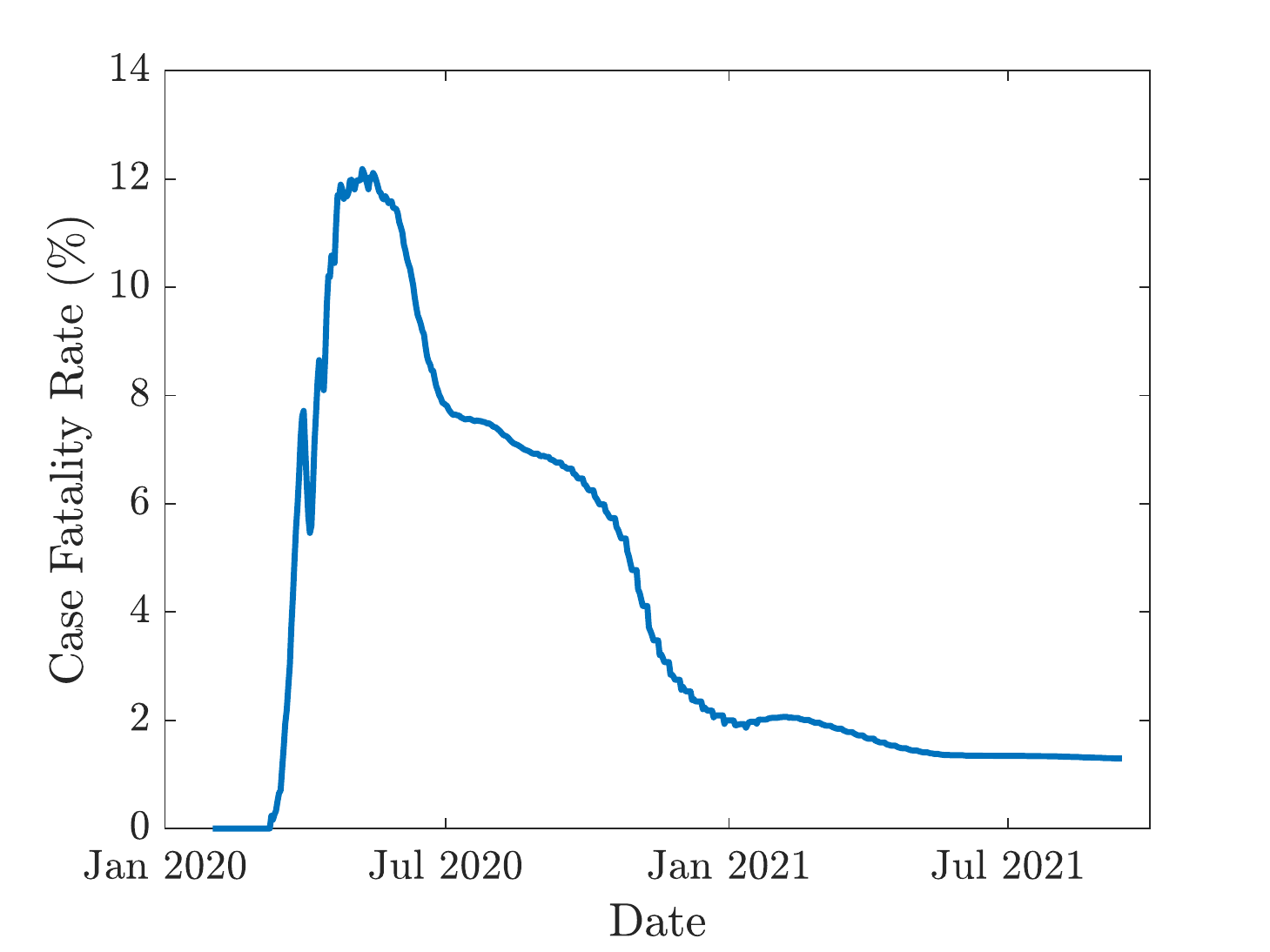}
\caption{Case Fatality Rate (\%) over time.}\label{fig:CFR_SW}
\end{subfigure}
\begin{subfigure}{0.48\linewidth}
\includegraphics[width=\linewidth]{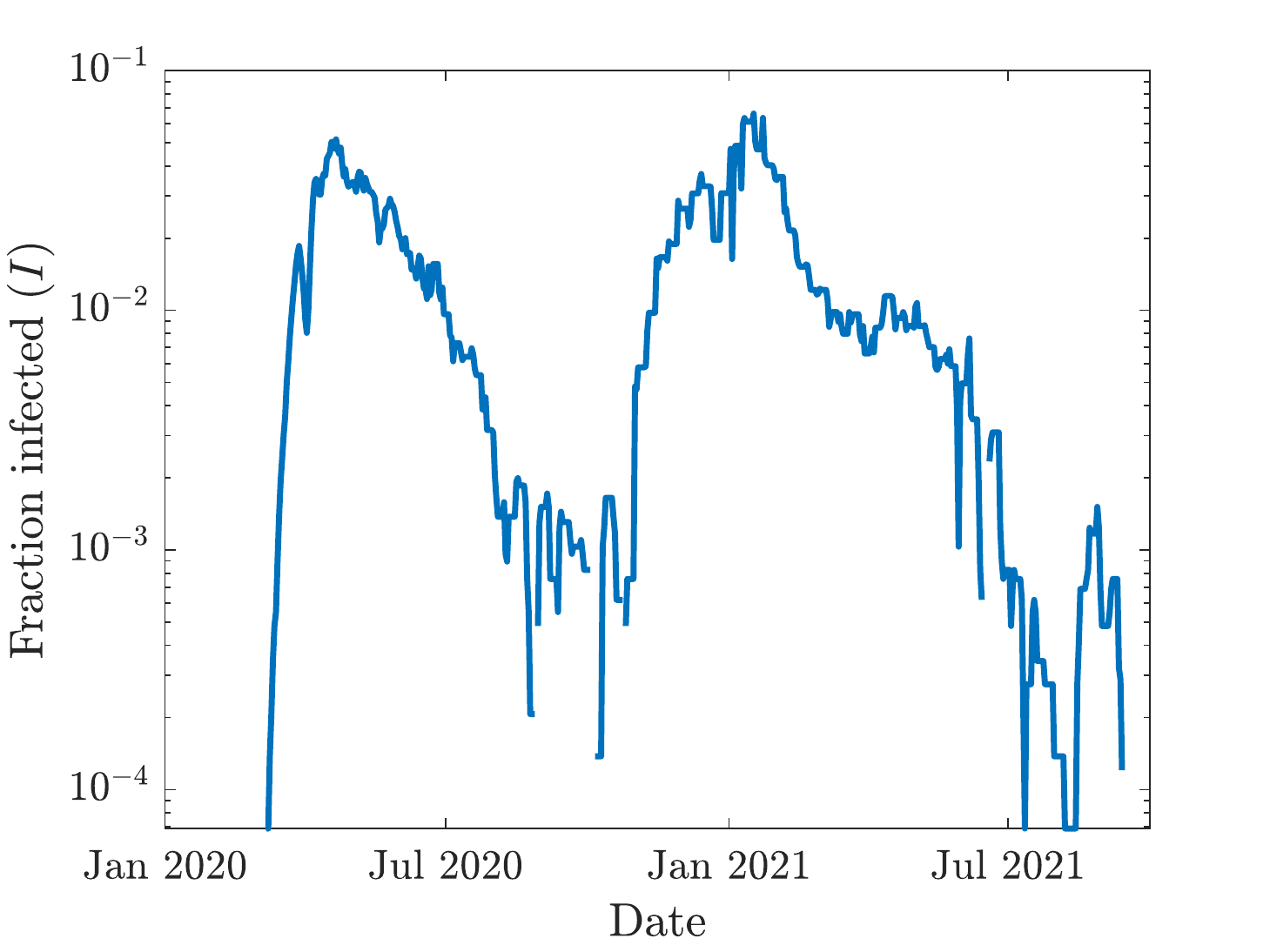}
\caption{Estimated prevalence over time.}\label{fig:I_SW.pdf}
\end{subfigure}
\caption{Case Fatality Rate and estimated prevalence in Sweden.}\label{fig:Sweden}
\end{figure}

As a robustness check, we also compute $u_D$ from the value of statistical life. Consider an individual consuming a constant flow normalized to 1 and facing a small probability $d$ of death for one period. Letting $\beta\in (0,1)$ be the agent's discount factor, $p$ be the willingness to pay to avoid the possibility of death, $V$ be the continuation value of being alive, and $V_D$ be the continuation value of being dead, by definition we have
\begin{equation}
(1-\beta)u(1-p)+\beta V=(1-\beta)u(1)+\beta((1-d)V+dV_D). \label{eq:wtp}
\end{equation}
Using $V=u(1)$ and $V_D=u_D$, \eqref{eq:wtp} simplifies to
\begin{equation}
(1-\beta)u(1-p)=(1-\beta-\beta d)u(1)+\beta du_D.\label{eq:wtp2}
\end{equation}
The \emph{value of statistical life} $v$ is defined by the
willingness to pay $p$ scaled such that one life is saved on average, so $v=p/d$. Since
\begin{equation*}
u(1-p)\approx u(1)-u'(1)p=u(1)-u'(1)vd
\end{equation*}
by the Taylor approximation, solving \eqref{eq:wtp2} for $u_D$, we obtain
\begin{equation}
u_D\approx u(1)-\frac{1-\beta}{\beta}u'(1)v.\label{eq:uD}
\end{equation}
Since $u(1)=0$ and $u'(1)=1$ for the CRRA utility \eqref{eq:CRRA}, \eqref{eq:uD} further simplifies to
\begin{equation}
u_D\approx -\frac{1-\beta}{\beta}v.\label{eq:uD_CRRA}
\end{equation}
\cite{HallJonesKlenow2020} note that The Environmental Protection Agency (EPA) recommends using $v=\text{\$7.4 million}$ in 2006,\footnote{\url{https://www.epa.gov/environmental-economics/mortality-risk-valuation}} when per capita consumption was about \$31,000. Thus the value of life in units of one period consumption is
\begin{equation*}
v=\frac{\text{7,400,000}}{\text{31,000}}=238.7,
\end{equation*}
and with 5\% annual discounting, the flow utility from death in \eqref{eq:uD_CRRA} is
\begin{equation*}
u_D=-0.05\times 238.7=-11.94,
\end{equation*}
which is very similar to the value we obtained from the case study from Sweden.

\end{document}